\newcommand{\norm}[1]{}
\renewcommand{\qed}{\ensuremath{\blacksquare}}
\spnewtheorem{construct}{Construction}{\bfseries}{\itshape}
\spnewtheorem*{thm}{Theorem}{\bfseries}{\itshape}
\spnewtheorem*{lem}{Lemma}{\bfseries}{\itshape}
\newcommand{\hil}[1]{\ensuremath{\mathcal{#1}}}
\newcommand{\pos}[1]{\ensuremath{\mathrm{Pos}(#1)}}
\newcommand{\advs}{\ensuremath{\mathcal{A}_{\texttt{s}}}}
\newcommand{\ot}{\ensuremath{\mathcal{O}_{\texttt{s}}}}
\newcommand{\s}{\ensuremath{\texttt{s}}}
\newcommand{\otqot}[1]{\ensuremath{\textsf{BQS-OT}_{#1}}}
\begin{document}
\title{Powerful Primitives in the Bounded Quantum Storage Model}

\author{Mohammed Barhoush \and Louis Salvail}
\institute{Universit\'e de Montr\'eal (DIRO), Montr\'eal, Canada\\
\email{mohammed.barhoush@umontreal.ca}\ \ \ \email{salvail@iro.umontreal.ca}}
\maketitle

\begin{abstract}
The bounded quantum storage model aims to achieve security against computationally unbounded adversaries that are restricted only with respect to their quantum memories. In this work, we provide the following contributions in this model: 
\begin{enumerate}
 \item We build one-time programs and utilize them to construct CCA1-secure symmetric key encryption and message authentication codes. These schemes require no quantum memory from honest users, yet they provide information-theoretic security against adversaries with arbitrarily large quantum memories, as long as the transmission length is suitably large.
 
 \item We introduce the notion of $k$-time \emph{program broadcast} which is a form of program encryption that allows multiple users to each learn a single evaluation of the encrypted program, while preventing any one user from learning more than $k$ evaluations of the program. We build this primitive unconditionally and employ it to construct CCA1-secure asymmetric key encryption, encryption tokens, signatures, and signature tokens. All these schemes are information-theoretically secure against adversaries with roughly $e^{\sqrt{m}}$ quantum memory where $m$ is the quantum memory required for the honest user. 
\end{enumerate}
All of the constructions additionally satisfy disappearing security, essentially preventing an adversary from storing and using a transmission later on. 
\end{abstract}

    \setcounter{secnumdepth}{3}
    \setcounter{tocdepth}{3}
    \newpage
    \tableofcontents 
    \newpage
    
\section{Introduction}
Most of the interesting cryptographic concepts of security are unattainable when dealing with completely unrestricted adversaries. The conventional approach to resolve this conundrum is to focus solely on adversaries operating within polynomial time. Nevertheless, given our current understanding of complexity theory, security in this setting can only be guaranteed under the assumed hardness of solving
certain problems, such as factoring, computing the discrete log, or learning with errors. As a consequence, security in the \textit{computational model} is usually conditional.

An alternative approach is to constrain the quantum memory (qmemory) available to the adversary. This model is called the \textit{Bounded Quantum Storage Model} (BQSM) and was first introduced in 2005 by Damg{\aa}rd, Fehr, Salvail, and Schaffner \cite{dfss05,schaffner-these}. They demonstrated that non-interactive oblivious transfer and bit-commitment can be achieved information-theoretically in this model! Surprisingly, these schemes demand no qmemory from honest participants and can be made secure against adversaries with arbitrarily large qmemories by sufficiently extending the transmission length. Subsequently, this oblivious transfer scheme was later adapted to the more useful variant of non-interactive $1\textsf{-}2$ oblivious transfer\footnote{\textit{$1\text{-}2$ oblivious transfer} is a primitive allowing a sender to transmit two strings so that $(1)$ a receiver can choose to receive anyone of the two strings without learning anything about the other one and $(2)$ the sender is oblivious of the receiver's choice.} \cite{dfrss07}. Henceforth, we denote this $1\textsf{-}2$ oblivious transfer scheme as \emph{BQS-OT}. 

Concretely, in the BQSM, an adversary $\mathcal{A}_{\texttt{s}}$ has access to unlimited resources at all times except at certain points. At these points, we say \emph{the memory bound applies}, and the adversary is forced to reduce its stored state to $\texttt{s}$-qubits. We emphasize that the adversary is again unrestricted with respect to its qmemory after the bound applies and is never restricted with respect to its computational power or classical memory.

In practice, the point when the memory bound applies can be implemented by pausing and delaying further transmissions which enforces the bound given the technological difficulties of maintaining a quantum state. Even in the foreseeable future, qmemory is not only expected to be very expensive but also very limited in size to allow for good enough fidelity. Consequently, this model appears to provide an apt characterization of real-world adversaries.

\subsection{Our Results} 

In this paper, we delve deeper into the BQSM. We first adapt conventional definitions, such as those pertaining to encryption, authentication, and one-time programs, to the BQSM. We then proceed to construct a variety of cryptographic primitives as elaborated on below.

We first build information-theoretic secure one-time programs \footnote{A one-time program (as introduced in \cite{gkr08}) $(1)$ can be used to obtain a single evaluation of the program on any input chosen by the user, and at the same time, $(2)$ (black-box obfuscation) cannot be used to obtain any other information about the program (see Definition \ref{def:BQS one-time}).} through utilizing the BQS-OT scheme. We then leverage one-time programs to construct information-theoretic secure CCA1-symmetric encryption and message-authentication codes. All these schemes are secure against any computationally unbounded adversary with $\s$ qmemory where $\s$ can be any fixed polynomial (in the security parameter) whereas honest users do not need any qmemory. 

Next, we address the challenges associated with information-theoretic asymmetric key cryptography, particularly the task of hiding the secret key from the public key without relying on computational assumptions. We propose the novel primitive termed program broadcast as a natural solution. 

Roughly speaking, a $k$-time program broadcast of a function $P$ allows arbitrarily many users to each $(1)$ obtain a single evaluation of the program, and, at the same time, $(2)$ cannot be used by any user to learn more than $k$ evaluations of the program. We proceed to construct an information-theoretically secure program broadcast and employ it to build CCA1-secure asymmetric key encryption, signatures, signature tokens, and encryption tokens\footnote{A signature token, as first defined in \cite{BSS16}, can be used to sign a single message (without the signing key) and then self-destructs. We similarly define decryption tokens -- each such token allows its holder to decrypt a single ciphertext and then self-destructs. }. 

The schemes built from program broadcast are secure against any computationally unbounded adversary with $\s$ qmemory where $\s$ can be any fixed polynomial in the security parameter but require $\lg^k{(\lambda)}$ qmemory for the receiver where $k\in \mathbb{R}$ can take any value larger than $2$. This implies that the gap between the qmemory required of the honest user and the needed qmemory to break security approaches $\lg^2{(\lambda)} \ \textsf{vs.}\ \poly[\lambda]$ which translates to a gap of $m \ \textsf{vs.} \ e^{\sqrt{m}}$ by setting $m= \lg^2{(\lambda)}$. While we cannot assert that the required qmemory for the honest receiver in the asymmetric setting is optimal, we show that it is not possible to achieve asymmetric key cryptography without requiring any qmemory for the honest user. Given that our qmemory requirements are already quite minimal, there is little room for improvement. For simplicity, in the rest of the paper we work with $k=3$, however, our results easily apply to any value $k>2$.

Prior to this work, none of the aforementioned primitives have been achieved in the BQSM and all are impossible information-theoretically in the plain model. 

Our constructions additionally satisfy disappearing security which shall be formally defined in Sec.~\ref{sec:dis and uncl} but we provide a brief overview here. A transmission, say a ciphertext or signature, is deemed disappearing if it can no longer be used after a certain point \cite{GZ21}. In the encryption setting, an adversary cannot decrypt a ciphertext even if the private key is revealed afterward. While in the authentication setting, an adversary cannot forge a signature of a message $\mu$ even if it has received a signature of $\mu$ earlier! Such disappearing properties are impossible in the plain model for obvious reasons, but in the BQSM an adversary cannot necessarily store a ciphertext or signature for later use.

\paragraph{Notes on Feasibility.}

{The qmemory requirements for the receiver in our program broadcast, asymmetric key encryption, and signature schemes make these constructions difficult to actualize with current technology given the difficulty of storing quantum states. It is worth noting, however, that the users only need a small amount of qmemory and only need it at the start in order to process the public keys. Therefore, we hope that these schemes will become feasible with further advancements.}

{Meanwhile, our one-time programs, symmetric key encryption, and message authentication codes can be realized with current technology as they only require the ability to prepare, send, and measure a single qubit in the BB84 basis. Specifically, these constructions can be implemented on hardware designed to run quantum key distribution (QKD), albeit with a caveat. Indeed, popular \textit{weak coherent pulse sources} require interaction as the receiver needs to tell the sender which pulses were received. However, other technologies can enable non-interactive quantum transmissions to allow for non-interactive primitives such as one-time programs, encryption, and signature schemes. For instance, \textit{heralded photon sources} can, in principle, remove the need for interaction.

{Note also that our constructions can be modified to allow for users with imperfect apparatus to perform non-interactive error correction without compromising the security of the scheme as we briefly discuss in Sec.~\ref{sec:Noisy Communication}. With all that being said, in this work, we focus on building the theory and leave it as an open research direction to properly prepare the theoretical constructions for real-world applications.} 

\subsection{Related Work}
We compare our results with similar contributions in alternative models. 

\paragraph{Computational Model.}

All of our contributions are impossible information-theoretically in the plain model. Even against polynomial-time adversaries, unconditionally secure encryption and authentication have remained an elusive goal. Signature tokens have only been built from heavy computational assumptions such as indistinguishability obfuscation \cite{cllz21}.

Meanwhile, one-time programs are impossible for the simple reason that any software can be stored and then reused to obtain multiple evaluations. Broadbent, Gutoski, and Stebila \cite{BGS13} extended this idea to the quantum realm and showed that quantum one-time programs are also impossible. These results do not apply to the BQSM as one-time programs are indeed feasible.

That being said, the works \cite{gkr08,GIS10} constructed one-time programs by leveraging one-time memory devices. These hardware devices essentially implement $1\textsf{-}2$ oblivious transfer -- each device holds two secrets and allows the device holder to choose and learn one and only one secret. Essentially, a circuit is encrypted and sent along with appropriate one-time memory devices. A receiver can only obtain enough keys from the devices to learn one evaluation from the encrypted circuit. Our one-time program construction involves replacing the one-time memory devices with the BQS-OT scheme, which serves the same purpose, although our proof requires novel analysis. 

\paragraph{Bounded Classical Storage Model (BCSM).}

Memory limitations were first considered in the classical setting. Specifically, in 1992, Maurer \cite{m92} introduced the \emph{bounded classical storage model} (BCSM) where adversaries are restricted only with respect to the size of their classical memory. A cipher was shown to guarantee privacy unconditionally, even against adversaries having access to a larger memory than what is needed for honest players. Subsequent works \cite{GZ19,DQW21,R18,DQW22} have achieved information-theoretic (interactive) oblivious transfer, key agreement, and symmetric key encryption and signatures in this model. Furthermore, Guan, Wichs, and Zhandry \cite{GWZ22} recently provided constructions for disappearing ciphertext and signature schemes based on public-key cryptography and one-way functions, respectively. 

Overall, the BQSM seems to provide stronger security guarantees when compared to the BCSM for several reasons. First of all, in the BCSM, the gap between the required memory to run many of these schemes securely and the size of the memory needed to break for various primitives, including digital signatures, is typically on the order of $m \ \text{vs.} \ m^2$, which is optimal \cite{DQW22}. This memory gap is a notable vulnerability as if it is feasible for honest users to store $m$ bits, it does not seem too difficult for a determined adversary to store $m^2$ bits. In contrast, the memory gap of $m \ \text{vs.} \ e^{\sqrt{m}}$ for primitives in the BQSM is far more significant. Moreover, certain primitives achievable in the BQSM are completely unattainable in the BCSM. For instance, non-interactive oblivious transfer was shown to be impossible in the BCSM \cite{DQW21} which implies the impossibility of the stronger one-time program primitive. Lastly, the BCSM does not provide the same unclonability guarantees given its classical nature, making encryption and signature tokens more challenging.

\paragraph{Noisy Quantum Storage Model.}

In the \textit{Noisy Quantum Storage Model}, which was first introduced in \cite{wst08}, the qmemory of the adversary is not limited in size but is intrinsically noisy. The BQS-OT schemes proposed in \cite{dfss05,dfrss07} can, essentially as such, be shown secure against adversaries with noisy qmemories. In this work, we only study the BQSM and an interesting research direction is to generalize our results to the Noisy Quantum Storage Model. 

\section{Technical Overview}

We now discuss each of our results in more detail. We first describe how to construct one-time programs and then, how to use them to build symmetric key encryption. Next, we explain why we need a new tool to tackle asymmetric cryptography. Correspondingly, we introduce the notion of program broadcast and describe how to build it in the BQSM. We then show how program broadcast can be used to build asymmetric encryption, signatures, signature tokens, and encryption tokens. Finally, for more completeness, we provide two impossibility results related to the qmemory requirements and disappearing properties of the schemes. 

\paragraph{One-Time Programs.}
\label{sec:one-time programs intro}
We now present a quantum algorithm $\mathcal{O}$ that converts any polynomial classical circuit to a one-time program in the BQSM (Theorem \ref{main poly}). Fortunately, the BQSM dodges all the impossibility proofs of one-time programs. The work showing the impossibility of quantum one-time programs \cite{BGS13} does not apply to our setting since our programs cannot be stored by the adversary to be reused. Moreover, the proof of the impossibility of quantum obfuscation \cite{BGS13} does not apply since it requires the adversary to run the quantum circuit homomorphically. Adversaries in our setting are forced to continuously perform measurements on quantum states (due to the memory bound) which interrupts a quantum homomorphic evaluation. 

Before describing the construction, we discuss its intuition. A natural first attempt is to apply previous one-time program constructions \cite{gkr08,GIS10} based on one-time memory devices but replace the devices with BQS-OT transmissions. The problem is that simulating an adversary requires determining which 1-out-of-2 strings are received from the oblivious transfer transmissions. This information allows the simulator to deduce which evaluation the adversary has learned from the program. In earlier constructions, this is not a problem since the adversary must \emph{explicitly} ask which string it wishes to receive from the one-time memory devices and the same assumption is made in Kilian's work \cite{k88} as well. So the simulator can simply run the adversary and `know' which strings are requested. However, in our case, it is not clear this extraction is always possible -- a complex adversary might not know itself which 1-out-of-2 strings it has received! To make matters worse, our adversary is a quantum and computationally unbounded algorithm. Note that the proof of security for BQS-OT does not provide a method to extract the receiver's chosen bit but rather just shows that such a bit must exist.

To solve this issue, we construct a simulator that (inefficiently) analyzes the circuit of the adversary gate-by-gate to extract the chosen bit. In particular, the simulator takes the circuit of the adversary and runs it an exponential number of times on all possible BQS-OT transmissions. This allows the simulator to approximately determine the distribution of the measurement results obtained by the adversary given the transmission provided to the adversary. This then allows the simulator to determine the distribution of the BQS-OT transmission from the perspective of the adversary given the measurement results obtained. In other words, the simulator can infer which part of the transmission the adversary is uncertain about which can be used to determine which string the adversary is oblivious to. This technical extraction result is given in Lemma \ref{determin C} and may be of independent interest. 

The rest of the construction involves adapting Kilian's approach \cite{k88} for building one-time programs from oblivious transfer to the quantum realm. This part is not a main contribution of this work, so is not discussed further here. Readers are referred to Sec.~\ref{sec:one-time short} for the formal statement of the result. However, the full construction is given in App.~\ref{sec:one-time construction} as it is quite long and detailed. 

\paragraph{Symmetric Key Encryption.}

We use one-time programs to build a symmetric key encryption scheme that satisfies indistinguishability under (lunchtime) chosen ciphertext attack (IND-CCA1) \cite{CS98} and with disappearing ciphertexts. IND-CCA1 tests security against an adversary that can query the encryption oracle at any point in the experiment and can query the decryption oracle at any point before the challenge ciphertext is received. To satisfy \textit{disappearing} security under CCA1, we require that such an adversary cannot win the IND-CCA1 experiment even if the private key is revealed after the challenge ciphertext is given.

To achieve IND-CPA (where the adversary is not given access to a decryption oracle), the ciphertext can simply consist of a one-time program that evaluates to $\perp $ on every input except on the private key, where it outputs the message. By the security of one-time programs, an adversary has a negligible chance of guessing the key and learning the message. Upgrading to CCA1-security turns out to be far more involved as we now discuss.

First of all, it is trivial to show that we must rely on quantum ciphertexts for unconditional security in the BQSM. Now, traditionally, a CPA-secure scheme can be upgraded to a CCA-secure one by simply authenticating ciphertexts using message-authentication codes or a signature scheme. However, it seems difficult to authenticate a quantum ciphertext in a way that allows verification without any qmemory. Alternatively, we could authenticate the output of the one-time programs in the CPA construction. Specifically, instead of sending programs that map the secret key to the message, we send programs that map the key to the message along with a tag or signature of the message. 

This solution fails due to the following attack: an adversary queries the encryption oracle on an arbitrary message $\mu$, receives a one-time program, modifies the one-time program so that it outputs $\perp$ on any input starting with 0, and forwards the modified one-time program to the decryption oracle. Depending on the response received from the oracle, the adversary can successfully determine the first bit of the secret key. This attack can be repeated a polynomial number of times to deduce the entire secret key. It turns out an adversary can perform a variety of attacks of this sort due to a fundamental problem with the BQSM: this model provides security by ensuring that users only receive partial information from a transmission. Hence, it is difficult to detect whether an adversary has tampered with a transmission. 

To thwart such attacks, we rely on a construction where a ciphertext consists of two interdependent one-time programs. The first program initiates some randomness and this randomness is used to ensure that the second one-time program is evaluated on a seemingly random input during decryption. To prevent the adversary from choosing the first one-time program, we authenticate the output of the first program in the second program using the secret key. We show that it is difficult to modify both programs simultaneously in a meaningful way without being detected. Proving this rigorously is somewhat technical as the adversary can still perform various modifications successfully -- the reader is referred to Theorem \ref{construction 1} for the formal proof. 

\paragraph{The Obstacle to Asymmetric Cryptography.}

Asymmetric cryptography utilizes a pair of related keys, a secret and a public key, to enable versatile cryptographic applications. Specifically, our goal is to build information-theoretic secure asymmetric key encryption and digital signatures in the BQSM. However, the task of concealing the secret key from the public key without relying on computational assumptions poses a significant challenge. Such assumptions should be avoided in the BQSM given the goal of this model is to base security purely on the memory limitations of the players. 

This implies we need to impose the memory bound in some way during the public key transmission in order to hide the secret key. In the classical bounded storage model, this can be done by announcing a large classical public key \cite{DQW22}. However, in our scenario, imposing the memory bound necessitates the use of quantum public keys. Unfortunately, due to no-cloning, we need to create and distribute multiple copies of the quantum key -- this is the general approach taken in the computational setting as well \cite{GC01,D21,KKN05}. Here we face a critical problem: if a computationally unbounded adversary gains access to multiple copies of the key, it can repeatedly reuse its quantum memory to process multiple keys. Such an adversary could gradually extract classical information from each key copy until it has learned the entire quantum public key.

To prevent this attack, it is imperative that every public key copy needs \emph{additional} qmemory to process, preventing an adversary from processing an unlimited number of keys. In other words, we need to distribute keys in a way so that all users must process the keys simultaneously or in parallel. More abstractly, we want to distribute quantum information in a way that allows every user to learn some information while preventing a computationally unbounded adversary from learning too much information. This goal is a recurring theme in various settings, hence, we introduce a new notion which we term \emph{program broadcast} that formalizes and captures this objective. We build this primitive unconditionally in the BQSM and employ it to build asymmetric cryptography.

All in all, this is the first work to achieve information-theoretic asymmetric key encryption in any of the bounded (classical or quantum) storage models. This is the main focus and contribution of this paper. In the following, we first discuss program broadcast, which may be of independent interest, and then how this can be used to realize asymmetric cryptography and even encryption or signature tokens. 

\paragraph{Program Broadcast.}

In many situations, we would like to send multiple one-time programs to multiple users while ensuring that no adversary can take advantage of all the information to learn the program. This is the goal of {program broadcast}. Recall that a $k$-time program broadcast of a function $P$ allows an unbounded polynomial number of users to each $(1)$ obtain at least a single evaluation of the program, and, at the same time, $(2)$ cannot be used by any user to learn more than $k$ evaluations of the program. Essentially, an adversary with access to the broadcast can be simulated with access to $k$ queries to an oracle for $P$. 

In Sec.~\ref{sec:broadcast}, we present a scheme for an information-theoretically secure program broadcast in the BQSM. The idea is to distribute a polynomial number of augmented one-time programs of $P$ in a fixed time period. Unlike the one-time programs discussed in the previous section, these augmented versions require a small amount of qmemory to process. 

More rigorously, there is a set time where the sender distributes one-time programs of the form $\mathcal{O}(P\oplus c_i)$ where $\mathcal{O}$ is our one-time compiler. In each one-time program, the output of $P$ is padded with a different value $c_i$. A quantum ciphertext is generated which encrypts the value $c_i$ and is sent with the corresponding one-time program. Users can evaluate the one-time program but need to store the ciphertext states until the encryption key is revealed to make use of their evaluation. The key is revealed at a set time after all users have received a one-time program copy. By revealing this classical key at the end, we are essentially forcing users to process the programs in a parallel fashion, limiting the number of evaluations that can be learned. In other words, an adversary with bounded qmemory can store the ciphertext states for a limited number of one-time programs and thus can only learn a limited number of evaluations of $P$. 

To prove this rigorously, we need a new min-entropy lower bound (Lemma \ref{splitting cor}) which roughly states that if $H_{\infty}(X_0X_1...X_{p-1}|Q)\geq \alpha$, where $X_i\in \{0,1\}^n$ and $Q$ is some quantum information, then there exists $i\in [p]$ and negligible $\epsilon$ such that roughly $H^\epsilon_{\infty}(X_i)\geq \alpha/p$. Konig and Renner \cite{KR11} established such a bound but only in the case when $n$ is sufficiently large with respect to $p$. Unfortunately, in our applications, $n$ is very small with respect to $p$. Informally, we resolve this issue by using their bound when $p$ is small and then use this as a base case to an inductive argument for larger values of $p$ achieving the required result. This result is of independent interest in the field of information theory. 

Note that while one-time programs can be built from one-time memory devices in the plain model, this is insufficient to build program broadcast in the plain model. Specifically, we need to use the qmemory bound outside the one-time programs as well in order to construct this primitive. Hence, program broadcast acts as a more pronounced advantage of the BQSM. 

\paragraph{Asymmetric Key Encryption.}

We now discuss how we upgrade our symmetric key encryption scheme to a CCA1-secure asymmetric key scheme using program broadcast. We let the private key be a large program $P$ while the public key is a quantum program broadcast of $P$. A user can use the broadcast to learn a single evaluation of $P$ which can be used to encrypt messages in the same way as in the private key setting. However, an adversary can only learn a limited number of evaluations by the security of the program broadcast. If $P$ is large enough then this knowledge is not sufficient to predict the output of $P$ on a random input. In other words, the adversary cannot predict the encryption key of another user, so security is reduced to the symmetric key setting. 

It is not difficult to show that the public keys need to be mixed-state for information-theoretic security in the BQSM. This might seem unfortunate, but we believe that it is not very relevant whether keys are pure or mixed-state. Some works, such as \cite{BMW23}, require that the quantum public keys be pure since this would provide some level of certification by allowing users to compare keys with a SWAP test. However, this test is not always feasible to perform in the BQSM given that keys cannot necessarily be stored. Instead, we provide a more secure way to certify mixed-state quantum keys without establishing authenticated quantum channels in a companion work \cite{BS232}. That being said, in this work, we do not certify public keys and it is always assumed that honest users receive authentic quantum public keys. This is a common assumption for the quantum public key schemes, such as those in \cite{GC01,D21,KKN05}, as a quantum state cannot be signed \cite{bcgst02}. 

\paragraph{Encryption Tokens.}

Next, we introduce a novel primitive, which we term encryption and decryption tokens. A decryption token can be used to decrypt a single ciphertext, while an encryption token can be used to encrypt a single message. We present an information-theoretically secure token scheme with an unbounded polynomial number of encryption and decryption tokens in the BQSM as a direct application of program broadcast. Fundamentally, the underlying encryption scheme is just a $q$-time pad which is a generalization of the one-time pad but that allows for the encryption of $q$ messages. This gives a simple solution where the encryption and decryption tokens are one-time programs of the encryption and decryption functions. This is not very satisfactory as it only allows for the distribution of $q$ tokens.

To allow for an unbounded polynomial number of tokens, we broadcast the encryption and decryption functionalities of the $q$-time pad. However, our program broadcast only lasts for a fixed duration and, unlike our public keys in the asymmetric key encryption scheme, users in this setting may request multiple encryption or decryption tokens at varying times. Hence multiple `waves' of token broadcasts are sent. In each wave, the adversary can process a limited number of tokens by the security of the program broadcast. A new key for the $q$-time pad is initiated for each broadcast and hence the private key for this construction grows linearly with time. We say the tokens have a \emph{sparsity} restraint since a limited number can be processed during the broadcast. In fact, this restraint can be viewed as an advantage in some scenarios; similar delay restraints were studied in \cite{AGK20} but in the authentication setting. We also introduce order restraints, allowing an authority to control the order in which users perform encryption or decryption operations. This is easily incorporated into the encryption and decryption functionalities. See App.~\ref{sec:enc tok} for more details. 

\paragraph{Signatures and Signature Tokens.}

We then turn to authentication and construct a signature scheme for classical messages that is information-theoretically unforgeable under chosen message and verification attacks in the BQSM. We follow this with the construction of signature and verification tokens and imbue these tokens with sparsity and order restraints. Note that the signature and token schemes differ from traditional constructions in the standard model where the verification key is classically announced. In the BQSM, we cannot ensure security if the verification key is classical so we must distribute quantum copies. All these constructions are very similar to the corresponding encryption constructions and hence are not discussed further. The reader is referred to Sec.~\ref{sec:authentication} for more details. 


\paragraph{Impossibility Results.}

We present two interesting impossibility results in this work. First, in Sec.~\ref{sec:discussion}, we show that it is impossible to implement asymmetric schemes (with unbounded public key copies) in the BQSM without requiring qmemory from the honest user. The fundamental idea is that if the public keys require no qmemory to process then an adversary can request and process an unbounded number of key copies which would reveal information about the secret key. The technical argument is more involved since the public keys may be mixed-state which could aid in hiding information.

The second impossibility result is concerned with the disappearing property of our constructions. The ciphertexts, signatures, and tokens in our schemes satisfy disappearing security providing interesting applications as we discussed earlier. However, this disappearing property can also be disadvantageous in some scenarios. For instance, sometimes, it is more beneficial to have an obfuscation that can be stored and reused multiple times instead of a disappearing one-time program. The final contribution of this paper is to provide a negative result showing that non-disappearing obfuscation and one-time programs are impossible in the BQSM (see Sec.~\ref{sec:impossible}). 

The proof relies on a new approach since our simulator is computationally unbounded, which nullifies adversarial attacks used in standard impossibility proofs \cite{BGI+01,GK05,BCC14,BGS13}. Essentially, we show that if a computationally unbounded adversary can perform a single evaluation after the qmemory bound applies then it can rewind the program and perform another evaluation. By the gentle measurement lemma \cite{W99}, this rewinding introduces only a negligible error each time which allows the adversary to learn a super-polynomial number of evaluations with non-negligible probability. These evaluations can be used to learn more information regarding the hidden program than any simulator can learn using only a polynomial number of oracle queries. 

\section{Preliminaries}
\label{sec:Preliminaries}

\subsection{Notation}

In the following, we denote Hilbert spaces by calligraphic letters: $\hil{H}$, $\hil{V}$, etc...
The set of positive semi-definite operators in Hilbert space $\hil{H}$ are denoted $\pos{\hil{H}}$. When Hilbert space $\hil{X}$ holds a classical value, we denote the random variable for this value by $X$.

We denote the density matrix of a quantum state in a register $E$ as $\rho_E$. If a quantum state depends on a classical variable $X$, then we denote $\rho^x_E$ as the density matrix of the state given $X=x$. Meanwhile, to an observer who does not know the value of $X$, the state is given by $\rho_E\coloneqq \sum_{x}P(X=x)\rho_E^x$ and the joint state by $\rho_{XE}\coloneqq \sum_{x}P(X=x) \proj{x}\otimes \rho_E^x$. Such states having a quantum part depending on a classical value are called \emph{cq-states}. The random variable $X$
then corresponds to the classical state $\rho_X\coloneqq \sum_x P(X=x)\proj{x}$. 

We denote the trace distance between two density matrices as $\delta(\rho,\sigma)\coloneqq \frac{1}{2}\trace{\lvert \rho-\sigma\rvert}$ and the trace norm as $\|\rho\|_1\coloneqq \trace{\sqrt{\rho \rho^{*}}}$. 

Notice that $\rho_{XE}=\rho_X\otimes \rho_E$ if and only if $\rho_E $ is independent of $X$ (i.e. $\rho_E^x=\rho_E$) which means that no information can be learned regarding $\rho_E$ from $X$ and vice versa. More generally, if $\rho_{XE}$ is $\epsilon$-close to $\rho_E\otimes \rho_X$ in terms of trace distance denoted as $\delta(\rho_{XE},\rho_X\otimes \rho_E)\leq \epsilon$ or equivalently $\rho_{XE}\approx_{\epsilon}\rho_{X}\otimes \rho_E$, then no observer can distinguish between the two systems with advantage greater than $\epsilon$. We write $\langle \rho_A,\rho_B\rangle$ to denote a sequential transmission where register $A$ is sent and then register $B$.

The rectilinear or $+$ basis for the complex space $\mathbb{C}^2$ is the pair $\{|0\rangle,|1\rangle\}$ while the diagonal or $\times$ basis is the pair $\{|0\rangle_{\times},|1\rangle_{\times}\}$ where $|0\rangle_{\times}\coloneqq \frac{|0\rangle +|1\rangle}{\sqrt{2}}$ and $|1\rangle_{\times}\coloneqq \frac{|0\rangle -|1\rangle}{\sqrt{2}}$. To choose between the $+$ and $\times$ basis depending on a bit $b\in\{0,1\}$, we write $\{+,\times\}_b$. 

We say $x\in_R X$ if $x$ is chosen uniformly at random from the values in $X$. We let $\mathds{1}_k$ denote the density matrix of a uniformly distributed variable on $k$ elements.
Also, let $[n]\coloneqq [0,1,\ldots,n-1]$ and let $\negl[n]$ be denoting any function that is smaller than the inverse of any polynomial for large enough $n$. 

We say an algorithm $A$ is \emph{QPT} if it is a quantum polynomial-time algorithm. We let $A^P$ denote an algorithm with access to a polynomial number of classical oracle queries to the program $P$ and let $A^{qP}$ denote access to $q$ classical oracle queries.  

Recall a Pauli pad \cite{MTW00} information-theoretically hides a $n$-qubit state using a $2n$-bit key $k$. We denote $\textsf{PP}_k(\rho)$ to be the Pauli pad of the state with density matrix $\rho$ using key $k$.

By abuse of notation, if $x$ is a binary string and $M$ is a matrix then, for simplification, we let $M\cdot x$ denote the string representation of the vector $M\cdot \vec{x}$. 

\subsection{Rényi Entropy}

We remind the reader of the notion of quantum Rényi Entropy and its variants. See \cite{KRS09} for more detailed definitions.  

\begin{definition}[Conditional Min-Entropy]
\label{def:cond entropy}
Let $\rho_{XB}\in \pos{\hil{H}_X\otimes \hil{H}_B}$ be classical on $\hil{H}_X$. The min-entropy of $\rho_{XB}$ given $\hil{H}_B$ is defined as
\begin{align*}
  H_{\infty}(X|B)_{\rho} \coloneqq - \lg{\left(p_{\textnormal{guess}}(X|B)_\rho\right)}
\end{align*}
where $p_{\textnormal{guess}}(X|B)_\rho$ is the maximal probability to decode $X$ from $B$ with a POVM on $\hil{H}_B$. 
\end{definition}

\begin{definition}[Smooth Min-Entropy]
Let $\epsilon \geq 0$ and $\rho_{XB}\in \pos{\hil{H}_X\otimes \hil{H}_B}$. The $\epsilon$-smooth min-entropy of $\rho_{XB}$ given $\hil{H}_B$ is defined as 
\begin{align*}
  H_{\infty}^{\epsilon}(X|B)_{\rho}\coloneqq \sup_{\overline{\rho}}H_{\infty}(X|B)_{\overline{\rho}}
\end{align*}
where the supremum is taken over all density operators $\overline{\rho}_{XB}$ acting on $\hil{H}_X\otimes \hil{H}_B$ such that $\delta(\overline{\rho}_{XB},\rho_{XB})\leq \epsilon$. 
\end{definition}

\subsection{Uncertainty Relations}
\label{sec:uncertainity}

This section discusses lower bounds on the average min-entropy and presents a new generalized min-entropy splitting lemma conditioned on quantum states. This will be fundamental in our program broadcast construction. 

The min-entropy satisfies the following chain rule.

\begin{lemma}[Chain Rule for Min-Entropy \cite{RK04}]
\label{chain}
Let $\epsilon\geq 0$ and $\rho_{XUE}\in \pos{\hil{H}_X\otimes \hil{H}_U\otimes \hil{H}_E}$ where register $E$ has size $m$. Then,
\begin{align*}
  H_{\infty}^{\epsilon}({X}|UE)_{\rho} \geq H^{\epsilon}_{\infty}(XE|U)_{\rho}-m .
  \end{align*}
\end{lemma}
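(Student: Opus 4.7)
The plan is to use the operator characterization of min-entropy—namely, $H_{\infty}(A|B)_{\sigma} \geq \lambda$ iff there exists a density operator $\tau_B$ with $\sigma_{AB} \leq 2^{-\lambda}\,\mathds{1}_A \otimes \tau_B$—and then exploit the trivial bound $\mathds{1}_E \leq 2^m\, \tau_E$ to move $E$ from the ``entropy side'' to the ``conditioning side'' at a multiplicative cost of $2^m$ in the operator inequality. Concretely, setting $\lambda := H^{\epsilon}_{\infty}(XE|U)_{\rho}$, I would first unfold the smooth-entropy supremum on the right-hand side to obtain a state $\overline{\rho}_{XUE}$ with $\delta(\overline{\rho},\rho)\leq \epsilon$ and $H_{\infty}(XE|U)_{\overline{\rho}} \geq \lambda$ (any slack in the supremum can be sent to zero at the end). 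The operator characterization applied to the composite register $XE$ then furnishes a density operator $\sigma_U$ on $\hil{H}_U$ such that
\[
  \overline{\rho}_{XEU} \;\leq\; 2^{-\lambda}\, \mathds{1}_X \otimes \mathds{1}_E \otimes \sigma_U .
\]

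Next, I would absorb $\mathds{1}_E$ into the conditioning register. Since $E$ has dimension $2^m$, write $\mathds{1}_E = 2^m\, \tau_E$ where $\tau_E := 2^{-m}\mathds{1}_E$ is the maximally mixed state on $E$, and define $\sigma'_{UE} := \sigma_U \otimes \tau_E$, which is a bona fide density operator on $\hil{H}_U \otimes \hil{H}_E$. Substituting into the inequality above gives
\[
  \overline{\rho}_{XUE} \;\leq\; 2^{-(\lambda - m)}\, \mathds{1}_X \otimes \sigma'_{UE},
\]
which, by the operator characterization read in reverse, is precisely the statement $H_{\infty}(X|UE)_{\overline{\rho}} \geq \lambda - m$. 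Because $\overline{\rho}$ is $\epsilon$-close to $\rho$ in trace distance, this same state is admissible in the supremum defining the smoothed quantity on the left, and hence $H^{\epsilon}_{\infty}(X|UE)_{\rho} \geq \lambda - m$, which is the claim.

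The main obstacle is formal rather than computational: one must invoke the fully quantum form of conditional min-entropy on the right-hand side (rather than the classical-$X$ restricted version of Definition \ref{def:cond entropy}) so that the operator characterization legitimately applies to $XE$ as a single register even when $E$ is quantum. Once that subtlety is granted, the chain rule reduces to the one-line observation $\mathds{1}_E \leq 2^m \tau_E$, which encodes exactly the intuition that placing an additional $m$-qubit system on the conditioning side can reduce the min-entropy by at most $m$ bits.
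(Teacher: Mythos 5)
The paper does not prove this lemma; it is imported verbatim from \cite{RK04} with only a citation, so there is no in-paper argument to compare against. Your proof is the standard one from the literature and is correct: the operator characterization $\overline{\rho}_{XEU} \leq 2^{-\lambda}\,\mathds{1}_{XE}\otimes\sigma_U$, the substitution $\mathds{1}_E = 2^m\tau_E$ with $\tau_E$ maximally mixed, and the observation that the same smoothing state $\overline{\rho}$ is admissible on both sides (since both smoothed quantities range over states on the same total space $\hil{H}_X\otimes\hil{H}_U\otimes\hil{H}_E$) together give exactly the claimed bound. The one subtlety you flag yourself is the right one: the paper's Definition of conditional min-entropy is the guessing-probability form, which only matches the operator form when the conditioned-on system is classical, so for $H_{\infty}(XE|U)$ with quantum $E$ you must work with the general operator definition of \cite{RK04} (as you do); with that granted, the argument is complete.
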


\begin{lemma}[Uncertainty Relation \cite{dfss05}]
\label{uncertinity relation}
Let $\rho \in \mathcal{P}(\textsf{H}_2^{\otimes n})$ be an arbitrary $n$-qubit quantum state. Let $\Theta \in_R \{+,\times \}^n$ and let $X$ be the outcome when $\rho$ is measured in basis $\Theta$. Then for any $0< \gamma <\frac{1}{2}$,

\begin{align*}
  H^{\epsilon}_{\infty}(X|\Theta)_{\rho}\geq \left(\frac{1}{2}-2\gamma \right)n ,
\end{align*}
where $\epsilon=\exp(-\frac{\gamma^2n}{32(2-\lg(\gamma))^2})$.
\end{lemma}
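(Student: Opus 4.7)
The plan is to reduce the claim to a per-qubit uncertainty statement combined with a concentration argument that lets us smooth away a low-probability set of "atypical" outcomes. The overall structure I have in mind follows the standard strategy for BB84 uncertainty relations: first show that for each individual qubit measured in a uniformly random basis the outcome is genuinely uncertain on average, then lift this per-position guarantee to a high-probability lower bound on the guessing probability of the entire string $X$ given $\Theta$.

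For the per-qubit step I would purify $\rho$ to some $\ket{\psi}_{AE}$ and fix attention on the marginal state $\rho_i$ of the $i$-th qubit. Writing $p_i^+(x)$ and $p_i^\times(x)$ for the outcome distributions in the two bases, the complementarity of $+$ and $\times$ (the two bases are mutually unbiased) forces the two distributions not to both be sharply peaked; concretely one has a pointwise bound of the form $\max_x p_i^+(x)\cdot \max_x p_i^\times(x) \le (1 + 1/\sqrt{2})/2$, so that on average over $\Theta_i \in_R \{+,\times\}$ the outcome of qubit $i$ carries at least a constant amount of min-entropy. After marking as \emph{bad} those positions where, for the chosen basis $\Theta_i$, the peak of the conditional distribution exceeds a threshold $\tau$ (to be tuned as a function of $\gamma$), the expected fraction of bad positions is strictly less than $1/2$ by the mutually unbiased bound.

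The key technical step is the concentration: I would apply a Chernoff/Azuma tail bound to the indicator random variables for the bad positions, showing that except with probability at most $\exp(-c\gamma^2 n / (\text{something in } \log \tau))$ the number of bad positions stays below $(1/2 + \gamma) n$ and, on the complement, each ``good'' position contributes at least $-\log \tau$ to the total min-entropy of $X$ given $\Theta$. Truncating the joint state $\rho_{X\Theta}$ onto the event that this count is not exceeded yields a subnormalized state $\overline{\rho}_{X\Theta}$ within trace distance $\epsilon$ of $\rho_{X\Theta}$ whose guessing probability factorizes position-wise, giving $H_\infty(X|\Theta)_{\overline{\rho}} \ge (\tfrac{1}{2} - 2\gamma) n$ by the definition of smooth min-entropy.

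The main obstacle is to match the exact constants in the claimed $\epsilon = \exp\bigl(-\gamma^2 n / (32(2-\lg \gamma)^2)\bigr)$. The factor $(2-\lg \gamma)^2$ arises from the range of the log-probability increments in the concentration inequality: setting the per-qubit threshold $\tau$ as a function of $\gamma$ (so that good positions contribute roughly $\tfrac{1}{2}+\gamma$ bits and bad positions at most $\log(1/\tau)$ bits) forces a tradeoff inside the Azuma-type bound, and it is precisely the optimization of this tradeoff that produces the specific denominator. Verifying that the bound survives the two simultaneous smoothings (on the number of bad positions and on the conditional guessing probability within good positions) with the stated constants is the most delicate bookkeeping step; everything else reduces to the two mutually unbiased-bases inequality and a standard Chernoff estimate.
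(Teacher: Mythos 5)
The paper does not prove this lemma; it imports it from \cite{dfss05}, so the comparison is against the source's argument. Your high-level plan (per-qubit complementarity, an Azuma-type concentration, then smoothing) points in the right direction, and your observation that the $(2-\lg\gamma)^2$ comes from the range of truncated log-probability increments is exactly where that constant originates. However, the concrete mechanism you describe --- mark position $i$ as \emph{bad} when the peak of its conditional outcome distribution exceeds a threshold $\tau$, bound the bad fraction by $\frac12+\gamma$, and credit each good position with $-\lg\tau$ bits --- cannot reach the claimed rate. To total $(\frac12-2\gamma)n$ bits from $(\frac12-\gamma)n$ good positions you need $-\lg\tau\to 1$, i.e.\ $\tau\to\frac12$; but for the qubit state $\cos(\pi/8)|0\rangle+\sin(\pi/8)|1\rangle$ both basis distributions have peak $\cos^2(\pi/8)=(1+1/\sqrt2)/2\approx 0.854$, so at threshold $\tau=\frac12$ that position is bad in \emph{both} bases and the conditional bad-probability is $1$, not at most $\frac12$. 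Raising $\tau$ above $0.854$ restores the $\le\frac12$ bound but then each good position yields only $-\lg(0.854)\approx 0.23$ bits. The obstruction is structural: on the product state $(\cos(\pi/8)|0\rangle+\sin(\pi/8)|1\rangle)^{\otimes n}$ every position in every basis has peak exactly $0.854$, so any accounting that sees only peak probabilities certifies at most $\approx 0.23n$ bits, whereas the true smooth min-entropy here is $\approx h(0.854)\,n\approx 0.6n$ by the asymptotic equipartition property. The $\frac12$ rate in the lemma is a Shannon-entropy phenomenon, not a min-entropy-per-position one.

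The proof in the source accordingly runs Azuma not on indicator variables but on the real-valued increments $-\lg P(X_i=x_i\mid x_{<i},\theta)$, truncated at level roughly $\gamma$ so that each increment lies in an interval of length $2-\lg\gamma$ (your denominator), and uses that the conditional expectation of each increment --- the Shannon entropy of the conditional single-qubit state averaged over the uniformly random basis choice --- is at least $\frac12$ by the Maassen--Uffink entropic uncertainty relation $H(p^+)+H(p^\times)\ge 1$ for mutually unbiased qubit bases. Concentration then gives $-\lg P(x\mid\theta)\ge(\frac12-2\gamma)n$ except on an event of probability $\epsilon$, and excising that event from $\rho_{X\Theta}$ yields the smooth min-entropy bound. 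So the fix is to replace your threshold indicators and product-of-peaks inequality with the truncated log-likelihood martingale and the Shannon-entropic uncertainty relation; with that substitution the rest of your outline matches the known proof.
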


If $X=X_0X_1$ has high entropy, then it was shown in \cite{dfrss07} that, in a randomized sense, one of $X_0$ or $X_1$ has high entropy. This result was crucial in the construction of 1-out-of-2 oblivious transfer in the BQSM (Construction \ref{con:QOT}). 

\begin{lemma}[Conditional Min-Entropy Splitting Lemma \cite{dfrss07}]
\label{splitting first}
Let $\epsilon\geq 0$ and let $X_0,X_1$ and $Z$ be random variables. If $H_{\infty}^{\epsilon}(X_0X_1|Z)\geq \alpha$, then there exists a binary random variable $C$ such that $H_{\infty}^{\epsilon+\epsilon'}(X_{1-C}|ZC)\geq \alpha/2-1-\log(1/\epsilon')$ for any $\epsilon'>0$. 
\end{lemma}



More generally, min-entropy sampling gives lower bounds on the min-entropy of a subset of a string given the min-entropy of the entire string. In \cite{KR11}, it was shown how to sample min-entropy relative to quantum knowledge which allows us to give a lower bound on min-entropy conditioned on a quantum state.

\begin{lemma}[Quantum Conditional Min-Entropy Splitting Lemma \cite{KR11}]
\label{splitting original}
Let $\epsilon\geq 0$ and let $\rho_{X_0X_1B}$ be a quantum ccq-state where $X_0$ and $X_1$ are classical random variables on alphabet $\mathcal{X}$ of dimension $d=\lg \lvert \mathcal{X}\rvert>14$. Then, there exists a binary random variable $C$ such that for all $\tau \geq 0$, 
\begin{align*}
  H_{\infty}^{\epsilon+\tau}(X_{C}|CB)_{\rho}\geq \frac{H_{\infty}^{\tau}(X_0X_1|B)_{\rho}}{2}-2
\end{align*}
where $\epsilon \coloneqq 2^{-\frac{3d}{2}+1}$.
\end{lemma}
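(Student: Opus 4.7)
The plan is to extend the classical min-entropy splitting argument of \cite{dfrss07} to the case where the side information $B$ is quantum, following the conditional sampling methodology developed in \cite{KR11}. First, by the definition of smoothed min-entropy, I would pick a state $\bar{\rho}_{X_0X_1B}$ with $\delta(\bar{\rho},\rho) \leq \tau$ and $H_\infty(X_0X_1\mid B)_{\bar\rho} = H_\infty^{\tau}(X_0X_1\mid B)_\rho =: k$, so that the optimal POVM on $\hil{H}_B$ achieves guessing probability $p_{\mathrm{guess}}(X_0X_1\mid B)_{\bar\rho} = 2^{-k}$. All subsequent estimates are carried out on $\bar\rho$ and then transported back to $\rho$ via the triangle inequality for trace distance.

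The second step is to design the binary random variable $C$ by mimicking the classical pigeonhole of Lemma~\ref{splitting first}: intuitively, $C$ should take the value $1-b$ whenever $X_b$ is already ``easy'' to guess from $B$, so that the remaining randomness is concentrated in $X_C$. Concretely, I would define $C$ as a function of $(X_0,X_1)$ and of the outcome of a coarse-grained measurement on $B$, using a threshold analogous to $P(X_1)\geq \alpha/2$. Bounding the guessing probability of $X_C$ given $CB$ then splits into two cases indexed by the value of $C$; in each case a direct pigeonhole calculation, summing the POVM weights against the thresholded region, yields $p_{\mathrm{guess}}(X_C\mid CB)_{\bar\rho}\leq 4\cdot 2^{-k/2}$, which produces the $-2$ additive loss upon taking $-\lg$.

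The main obstacle, and the source of the correction $\epsilon = 2^{-3d/2+1}$, is that the binary threshold defining $C$ typically cannot be computed from $B$ by a non-disturbing measurement, so the cq-state on which the classical pigeonhole operates is only an approximation to the genuine quantum conditional state. To handle this I would invoke the conditional min-entropy sampling bound of \cite{KR11} applied to the natural ``pick one of two blocks'' sampling strategy on the string $X_0X_1\in\mathcal{X}\times\mathcal{X}$: the deviation between the sampled conditional state and its idealized counterpart is controlled in trace distance by a quantity scaling like $2^{-3d/2}$ when the alphabet has $\lg$-size $d$, which is exactly the stated $\epsilon$ (up to the factor of $2$). Absorbing this trace-distance error into the smoothing parameter then yields the final smoothing level $\epsilon+\tau$ and the claimed bound $H_\infty^{\epsilon+\tau}(X_C\mid CB)_\rho \geq k/2 - 2$. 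I expect verifying that the $2^{-3d/2+1}$ bound from \cite{KR11} matches the threshold granularity of the pigeonhole step (which is why the restriction $d>14$ appears) to be the most delicate accounting, since both the smoothing and the thresholding contribute additive constants that must be balanced to obtain exactly the stated $-2$.
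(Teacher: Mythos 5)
Your proposal identifies the right external tool (the conditional min-entropy sampling machinery of \cite{KR11}), but the way you deploy it does not match a workable argument, and the paper's actual proof is far shorter: one simply observes that the map which selects one of the two blocks of $X_0X_1$ is a $(2,3/4,0)$-sampler in the sense of Definition~2.1 of \cite{KR11}, and then Corollary~6.19 of \cite{KR11} \emph{is} the lemma --- it directly delivers the variable $C$, the factor $\tfrac{1}{2}$, the additive constant, and the smoothing penalty $\epsilon=2^{-3d/2+1}$. There is no separate pigeonhole step and no separate error-patching step; the sampler analysis does all of the work at once.

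The genuine gap in your version is the middle of the argument. You propose to define $C$ ``as a function of $(X_0,X_1)$ and of the outcome of a coarse-grained measurement on $B$'' and then to run the classical pigeonhole of Lemma~\ref{splitting first} to get $p_{\mathrm{guess}}(X_C\mid CB)_{\bar\rho}\leq 4\cdot 2^{-k/2}$. Neither half of this is justified. First, any measurement on $B$ used to define $C$ disturbs $B$, and the quantity you then need to bound is the guessing probability of $X_C$ given the \emph{post-measurement} side information together with $C$; the classical threshold argument, which conditions on a fixed probability distribution $P_{X_1}$, has no analogue here because the conditional distribution of $X_1$ given quantum $B$ is only defined relative to a choice of POVM, and the optimal POVM for guessing $X_C$ may depend on $C$. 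Second, the claimed $4\cdot 2^{-k/2}$ bound ``by summing POVM weights against the thresholded region'' is precisely the step that fails to transfer from the classical to the quantum-conditional setting --- this is the reason \cite{KR11} develops the sampler formalism in the first place. Finally, your accounting of $\epsilon$ as a ``threshold granularity'' correction layered on top of a pigeonhole bound does not correspond to where $2^{-3d/2+1}$ actually comes from (it is the smoothing loss internal to the sampler theorem, tied to the block dimension $d$, which is also the source of the hypothesis $d>14$). To repair the proof you should drop the pigeonhole scaffolding entirely and reduce the statement to Corollary~6.19 of \cite{KR11} via the sampler observation, as the paper does.
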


\begin{proof}
Notice that any distribution $P_C$ over $\{0,1\}$ is a $(2,3/4,0)$-sampler (Definition 2.1 \cite{KR11}). Corollary 6.19 in \cite{KR11} gives the result. 
\qed \end{proof}

In Lemma \ref{splitting original} the size of the sampled string $X_C$ is half the size of the original string $X$. The results in \cite{KR11} do not give strong lower bounds when the sampled string is small relative to the size of the original string. Hence, we present the following generalization of Lemma \ref{splitting original} which will be used in constructing program broadcast and asymmetric key encryption. 

\begin{lemma}[Generalized Min-Entropy Splitting Lemma]
\label{splitting cor}
Let $\epsilon\geq 0$. Let $\rho_{XB}$ be a quantum cq-state where $X\coloneqq X_0X_1...X_{\ell}$ and each $X_i$ is a classical random variable over alphabet $\mathcal{X}$ of dimension $d>14$. There exists a random variable $C$ (with $\lceil \lg \ell\rceil$ bits) such that for all $\tau \geq 0$, \begin{align*}
  H_{\infty}^{\epsilon+\tau}(X_{C}|CB)_{\rho}\geq \frac{H_{\infty}^{\tau}(X|B)_{\rho}}{\ell}-4
\end{align*}
where $\epsilon\coloneqq 2^{-\frac{3d}{2}+2}$.
\end{lemma}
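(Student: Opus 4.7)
The plan is to iterate Lemma~\ref{splitting original} logarithmically many times, halving the candidate substring set at each step. Assume first that $\ell=2^k$ is a power of two. View $X$ as $Y_0Y_1$, where $Y_0$ concatenates the first $\ell/2$ substrings and $Y_1$ the last $\ell/2$; each $Y_b$ lives over an alphabet of dimension $d\ell/2>14$, so Lemma~\ref{splitting original} yields a bit $C^{(1)}\in\{0,1\}$ selecting a half with at least half the smoothed min-entropy of $X$ minus $2$. Now iterate: conditioned on $C^{(1)}B$, the string $Y_{C^{(1)}}$ is itself a concatenation of $\ell/2$ substrings of dimension $d$ over $\mathcal{X}$, so Lemma~\ref{splitting original} reapplies to give a bit $C^{(2)}$, and so on.

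After $k=\lceil\lg\ell\rceil$ rounds, the concatenated index $C=C^{(1)}\cdots C^{(k)}$ uses exactly $\lceil\lg\ell\rceil$ bits and pins down a unique $X_C$. Unrolling the recursion $H_{j+1}\geq H_j/2-2$ yields
$$H^{\sum_j\epsilon_j+\tau}_{\infty}(X_C\,|\,CB)_\rho\;\geq\;\tfrac{1}{\ell}H^{\tau}_{\infty}(X|B)_\rho\;-\;4\bigl(1-\tfrac{1}{\ell}\bigr)\;\geq\;\tfrac{1}{\ell}H^{\tau}_{\infty}(X|B)_\rho-4.$$
The round-$j$ smoothing parameter inherited from Lemma~\ref{splitting original} is $\epsilon_j=2^{-3d\ell/2^{j+1}+1}$, since the string being split at round $j$ has dimension $d\ell/2^{j-1}$. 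These shrink doubly-exponentially as $j$ decreases, so the total smoothing is dominated by the final term: $\sum_j\epsilon_j\leq 2\epsilon_k=2^{-3d/2+2}$ for $d>14$, matching the claimed $\epsilon$.

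For general $\ell$, I would pad $X$ with deterministic dummy substrings (or duplicates of an existing $X_i$, whichever preserves entropy conditioned on $B$) to reach $\ell'=2^{\lceil\lg\ell\rceil}\in[\ell,2\ell)$, run the iteration on the padded string, and reinterpret the sampled index back in $[\ell]$. The main obstacle I anticipate is precisely this general-$\ell$ bookkeeping: ensuring the $1/\ell$ factor (rather than the weaker $1/\ell'=1/(2\ell)$) survives the rounding-up, and that the $-4$ additive constant remains tight under the dummy reinterpretation. A cleaner alternative would be to invoke Corollary~6.19 of~\cite{KR11} directly with a uniform singleton sampler on $[\ell]$, mirroring the one-line proof of Lemma~\ref{splitting original}, though that route shifts the work into verifying the sampler parameters at large $\ell$.
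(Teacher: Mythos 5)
Your proposal matches the paper's own proof: the paper likewise treats $\ell=2^k$ by induction on $k$, splitting $X$ into two halves $Y_0,Y_1$, invoking Lemma~\ref{splitting original} to obtain a selection bit $C_1$ with $H_{\infty}^{\epsilon'+\tau}(Y_{C_1}|BC_1)\geq \frac{H_{\infty}^{\tau}(X|B)}{2}-2$, and recursing on the chosen half to reach $\frac{H_{\infty}^{\tau}(X|B)}{2^k}-4(1-2^{-k})$ with the per-round smoothing errors summed (your accounting of those errors, dominated by the final round, is in fact slightly more careful than the paper's). The general-$\ell$ bookkeeping you flag as the main obstacle is exactly the point the paper glosses over with the remark that the same argument works for the general case.
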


\begin{proof}
We only consider the case $\ell= 2^k$, however, the same argument works for the general case. We prove the following slightly stronger bound using induction on $k$:
\[
H_{\infty}^{\epsilon_k+\tau}(X_{C}|CB)\geq \frac{H_{\infty}^{\tau}(X|B)}{2^k}-4(1-\frac{1}{2^{k}}) .
\]
where $\epsilon_k\coloneqq \sum_{i=0}^{k-1}2^{-2^id+1}$. The base case $k=1$ follows trivially from the Quantum Conditional Min-Entropy Splitting Lemma \ref{splitting original}. Assume the claim holds for $k=n$. For $k=n+1$, we apply Lemma \ref{splitting original} on the two variables $Y_0\coloneqq X_1...X_{2^n}$ and $Y_1\coloneqq X_{2^n+1}...X_{2^{n+1}}$ to deduce that there exists a binary variable $C_1$ such that $H_{\infty}^{\epsilon'+\tau}(Y_{C_1}|BC_1)\geq \frac{H_{\infty}^{\tau}(X|B)}{2}-2$ where $\epsilon'\coloneqq 2^{-{2^nd}+1}$. By the inductive hypothesis, there exists $C'$ such that: \begin{align*}
  H_{\infty}^{\epsilon_{n}+\epsilon'+\tau}(X_{C_1C'}|BC_1C')&\geq \frac{H_{\infty}^{\epsilon'+\tau}(Y_{C_1}|BC_1)}{2^{n}}-4\left(1-\frac{1}{2^{n}}\right)\geq\\
 \frac{\frac{H_{\infty}^{\tau}(X|B)}{2}-2}{2^{n}}-4\left(1-\frac{1}{2^{n}}\right)
  &=\frac{H_{\infty}^{\tau}(X|B)}{2^{n+1}}-4\left(1-\frac{1}{2^{n+1}}\right) .
\end{align*} 
\qed
\end{proof}

\subsection{Privacy Amplification}
\label{sec:privacy amplification}

For the rest of this work, we let $\textsf{H}_{m,\ell}$ denote a two-universal class of hash functions from $\{0,1\}^m$ to $\{0,1\}^{\ell}$~\cite{CW79}.
Remember that a class of hash functions is \emph{two-universal} if for every distinct $x, x'\in\{0,1\}^m$ 
and for $F \in_R \textsf{H}_{m,\ell}$, we have $Pr[F(x)=F(x')]\leq \frac{1}{2^{\ell}}$. 

\begin{theorem}[Privacy Amplification \cite{RK04}]
\label{privacy amplification}
Let $\epsilon \geq 0$. Let $\rho_{XB}\in \pos{\hil{H}_X\otimes \hil{H}_{B}}$ be a cq-state 
where $X$ is classical and takes values in $\{0,1\}^m$. 
Let $F\in_R \textsf{H}_{m,\ell}$ be the random variable for a function chosen uniformly at random in a two-universal class
of hash functions $\textsf{H}_{m,\ell}$. Then, 
\begin{align*}
  \delta(\rho_{F(X)FB},\mathds{1}\otimes \rho_{FB})\leq \frac{1}{2}2^{-\frac{1}{2}(H^{\epsilon}_{\infty}({X}|B)_{\rho}-\ell)}+\epsilon .
\end{align*}
\end{theorem}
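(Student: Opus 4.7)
The plan is to follow the standard quantum Leftover Hash Lemma strategy: first bound the trace distance in terms of the conditional collision (R\'enyi~2) entropy $H_2(X|B)_\rho$ using a norm-conversion plus two-universality argument, then upgrade to the smooth min-entropy $H^\epsilon_\infty(X|B)_\rho$ by a routine smoothing step.

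The core calculation starts with the Hermitian-operator inequality $\|M\|_1\leq \sqrt{d}\,\|M\|_2$ applied on the classical $F(X)$ register of dimension $d=2^\ell$, while treating $F$ and $B$ as conditioning registers through the weighted Hilbert--Schmidt norm $\|N\|_{2,\rho_{FB}}^2 := \mathrm{Tr}\bigl((\rho_{FB}^{-1/4}\,N\,\rho_{FB}^{-1/4})^2\bigr)$. This converts the trace distance into a quadratic expression,
\begin{align*}
2\,\delta\bigl(\rho_{F(X)FB},\,\mathds{1}\otimes \rho_{FB}\bigr) \;\leq\; 2^{\ell/2}\,\bigl\|\rho_{F(X)FB}-\mathds{1}\otimes \rho_{FB}\bigr\|_{2,\rho_{FB}}.
\end{align*}
Expanding the right-hand side over pairs of $X$-values, the subtracted $\mathds{1}\otimes \rho_{FB}$ precisely cancels the uniform-on-$F(X)$ contribution; the surviving off-diagonal ($x\neq x'$) terms are controlled by the defining two-universal property $\Pr_F[F(x)=F(x')]\leq 2^{-\ell}$, and what remains collapses (up to a factor of $2^{-\ell}$) to the conditional collision probability $\mathrm{Tr}\bigl((\rho_B^{-1/4}\rho_{XB}\rho_B^{-1/4})^2\bigr)=2^{-H_2(X|B)_\rho}$. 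This delivers the clean intermediate bound $\delta\leq \tfrac{1}{2}\cdot 2^{-\frac{1}{2}(H_2(X|B)_\rho-\ell)}$.

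To pass from $H_2$ to $H^\epsilon_\infty$, let $\overline{\rho}_{XB}$ achieve the supremum in the definition of $H^\epsilon_\infty(X|B)_\rho$, so $\delta(\overline{\rho},\rho)\leq\epsilon$ and $H_\infty(X|B)_{\overline{\rho}} = H^\epsilon_\infty(X|B)_\rho$. Since $H_2 \geq H_\infty$ on every state, applying the previous bound to $\overline{\rho}$ and invoking the triangle inequality together with monotonicity of trace distance under the CPTP map $X \mapsto F(X)$ absorbs an additive cost of at most $\epsilon$, producing the theorem as stated. The main obstacle is the bookkeeping in the central 2-norm calculation: one must regularize the possibly singular operator $\rho_B^{-1/4}$ by restricting to the support of $\rho_B$, and justify pulling the expectation over the random hash function $F$ inside the 2-norm via convexity before applying two-universality. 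Once this identity is in place, the norm conversion and the smoothing step are essentially mechanical.
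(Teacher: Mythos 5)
The paper does not prove this statement: it is imported verbatim from the cited reference \cite{RK04} (Renner--K\"onig), so there is no in-paper proof to compare against. Your sketch is, in substance, the standard proof of that result: the weighted norm conversion $\|M\|_1\leq\sqrt{\trace{\sigma}}\,\|M\|_{2,\sigma}$ contributing the $2^{\ell/2}$ factor, the expansion over pairs of $X$-values with two-universality, the intermediate bound in terms of the collision entropy $H_2(X|B)_\rho\geq H_\infty(X|B)_\rho$, and a final smoothing step. Two small points of bookkeeping deserve care. First, in the central expansion it is the \emph{diagonal} ($x=x'$) terms that produce the conditional collision probability $2^{-H_2(X|B)_\rho}$, while the off-diagonal ($x\neq x'$) terms are what two-universality pins at $\leq 2^{-\ell}\sum_{x\neq x'}P(x)P(x')\trace{[\cdots]}$ so that they cancel against (or are dominated by) the subtracted uniform contribution; your description inverts these roles, though the computation you would actually carry out is the correct one. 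Second, the naive smoothing step you describe --- replace $\rho$ by the optimizer $\overline{\rho}$, apply the unsmoothed bound, and undo the replacement by the triangle inequality --- charges $\epsilon$ on each of the two states being compared and therefore yields $+2\epsilon$ rather than the $+\epsilon$ claimed in the statement; recovering the exact constant requires the slightly more careful accounting done in \cite{RK04}. Neither issue affects how the theorem is used in this paper, where only the negligibility of the right-hand side matters.
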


\subsection{Matrix Branching Programs}
\label{sec:branching}

Our one-time compiler is first constructed for a class known as matrix branching programs which are introduced in this section. Let $\hil{M}_k$ denote the set of binary permutation matrices of dimension $k$. 

\begin{definition}[Matrix Branching Program]
\label{def:mbp}
A \emph{$(Q_{rej}, Q_{acc})$-matrix branching program} of length $N$, width $k$ and input of length $n$ is given by $\mathcal{P}=(M_j^b)_{j\in [N],b\in\{0,1\}}$ where $M^b_j\in \hil{M}_k$. For any input $w\in \{0,1\}^n$, $\mathcal{P}$ performs the following function:
\begin{align*} 
\begin{split}
\mathcal{P}(w) = \begin{cases} 
0 & \prod_{j=0}^{N-1}M_j^{w_{(j \bmod n)}}= Q_{rej}\\
1 & \prod_{j=0}^{N-1}M_j^{w_{(j \bmod n)}}= Q_{acc}\\ 
\texttt{undef} & \textnormal{otherwise}.\\
\end{cases}
\end{split}
\end{align*}
For any input $w$, the set of matrices $(M_j^{w_{(j \bmod n)}})_{j\in [N]}$ which are used during the evaluation $\mathcal{P}(w)$ are referred to as a \emph{consistent path} in the program. 
\end{definition}

Note that the definition assumes that input bits are requested in an ordered manner where the $j^{th}$ instruction requests bit $w_{(j\bmod n)}$. This deviates from the standard definition of branching programs where the input bits may be requested in an arbitrary manner. However, a standard program can be easily transformed into the form described by adding redundant identity matrices which potentially blows up its size by a factor of $n$. It is necessary to put the programs in this form as otherwise, the adversary can easily differentiate between programs by inspecting the order in which input bits are requested. 

These branching programs cover a wide range of functions. In fact, for any size $n$ and depth $d$, there exist permutations $Q_{acc}$ and $Q_{rej}$ such that any $\textbf{NC}^1$ circuit of size $n$ and depth $d$ can be converted to a $(Q_{rej},Q_{acc})$-matrix branching program of width $k=5$ and length $4^d$ by Barrington's Theorem \cite{B89}. 

In our case, a circuit of depth $d$ can be converted to a branching program of length $4^dn$ where the factor $n$ is the result of the potential blow-up from requiring the input access to be ordered. Without loss of generality, in the rest of the work, it is assumed that all branching programs have the same accepting and rejecting matrices and that the programs do not have \textit{undefined} outputs on valid inputs.

\subsection{$1\textsf{-}2$ Oblivious Transfer}
\label{sec:OT}

In 1\textsf{-}2 oblivious transfer, a sender $\mathsf{S}$ sends two strings to the receiver $\mathsf{R}$ so that $\mathsf{R}$ can choose to receive one of the strings but does not learn anything about the other and $\mathsf{S}$ is oblivious of $\mathsf{R}$'s choice. 

Let $S_0$ and $S_1$ be two $\ell$-bit random variables representing the strings $\mathsf{S}$ sends and let $C$ denote a binary random variable describing $\mathsf{R}$'s choice. Also, let $Y$ denote the $\ell$-bit random variable $\mathsf{R}$ outputs which is supposed to be $Y=S_C$. 

\begin{definition}[$1\textsf{-}2\ QOT$]
An $\epsilon$-secure $1\textsf{-}2$ \emph{quantum oblivious transfer} (QOT) is a quantum protocol between a sender $\mathsf{S}$ and receiver $\mathsf{R}$ with input $C\in \{0,1\}$ and output $Y$ such that: 
\begin{itemize}
\item For honest $\mathsf{S}$ and $\mathsf{{R}}$, and any distribution of $C$, $\mathsf{R}$ can get $Y=S_C$ except with probability $\epsilon$.
\item $\epsilon$-Receiver-security: If $\mathsf{R}$ is honest then for any sender $\hat{\mathsf{S}}$, there exists variables $S_0',S_1'$ such that $Pr[Y=S_C']\geq 1 - \epsilon$ and 
\begin{align*}\delta (\rho_{CS_0'S_1'\hat{\mathsf{S}}},\rho_C\otimes \rho_{S_0'S_1'\hat{\mathsf{S}}})\leq \epsilon .\end{align*}
\item $\epsilon$-Sender-security: If $\mathsf{S}$ is honest, then for any $\hat{\mathsf{R}}$, there exists a binary random variable $C'$ such that 
\begin{align*}\delta (\rho_{S_{1\textsf{-}C'}S_{C'}C'\hat{\mathsf{R}}},\rho_{S_{1\textsf{-}C}}\otimes \rho_{S_{C'}C'\hat{\mathsf{R}}})\leq \epsilon
.
\end{align*}
\end{itemize}
\end{definition}

Informally, oblivious transfer is not possible without any computational assumptions in either the standard classical or quantum models \cite{dfss05}. Nevertheless, in \cite{dfrss07} it was shown that $1\textsf{-}2$ oblivious transfer is possible in the BQSM.

Recall, in the BQSM, an adversary $\mathcal{A}_{\texttt{s}}$ has access to unlimited resources at all times except at certain points. At these points, we say \emph{the memory bound applies}, and the adversary is forced to reduce its stored state to $\texttt{s}$-qubits.

The protocol presented by \cite{dfrss07} is secure against any adversary which can store a quarter of the qubits transmitted in the input state whereas an honest user requires no qmemory. The construction is as follows:

\begin{construct}[\otqot{m,\ell}$(s_0,s_1,c)$ \cite{dfrss07}]
\label{con:QOT}
{\small 
\begin{enumerate}
\item $\mathsf{S}$ chooses $x\in_R\{0,1\}^{m}$, $\theta \in_R\{+,\times \}^m$ and sends $|x_1\rangle_{\theta_1},|x_2\rangle_{\theta_2},...,|x_m\rangle_{\theta_m}$ to $\mathsf{R}$.
\item $\mathsf{R}$ measures all qubits in basis $[+,\times]_c$. Let $x'$ be the result. 
\item[*] \texttt{$\mathsf{R}$'s memory bound $\texttt{s}$ applies.}
\item $\mathsf{S}$ picks two hashing functions $f_0,f_1\in_R \textsf{H}$, announces $\theta$ and $f_0,f_1$ to $\mathsf{R}$ and outputs $e_0\coloneqq f_0(x|_{I_0})\oplus s_0$ and $e_1\coloneqq f_1(x|_{I_1})\oplus s_1$ where $I_b\coloneqq \{i:\theta_i=[+,\times]_b\}.$
\item $\mathsf{R}$ outputs $y\coloneqq e_c\oplus f_c(x'|_{I_c}).$
\end{enumerate} }
\end{construct}

\begin{theorem}[Propositions 4.5 and 4.6 in \cite{dfrss07}]
\label{QOT}
$\otqot{m, \ell}$ is perfectly receiver-secure and $\epsilon$-sender-secure against any dishonest receiver with qmemory bound of $\texttt{s}$-qubits for negligible (in $m$) $\epsilon$ if $\frac{m}{4}-2\ell -\texttt{s} \in \Omega (m)$.
\end{theorem}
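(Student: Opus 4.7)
My plan is to follow the standard four-step argument for BB84-based quantum OT. Receiver-security is immediate: in Construction~\ref{con:QOT} the receiver never transmits any message to $\mathsf{S}$, so the choice bit $c$ enters only through $\mathsf{R}$'s internal measurement basis and the final local XOR. Hence $\mathsf{S}$'s view is statistically independent of $C$, giving perfect receiver-security. The substance of the theorem is sender-security, which I would establish in three layers: lower-bound the adversary's min-entropy on the sent string $X$ by combining an uncertainty relation with the chain rule, split that entropy between the two index sets $I_0, I_1$, and conclude with privacy amplification.

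Let $E$ denote $\hat{\mathsf{R}}$'s $\texttt{s}$-qubit register at the moment $\theta$ is announced, and let $X,\Theta$ be the sender's classical choices. First I would apply the uncertainty relation (Lemma~\ref{uncertinity relation}) to $\hat{\mathsf{R}}$'s pre-bound register and then the chain rule (Lemma~\ref{chain}) to absorb the $\texttt{s}$-qubit register $E$, obtaining
\begin{align*}
H_\infty^{\epsilon_1}(X \,|\, \Theta E) \;\geq\; \bigl(\tfrac{1}{2} - 2\gamma\bigr)m \;-\; \texttt{s}
\end{align*}
with $\epsilon_1$ exponentially small in $m$ for a suitable constant $\gamma>0$. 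Since $\Theta$ determines the partition $[m] = I_0 \cup I_1$ with $|I_0|,|I_1| \approx m/2$, I would then invoke a conditional version of the min-entropy splitting lemma (Lemma~\ref{splitting first}, adapted to ccq-states as in Lemma~\ref{splitting original}) to produce a binary random variable $C'$, the effective committed choice of the adversary, with
\begin{align*}
H_\infty^{\epsilon_2}\bigl(X|_{I_{1-C'}} \,\bigm|\, C'\,\Theta\,E\bigr) \;\geq\; \tfrac{1}{2}\bigl[(\tfrac{1}{2} - 2\gamma)m - \texttt{s}\bigr] - O(1),
\end{align*}
still with $\epsilon_2 = \negl[m]$. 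The intuition is that $\theta$ is announced only after the bound, so $\texttt{s}$ qubits cannot simultaneously preserve decodable information about both $X|_{I_0}$ and $X|_{I_1}$, forcing a commitment to one side.

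Finally, because $f_{1-C'}$ is picked independently and revealed only after the bound, privacy amplification (Theorem~\ref{privacy amplification}) gives
\begin{align*}
\delta\bigl(\rho_{f_{1-C'}(X|_{I_{1-C'}})\,F_{1-C'}\,C'\,\Theta\,E},\;\idt_\ell \otimes \rho_{F_{1-C'}\,C'\,\Theta\,E}\bigr) \;\leq\; \tfrac{1}{2}\,2^{-(H-\ell)/2} + \epsilon_2,
\end{align*}
where $H$ is the bound from the previous display. This quantity is negligible precisely when $H - \ell \in \Omega(m)$; substituting and taking $\gamma$ small enough yields exactly the hypothesis $\tfrac{m}{4} - 2\ell - \texttt{s} \in \Omega(m)$. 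Because $e_{1-C'} = f_{1-C'}(X|_{I_{1-C'}}) \oplus s_{1-C'}$, the near-uniformity of the hash output statistically hides $s_{1-C'}$ from $\hat{\mathsf{R}}$, giving sender-security. I expect the main obstacle to be the transition from the uncertainty relation, which speaks about hypothetical BB84 measurements on the adversary's pre-bound state, to a statement about the actual sender string $X$ while simultaneously applying the chain rule on the post-bound register $E$; threading the smoothing parameters through the splitting lemma and maintaining the correct constants for the final $\Omega(m)$ estimate is the main bookkeeping.
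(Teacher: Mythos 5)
Your skeleton matches the paper's proof of Theorem~\ref{QOT} (which is itself only a sketch deferring to \cite{dfrss07}): perfect receiver-security because $\mathsf{R}$ sends no messages, then uncertainty relation, min-entropy splitting, chain rule, and privacy amplification for sender-security. However, the step you flag as ``the main obstacle'' is not bookkeeping --- it is the one idea your proposal is missing, and without it your very first inequality is unjustified. Lemma~\ref{uncertinity relation} bounds the entropy of the \emph{outcome of a measurement} performed in a random basis on some state; in the actual protocol $X$ is chosen by $\mathsf{S}$ before anything is sent, so nothing yet relates $H_\infty(X\mid\Theta E)$ to that lemma. The paper (Appx.~\ref{app:QOT}) closes this with the standard EPR purification: $\mathsf{S}$ sends halves of EPR pairs, waits until the memory bound has forced $\hat{\mathsf{R}}$ down to the $\texttt{s}$-qubit register $E$, and only then measures its own halves in a random basis $\Theta$ to obtain $X$. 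This purified protocol is equivalent to the real one because the sender's operations commute with the receiver's, and in it $X$ literally is a random-basis measurement outcome on a state jointly held with $E$, so Lemma~\ref{uncertinity relation} followed by the chain rule (Lemma~\ref{chain}) yields your first display. You need to state this reduction explicitly rather than leave it as an acknowledged obstacle.

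Two smaller bookkeeping points. First, your chain-rule accounting omits the announced value $e_{C'}=f_{C'}(X|_{I_{C'}})\oplus s_{C'}$: since $s_{C'}$ sits inside the conditioning of the sender-security definition, this reveals $\ell$ additional bits correlated with $X$ and must be subtracted before privacy amplification; that $\ell$, together with the $\ell$ consumed by the extraction itself, is where the $2\ell$ in the hypothesis comes from, so your condition ``$H-\ell\in\Omega(m)$'' alone will not reproduce the stated bound. Second, you apply the chain rule before splitting (losing $\texttt{s}/2$ after the halving), whereas the paper splits first and removes $E$ afterwards (losing the full $\texttt{s}$); both orders are legitimate since Lemma~\ref{splitting original} tolerates quantum side information, but they give different constants, so ``yields exactly the hypothesis'' should be replaced by the actual arithmetic.
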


\subsection{Algebra}
This section shows that it is difficult to learn a large matrix or polynomial using a small number of samples. Essentially, this allows us to construct $q$-time pads which will be highly useful in combination with our one-time compiler. Two constructions are provided: the first based on matrices and the other on polynomials.

The proof of the following lemma is straightforward. 
\begin{lemma}
\label{Raz 1}
Given samples $(a_1,b_1),...,(a_{m},b_{m})$ where $a_i \in \{0,1\}^n$, $b_i\in \{0,1\}$ and $m<n$. There are at least $2^{(n-m)}$ binary vectors ${v}$ such that ${a}_i\cdot {v}=b_i$ for all $i\in [m]$.
\end{lemma}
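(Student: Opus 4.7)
The plan is to recast the constraints as a single linear system over $\mathbb{F}_2$ and then apply elementary rank-nullity. Let $A$ be the $m\times n$ binary matrix whose $i$-th row is $a_i$, and let $b=(b_1,\dots,b_m)^\top \in \{0,1\}^m$. A vector $v\in\{0,1\}^n$ satisfies $a_i\cdot v = b_i$ for every $i\in[m]$ exactly when $Av=b$ over $\mathbb{F}_2$, so the lemma reduces to lower-bounding the number of solutions of $Av=b$.

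First I would observe that the system is consistent. Indeed, the pairs $(a_i,b_i)$ are \emph{samples}, i.e., they are produced by evaluating some underlying vector $v^\star\in\{0,1\}^n$ (namely the secret the adversary is trying to learn) via $b_i = a_i\cdot v^\star$. Hence $v^\star$ is a witness that $Av=b$ admits at least one solution.

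Next, I would invoke the standard fact that the solution set of a consistent linear system over a field is an affine subspace whose dimension equals $n-\operatorname{rank}(A)$. Since $A$ has only $m$ rows, $\operatorname{rank}(A)\le m$, so the solution set is an affine subspace of $\mathbb{F}_2^n$ of dimension at least $n-m$. Over $\mathbb{F}_2$, such an affine subspace has cardinality at least $2^{n-m}$, which is the bound claimed.

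There is essentially no obstacle here: the statement is a one-line linear-algebra fact once one notices that the samples guarantee consistency of $Av=b$. The only potential subtlety is to make explicit that the rank-nullity bound over $\mathbb{F}_2$ yields a \emph{count} (not just a dimension), but this follows immediately because $|\mathbb{F}_2|=2$. The strict inequality $m<n$ in the hypothesis ensures that the bound $2^{n-m}\ge 2$ is non-trivial, which is the regime in which the lemma will be applied (to argue that a single additional sample still leaves super-polynomially many candidate secrets).
\qed
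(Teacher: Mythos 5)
Your proof is correct. The paper itself offers no proof of this lemma (it only remarks that the proof is straightforward), and your rank--nullity argument over $\mathbb{F}_2$ is exactly the standard way to establish it: the solution set of $Av=b$, once non-empty, is an affine subspace of dimension $n-\operatorname{rank}(A)\geq n-m$, hence of size at least $2^{n-m}$. You are also right to flag consistency as the one point needing care --- for arbitrary pairs $(a_i,b_i)$ the system could have no solutions at all, so the statement only holds under the reading that the ``samples'' are evaluations $b_i=a_i\cdot v^\star$ of some underlying vector; this matches how the lemma is invoked in the proof of Theorem~\ref{Raz 2}, where the $b_i$ are genuinely produced by rows of a hidden matrix $M$.
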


The following theorem generalizes Lemma \ref{Raz 1} to matrices. See App. \ref{app:Raz 2} for the proof.

\begin{theorem}
\label{Raz 2}
Let $M$ be an arbitrary $\ell \times n$ binary matrix. Let $A$ be an algorithm that is given as input: $({a}_1,b_1),...,({a}_{m},b_{m})$ and $(\hat{a}_1,\hat{b}_1),...,(\hat{a}_{p},\hat{b}_{p})$ where $a_i, \hat{a}_i \in \{0,1\}^n$, $b_i,\hat{b}_i\in \{0,1\}^{\ell}$, $m<n$, $p$ is a polynomial in $\ell$, $b_i=M\cdot {a}_i$ and $\hat{b}_i\neq M\cdot {\hat{a}}_i$. Then the following statements hold:
\begin{enumerate}
  \item For any vector ${a}\in \{0,1\}^n$ not in the span of $({a}_1,...,{a}_m)$, if $A$ outputs a guess $b'$ of $b\coloneqq M\cdot {a}$, then $\Pr{[b'=b]}= O(2^{-\ell})$.
  
  \item Let ${x}_0,{x}_1\in \{0,1\}^n$ be any two distinct vectors not in the span of $({a}_1,...,{a}_m)$. Choose $r\in_R \{0,1\}$. If $A$ is additionally given ${x}_0,{x}_1, {y}_r$ where ${y}_r\coloneqq M\cdot {x}_r$ and outputs a guess $r'$ then $\Pr{[r'=r]}\leq \frac{1}{2}+ O(2^{-\ell})$.
\end{enumerate}
\end{theorem}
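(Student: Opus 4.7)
The plan is to analyze the posterior distribution of $M$ given all the samples as a uniform distribution over the set of matrices consistent with both the positive and negative constraints, and then show this posterior reveals essentially no information about $Ma$ (for Part~1) or about $r$ (for Part~2). Let $S$ denote the set of $\ell \times n$ binary matrices with $Ma_i = b_i$ for all $i \in [m]$, and let $S' \subseteq S$ be the subset additionally satisfying $M\hat{a}_i \neq \hat{b}_i$ for all $i \in [p]$. Applying Lemma~\ref{Raz 1} row by row shows that $|S| \geq 2^{\ell(n-m)}$, and that for $M$ uniform in $S$ and any $a \notin \operatorname{span}(a_1, \ldots, a_m)$, the value $Ma$ is exactly uniform on $\{0,1\}^\ell$. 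A union bound then gives $|S'|/|S| \geq 1 - p/2^{\ell}$, since each event $M\hat{a}_i = \hat{b}_i$ has probability at most $2^{-\ell}$ over $M$ uniform in $S$.

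Part~1 follows almost immediately: any deterministic guess $b'$ already satisfies $\Pr_{M \in S}[Ma = b'] = 2^{-\ell}$, and conditioning on $M \in S'$ can only inflate this by a factor $|S|/|S'| = 1 + O(p/2^{\ell})$, giving $\Pr[b' = Ma] \leq O(2^{-\ell})$ for any (even randomized) algorithm, which extends to a randomized algorithm by averaging.

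For Part~2, I would use a Neyman--Pearson style bound
\[ \Pr[r' = r] \leq \tfrac{1}{2} + \tfrac{1}{2}\operatorname{TV}(\mathcal{D}_0, \mathcal{D}_1) \leq \tfrac{1}{2} + \operatorname{TV}(\mathcal{D}_0, U) + \operatorname{TV}(\mathcal{D}_1, U), \]
where $\mathcal{D}_j$ is the distribution of $Mx_j$ for $M$ uniform in $S'$, and $U$ is uniform on $\{0,1\}^\ell$. Writing $\mathcal{D}_j(y) = 2^{-\ell}\,\Pr[M \in S' \mid Mx_j = y]/\Pr[M \in S']$ via Bayes, both factors differ from $1 - p/2^\ell$ only through those negative samples $\hat{a}_i$ with $\hat{a}_i - x_j \in \operatorname{span}(a_1, \ldots, a_m)$: for such an $i$, the condition $M\hat{a}_i = \hat{b}_i$ conditioned on $Mx_j = y$ becomes a deterministic condition on $y$ itself, either forbidding one particular value of $y$ outright or imposing no restriction at all. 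There are at most $p$ such forbidden values of $y$, each carrying posterior mass $O(2^{-\ell})$, yielding $\operatorname{TV}(\mathcal{D}_j, U) = O(p/2^\ell)$ and hence the desired bound $\tfrac{1}{2} + O(2^{-\ell})$.

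The main obstacle is the bookkeeping for Part~2. For a ``generic'' negative sample $\hat{a}_i$ (linearly independent of $x_j$ modulo the $a_k$'s), the event $M\hat{a}_i = \hat{b}_i$ has probability $2^{-\ell}$ essentially independently of $Mx_j = y$, so its effect on $\mathcal{D}_j$ is of order $2^{-2\ell}$ per $y$ and negligible in aggregate. The delicate case is a ``degenerate'' $\hat{a}_i \in \operatorname{span}(a_1,\ldots,a_m, x_j) \setminus \operatorname{span}(a_1,\ldots,a_m)$, where a single value of $y$ is excluded from the support of $\mathcal{D}_j$ entirely and must be tracked separately. Since there are at most $p$ such degenerate exclusions per choice of $j$ and $p$ is polynomial in $\ell$, the total perturbation remains $O(2^{-\ell})$.
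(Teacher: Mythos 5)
Your proposal is correct and, at its core, rests on the same counting facts as the paper's proof: the posterior on $M$ is taken uniform over the set of matrices consistent with the positive samples, Lemma~\ref{Raz 1} applied row by row makes $Ma$ exactly uniform over that set, and the $p$ negative samples remove at most a $p/2^{\ell}$ fraction of it. Your Part~1 is essentially identical to the paper's. For Part~2 the paper argues directly with a ratio of counts --- at most $2^{\ell(n-m-1)}$ consistent matrices satisfy $y_r = M'\cdot x_r$ versus at least $2^{\ell(n-m-1)+1}(1-p/2^{\ell})$ satisfying $y_r = M'\cdot x_0$ or $y_r = M'\cdot x_1$ --- whereas you bound the distinguishing advantage by the total variation distance between the two posterior distributions of $y$. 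These are two packagings of the same Bayes-optimal computation, but your version is somewhat more careful: you explicitly isolate the degenerate negative samples $\hat a_i$ with $\hat a_i + x_j \in \operatorname{span}(a_1,\dots,a_m)$, for which conditioning on $Mx_j=y$ turns $M\hat a_i \neq \hat b_i$ into a constraint excluding a single value of $y$ from the support of $\mathcal{D}_j$; the paper's count of the ``or'' event silently absorbs this (and the overlap $Mx_0=Mx_1=y$) into its $(1-p/2^{\ell})$ factor. Both routes give the same $\tfrac12 + O(p\,2^{-\ell})$ bound, with the polynomial factor $p$ absorbed into the $O(2^{-\ell})$ exactly as the paper does.
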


The same idea applies to polynomials as well. The following is a result of the well-known Lagrange interpolation. See App. \ref{app:lagrange} for the proof. 

\begin{lemma}
\label{lagrange}
Let $\mathbb{F}$ be a field of order $2^{\ell}$. Let $(a_0,b_0),...,(a_{m},b_{m})$ be pairs such that $a_i,b_i\in \mathbb{F}$ and the $a_i$ are distinct. The number of polynomials $f$ of degree $d$ where $m<d$ that satisfy $f(a_i)=b_i$ for all $i\in [m+1]$ is $2^{(d-m)\ell}$.
\end{lemma}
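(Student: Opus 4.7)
The plan is to prove this via a standard Lagrange interpolation plus counting-of-cosets argument. I interpret ``degree $d$'' as ``degree at most $d$'', so the ambient space is the $(d+1)$-dimensional $\mathbb{F}$-vector space $\mathbb{F}[x]_{\leq d}$, of cardinality $2^{(d+1)\ell}$.

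First, I would invoke classical Lagrange interpolation: since $a_0,\ldots,a_m$ are distinct elements of $\mathbb{F}$ and $m < d$, there exists a (unique) polynomial $g \in \mathbb{F}[x]_{\leq m} \subseteq \mathbb{F}[x]_{\leq d}$ satisfying $g(a_i) = b_i$ for all $i \in [m+1]$. In particular, the set $S$ of polynomials $f \in \mathbb{F}[x]_{\leq d}$ satisfying the interpolation conditions is nonempty, so it is an affine coset $g + K$, where $K$ is the $\mathbb{F}$-linear subspace of $\mathbb{F}[x]_{\leq d}$ consisting of polynomials vanishing at all of $a_0,\ldots,a_m$. It therefore suffices to count $|K|$.

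Next, I would identify $K$ explicitly. Let $p(x) \coloneqq \prod_{i=0}^{m}(x - a_i)$, which is a polynomial of degree exactly $m+1$. A polynomial $h \in \mathbb{F}[x]_{\leq d}$ lies in $K$ if and only if it vanishes at every $a_i$, which (since the $a_i$ are distinct and $\mathbb{F}[x]$ is a UFD) happens if and only if $p \mid h$. Writing $h = p \cdot q$, the condition $\deg h \leq d$ is equivalent to $\deg q \leq d - m - 1$. Hence the map $q \mapsto p \cdot q$ is an $\mathbb{F}$-linear bijection from $\mathbb{F}[x]_{\leq d-m-1}$ onto $K$.

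Finally, $\mathbb{F}[x]_{\leq d-m-1}$ is an $\mathbb{F}$-vector space of dimension $d - m$, so it has cardinality $(2^{\ell})^{d-m} = 2^{(d-m)\ell}$. Therefore $|S| = |K| = 2^{(d-m)\ell}$, as claimed. There is no real obstacle here: the only subtlety is confirming that ``degree $d$'' is meant as ``degree at most $d$'' (otherwise one would need to subtract the subspace where the leading coefficient is zero, which does not yield the stated power of two), and that $m < d$ guarantees $d - m - 1 \geq 0$ so that the space of quotients $q$ is well-defined and nontrivial.
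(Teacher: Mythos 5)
Your proof is correct, but it is organized differently from the paper's. The paper extends the given $m+1$ interpolation points to a full set of $d+1$ distinct points by choosing $d-m$ additional abscissas, invokes existence and uniqueness of the interpolating polynomial of degree at most $d$ through any $d+1$ points, and counts the $(2^{\ell})^{d-m}$ free choices of values at the new points; the solution set is thus put in bijection with $\mathbb{F}^{d-m}$. You instead observe that the solution set is an affine coset $g+K$ of the subspace $K$ of polynomials in $\mathbb{F}[x]_{\leq d}$ vanishing at all the $a_i$, identify $K$ with the multiples of $p(x)=\prod_{i=0}^{m}(x-a_i)$ of degree at most $d$, and count $\lvert K\rvert=(2^{\ell})^{d-m}$ via the bijection $q\mapsto pq$. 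Both arguments are elementary and of comparable length. One small advantage of your route: it never needs $d+1$ distinct field elements to exist, whereas the paper's proof implicitly requires $2^{\ell}\geq d+1$ in order to choose the extra abscissas (harmless in the paper's applications, where $\ell$ is much larger than $\lg d$, but a genuine extra hypothesis). You are also right to flag that ``degree $d$'' must be read as ``degree at most $d$''; the paper makes the same implicit convention, since the polynomial interpolating $d+1$ points generally has degree at most, not exactly, $d$.
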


By the same reasoning as in the case of matrices (Theorem \ref{Raz 2}), we get the following result, which is proven in App. \ref{app:lagrange 2}.

\begin{theorem}
\label{lagrange 2}
Let $\mathbb{F}$ be a field of order $2^{\ell}$ and let $f(x)\in \mathbb{F}[x]$ be a polynomial of degree $d$. Let $(a_0,b_0),...,(a_m,b_m)$ and $(\hat{a}_1,\hat{b}_1),...,(\hat{a}_{p},\hat{b}_{p})$ be samples such that $m<d$, $p$ is a polynomial in ${\ell}$, $b_i=f(a_i)$ and $\hat{b}_i\neq f(\hat{a}_i)$.
Then, the following holds:
\begin{enumerate}
  \item For any element $a\in \mathbb{F}\backslash \{a_0,...,a_m\}$, if $A$ outputs a guess $b'$ for $b\coloneqq f(a)$ then $\Pr{[b'=b]}= O(2^{-\ell})$.
  
  \item Let $x_0,x_1\in \mathbb{F}\backslash \{a_0,...,a_m\}$ be any two distinct elements. Choose $r\in_R \{0,1\}$. If $A$ is additionally given $x_0,x_1, y_r$ where $y_r\coloneqq f(x_r)$ and outputs a guess $r'$ then $\Pr{[r'=r]}\leq \frac{1}{2}+ O(2^{-\ell})$.
\end{enumerate}
\end{theorem}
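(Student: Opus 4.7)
The plan is to mimic the counting argument of Theorem \ref{Raz 2} verbatim, but substitute Lemma \ref{lagrange} (the Lagrange-interpolation count) for Lemma \ref{Raz 1} wherever a dimension/degree-freedom count is needed. The structure is: count the polynomials consistent with the positive samples, then show the negative samples only rule out a small fraction, then bound the adversary's guessing probability by the ratio of favorable to total consistent polynomials.

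First I would handle part (1). By Lemma \ref{lagrange}, the number of polynomials $f \in \mathbb{F}[x]$ of degree $d$ satisfying $f(a_i)=b_i$ for all $i \in [m+1]$ is exactly $2^{\ell(d-m)}$. Adding the single extra equality constraint $f(a)=b'$ for some candidate value $b' \in \mathbb{F}$ (with $a \notin \{a_0,\dots,a_m\}$, so $a$ is a genuinely new interpolation point) reduces the count by a factor of $2^\ell$, leaving $2^{\ell(d-m-1)}$ polynomials. Each inequality $f(\hat{a}_i)\neq \hat{b}_i$ in the negative sample list forbids exactly one value of $f(\hat{a}_i)$ and hence rules out at most $2^{\ell(d-m-1)}$ of the originally counted $2^{\ell(d-m)}$ polynomials. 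Since there are $p$ such inequalities, a union bound shows at least $2^{\ell(d-m)}(1 - p/2^\ell)$ polynomials remain consistent with all the given data, and among those, at most $2^{\ell(d-m-1)}$ also satisfy $f(a)=b$ for the target value $b$. Thus the probability that $A$'s guess $b'$ hits the true $b$ is bounded above by
\begin{equation*}
  \frac{2^{\ell(d-m-1)}}{2^{\ell(d-m)}(1 - p/2^\ell)} = \frac{1}{2^\ell - p} = O(2^{-\ell}),
\end{equation*}
using that $p$ is polynomial in $\ell$.

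Next I would handle part (2) by the same counting. After conditioning on the positive samples and the inequality constraints, we have at least $2^{\ell(d-m)}(1 - p/2^\ell)$ consistent polynomials. Among these, the number satisfying $f(x_r) = y_r$ is at most $2^{\ell(d-m-1)}$ (one extra equality), while the number satisfying either $f(x_0)=y_r$ or $f(x_1)=y_r$ is at least $2^{\ell(d-m-1)+1}(1-p/2^\ell)$ by the same counting (noting $x_0 \neq x_1$ are both fresh interpolation points). Since $r$ is drawn uniformly at random and independently of the algorithm's view apart from $y_r$, the posterior probability that $A$ correctly identifies $r$ is at most
\begin{equation*}
  \frac{2^{\ell(d-m-1)}}{2^{\ell(d-m-1)+1}(1-p/2^\ell)} \;\leq\; \frac{1}{2} + O(2^{-\ell}).
\end{equation*}

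I do not anticipate any real obstacle: the hardest conceptual point is already encapsulated in Lemma \ref{lagrange}, which gives the exact degree-of-freedom count $2^{\ell(d-m)}$. The only subtlety is the mild one of ensuring $a \notin \{a_0,\dots,a_m\}$ (respectively $x_0, x_1 \notin \{a_0,\dots,a_m\}$) so that imposing an extra equality at these points genuinely cuts the dimension by one; this is exactly the polynomial analogue of the ``not in the span of $(a_1,\dots,a_m)$'' hypothesis in Theorem \ref{Raz 2}, and is given in the statement.
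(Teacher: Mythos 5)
Your proposal is correct and matches the paper's own proof essentially verbatim: the paper likewise transplants the counting argument of Theorem \ref{Raz 2}, using Lemma \ref{lagrange} to count $2^{\ell(d-m)}$ consistent polynomials, subtracting at most $p\,2^{\ell(d-m-1)}$ for the inequality constraints, and bounding the guessing probabilities by the same ratios for both parts. No gaps.
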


\section{Definitions: Obfuscation and Variants in the BQSM}

In this section, we adapt the notions of obfuscation and one-time programs to the BQSM and introduce a related notion termed program broadcast. 

First of all, obfuscation is a form of program encryption with many variants, as introduced in \cite{ABD21}. These definitions are unsatisfactory for our setting since adversaries in the BQSM differ from those considered in black/grey-box obfuscation. Hence, we introduce the notion of \emph{BQS obfuscation}. 

\begin{definition}[BQS Obfuscation]
\label{BQSBB defn}
A algorithm $O$ is a \emph{$(r,\texttt{s})$-BQS obfuscator} of the class of classical circuits $\mathcal{F}$ if it QPT and satisfies the following:
\begin{enumerate}
\item (functionality) For any circuit $C\in \mathcal{F}$, the circuit described by $O(C)$ can be used to compute $C$ on an input $x$ chosen by the evaluator. 
\item For any circuit $C\in \mathcal{F}$, the receiver requires $r$ qmemory to learn an evaluation of $C$ using $O(C)$.
\item (security) For any computationally unbounded adversary $\mathcal{A}_{\texttt{s}}$ there exists a computationally unbounded simulator $\mathcal{S}_{\texttt{s}}$ such that for any circuit $C\in \mathcal{F}$, 
\begin{align*}|\Pr{[\mathcal{A}_{\texttt{s}}(O(C))=1]}-\Pr{[\mathcal{S}_{\texttt{s}}^{C}(|0\rangle^{\otimes \lvert C\rvert})=1]}\rvert \leq \negl[\lvert C\rvert] .\end{align*}
\end{enumerate}
\end{definition}

One-time programs, introduced in \cite{BGS13}, are similar to obfuscation but can only be used to learn a single evaluation of the program. We adapt this notion to the BQSM. 

\begin{definition}[BQS One-Time Program]
\label{def:BQS one-time}
An algorithm $O$ is a \emph{$(r,\texttt{s})$-BQS one-time compiler} for the class of classical circuits $\mathcal{F}$ if it is QPT, satisfies the first three conditions of Definition \ref{BQSBB defn}, and the following:
\begin{enumerate}
\setcounter{enumi}{3}
\item (security) For any computationally unbounded adversary $\mathcal{A}_{\texttt{s}}$ there exists a computationally unbounded simulator $\mathcal{S}_{\texttt{s}}$ such that for any circuit $C\in \mathcal{F}$
\begin{align*}|\Pr{[\mathcal{A}_{\texttt{s}}(O(C))=1]}-\Pr{[\mathcal{S}_{\texttt{s}}^{1C}(|0\rangle^{\otimes \lvert C\rvert})=1]}\rvert \leq \negl[\lvert C\rvert] .\end{align*}
\end{enumerate}
\end{definition}

We introduce the notion of BQS program broadcast which is similar to one-time programs but additionally requires that multiple copies of the encrypted program can only be used to learn a limited number of evaluations. While one-time programs allow for symmetric key cryptography, program broadcast allows for powerful applications such as asymmetric cryptography and tokens. 

\begin{definition}[BQS Program Broadcast]
\label{def BQS program broadcast}
A \emph{$( q,\texttt{s}, k)$-BQS program broadcast} for the class of circuits $\mathcal{C}$ consists of the following QPT algorithms:
\begin{enumerate}
\item $\textsf{KeyGen}(1^\lambda,t_{\textnormal{end}}):$ Outputs a classical key $ek$. 
\item $\textsf{Br}(\s,ek,C):$ Outputs a quantum transmission $O_C$ for the circuit $C\in \mathcal{C}$ during broadcast time (before $ t_{\textnormal{end}}$). Outputs $ek$ after broadcast time.
\item $\textsf{Eval}(\langle O_C, ek\rangle, x):$ Outputs an evaluation $y$ on input $x$ from the transmission $\langle O_C, ek\rangle$ using $q$ qmemory.
\end{enumerate}
Correctness requires that for any circuit $C\in \mathcal{C}$ and input $x$,
\begin{align*} \Pr{\left[
\begin{tabular}{c|c}
 \multirow{2}{*}{$\textsf{Eval}(\langle O_C, ek\rangle, x)=C(x)\ $} &  $ek\ \leftarrow \textsf{KeyGen}(1^\lambda,t_{\textnormal{end}})$ \\ 
 & $O_C\ \leftarrow \textsf{Br}(\s,ek,C)$\\
 \end{tabular}\right]} \geq 1-\negl[\lambda] .
\end{align*}
Security requires that for any computationally unbounded adversary $\mathcal{A}_{\texttt{s}}$ there exists a computationally unbounded simulator $\mathcal{S}_{\texttt{s}}$ such that for any circuit $C\in \mathcal{C}$, and $ek\leftarrow \textsf{KeyGen}(1^\lambda,t_{\textnormal{end}})$,
\begin{align*}|Pr[\mathcal{A}_{\texttt{s}}^{\textsf{Br}(\s,ek,C)}(|0\rangle)=1]-Pr[\mathcal{S}_{\texttt{s}}^{kC}(|0\rangle^{\otimes \lvert C\rvert})=1]\rvert \leq \negl[\lambda] .\end{align*}
\end{definition}

Notice that security essentially requires that an adversary with access to polynomial outputs of the broadcaster can be simulated with access to $k$ oracle queries to $C$. The evaluation key $ek$ allows users to evaluate an output $O_C$ from the broadcast and only becomes public at the end of the broadcast. We introduce this key to ensure that an adversary cannot simply query the broadcast continuously to learn an unbounded number of evaluations. This key is revealed only at the end of the broadcast to ensure that the adversary needs to store the states $O_C$ in order to evaluate them with the key which limits the number of evaluations the adversary can learn. On the downside, this means the broadcast lasts for a limited time which is clearly unavoidable considering that adversaries are computationally unbounded. 

An interesting property of all these definitions is that it is sufficient to give the simulator access to only \emph{classical} oracle queries to the encrypted program. Indeed, we never need to deal with providing quantum oracle access in this paper. This attractive aspect to the BQSM is due to the natural ``classicalization" of quantum states as a result of the qmemory bound.

\section{Definitions: Disappearing Cryptography}
\label{sec:dis and uncl}
In this section, we initiate the study of disappearing cryptography in the BQSM. These concepts were defined earlier in \cite{GZ21,g02} but we adapt the definitions to the BQSM. We define these notions for oblivious transfer, one-time programs, and asymmetric key encryption. 

Informally, a state $|\phi\rangle$ is \emph{disappearing} if any adversary that receives $|\phi\rangle$ cannot produce a state with the same `functionality' as $|\phi\rangle$ after a certain point in the transmission. In the BQSM, this point is when the adversary's qmemory bound applies. 

\subsection{Oblivious Transfer}
\label{sec:OT dis and unc}

We define disappearing security for oblivious transfer and show that Construction \ref{con:QOT} in the BQSM (Construction \ref{con:QOT} in Sec.~\ref{sec:OT}) satisfies this notion. This will be useful in showing later that our one-time program, encryption, and authentication schemes also satisfy disappearing security.

With regards to oblivious transfer, functionality refers to the first condition of Definition \ref{QOT}. In particular, we present the following experiment to test the disappearing security of a $1\textsf{-}2\ OT$ protocol.

\smallskip \noindent\fbox{%
  \parbox{\textwidth}{%
\textbf{Experiment} $\textsf{OT}^{\textsf{Dis}}_{\Pi,\mathcal{A}}({m},\ell)$:
Let $\Pi(m,S_0,S_1)$ be the oblivious transfer protocol with security parameter $m$ applied to strings $S_0,S_1\in \{0,1\}^\ell$.
\begin{enumerate}
\item Sample two strings $s_0,s_1\in_R\{0,1\}^\ell$ and a bit $b\in_R\{0,1\}$.
\item Protocol $\Pi(m,s_0,s_1)$ is executed between the experiment and adversary $\adv$. (Experiment is the sender and $\adv$ is the receiver of the OT).
\item After the execution ends, send $b$ to $\adv$.
\item $\adv$ outputs a guess $\hat{s}_b$.
\item The output of the experiment is 1 if $\hat{s}_b=s_b$ and 0 otherwise.
\end{enumerate}}}
\smallskip
\begin{definition}
A $1\textsf{-}2\ OT$ protocol $\Pi_{\texttt{s}}$ is \emph{disappearing} in the BQSM if for any adversary $\advs$,
\begin{align*}
  \Pr{[ \textsf{\em{OT}}^{\textsf{Dis}}_{\Pi_{\texttt{s}},\advs}({m},\ell)=1]}\leq \frac{1}{2}+\negl[\min(m,\ell)] .
\end{align*}
\end{definition}

\begin{theorem}
\label{disappearing OT}
Construction \ref{con:QOT} $(\otqot{m,\ell})$ is disappearing against adversaries with qmemory bound $\texttt{s}$ satisfying $m/4-2\ell -\texttt{s}\in \Omega(m)$.
\end{theorem}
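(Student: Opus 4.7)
My strategy is to reduce the disappearing property directly to the sender-security of Construction \ref{con:QOT} established in Theorem \ref{QOT}. The key structural observation is that the internal memory-bound step of the OT protocol occurs \emph{before} the sender's announcement of $(\theta,f_0,f_1,e_0,e_1)$, so by the time the disappearing experiment reveals the challenge bit $b$ to the adversary the entire OT (including its own memory bound) has already concluded. Moreover, $b$ is sampled uniformly and independently of the protocol transcript, and the crucial classical variable $C'$ extracted by sender-security is a function of $\advs$'s actions during step~2 of the OT, hence is already fixed when $b$ arrives.

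First I would invoke Theorem \ref{QOT}: since $m/4-2\ell-\texttt{s}\in\Omega(m)$, Construction \ref{con:QOT} is $\epsilon$-sender-secure for some $\epsilon$ negligible in $m$. This yields a binary random variable $C'$, determined by $\advs$'s measurement strategy on the received qubits, such that
\begin{equation*}
\delta\bigl(\rho_{S_{1-C'}S_{C'}C'\advs},\;\rho_{S_{1-C'}}\otimes\rho_{S_{C'}C'\advs}\bigr)\le\epsilon .
\end{equation*}
Because $s_0,s_1$ are drawn uniformly at random in the experiment, the marginal $\rho_{S_{1-C'}}$ is maximally mixed on $\{0,1\}^\ell$; consequently, conditioned on $(C',S_{C'},\advs)$, the string $S_{1-C'}$ is within trace distance $\epsilon$ of uniform, so any guess for $s_{1-C'}$ succeeds with probability at most $2^{-\ell}+\epsilon$.

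Next I would bound the experiment's success probability by conditioning on whether $b=C'$. Since $C'$ is pinned down before $b$ is sampled and $b$ is uniformly random and independent of the transcript, $\Pr[b=C']=\tfrac{1}{2}$. When $b=C'$ I use the trivial upper bound~$1$; when $b\neq C'$, the adversary must predict $s_{1-C'}$, which by the previous paragraph succeeds with probability at most $2^{-\ell}+\epsilon$. Combining,
\begin{equation*}
\Pr\bigl[\textsf{OT}^{\textsf{Dis}}_{\otqot{m,\ell},\advs}(m,\ell)=1\bigr]\;\le\;\tfrac{1}{2}\cdot 1+\tfrac{1}{2}\bigl(2^{-\ell}+\epsilon\bigr)\;=\;\tfrac{1}{2}+\negl[\min(m,\ell)] ,
\end{equation*}
since $\epsilon$ is negligible in $m$ and $2^{-\ell}$ is negligible in $\ell$, which gives the claim.

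I do not anticipate a serious obstacle: the argument is essentially packaging sender-security together with the independence of the uniformly random challenge bit $b$ from the variable $C'$ that is fixed during the OT execution. The only subtlety worth spelling out is that $C'$ must be extractable strictly before $b$ is revealed; this is guaranteed because the memory bound of Construction \ref{con:QOT} is internal to the OT (applied between steps~2 and~3), so by the time the disappearing experiment hands $b$ to $\advs$ the sender-security extraction is already complete.
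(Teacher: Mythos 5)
Your proposal is correct and follows essentially the same route as the paper's own (much terser) proof: both reduce to the sender-security guarantee of Theorem \ref{QOT}, observe that the extracted bit $C'$ is fixed by the adversary's actions during the OT and hence independent of the uniformly sampled challenge $b$, and conclude that with probability $\tfrac{1}{2}$ the adversary must guess the string $S_{1-C'}$ that is $\epsilon$-close to uniform given its residual state. Your write-up merely makes explicit the conditioning on $b=C'$ versus $b\neq C'$ and the resulting $\tfrac{1}{2}+\tfrac{1}{2}(2^{-\ell}+\epsilon)$ bound, which the paper leaves implicit.
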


\begin{proof}
This is a direct result of the proof of Theorem \ref{QOT}. It is shown that for any adversary $\adv_{\texttt{s}}$, after the memory bound applies, there exists $c\in\{0,1\}$ such that $s_{1-c}$ is close to random given the adversary's quantum state of $\texttt{s}$ qubits if $m/4-2\ell -\texttt{s}$ is positive and linear in $m$. Hence, except with negligible probability, the output of the experiment will be 0 if $b\neq c$. 
\qed 
\end{proof}

\subsection{One-Time Programs}
\label{def dis otp}

We present the following experiment to define disappearing security for one-time programs. Intuitively, the experiment checks whether an adversary that receives a one-time program can retrieve the evaluation on a random input $x$ that is only revealed after the program. The experiment focuses on matrix branching programs but the same experiment can be applied to any circuit class where sampling can be done efficiently. 

\smallskip \noindent\fbox{%
  \parbox{\textwidth}{%
\textbf{Experiment} $\textsf{1TP}^{\textsf{Dis}}_{\Pi,\mathcal{A}}({m},n,\ell)$:
Let $\Pi(m, P)$ be the one-time program protocol with security parameter $m$ on the matrix branching program ${P}:\{0,1\}^n\rightarrow \{0,1\}^\ell$.
\begin{enumerate}
\item Sample $x\in_R \{0,1\}^n$ and a matrix branching program ${P}:\{0,1\}^n\rightarrow \{0,1\}^\ell$ uniformly at random.
\item Protocol $\Pi(m,P)$ is executed between the experiment and adversary $\adv$.
\item After the execution ends, send $x$ to $\adv$.
\item $\adv$ outputs a guess ${p}$ for $P(x)$.
\item The output of the experiment is 1 if ${p}=P(x)$ and 0 otherwise.
\end{enumerate}}}
\smallskip

\begin{definition}
  A one-time program protocol $\Pi_{\texttt{s}}$ is disappearing in the BQSM if for any adversary $\advs$,
\begin{align*}
  \Pr{[\textsf{\em{1TP}}^{\textsf{Dis}}_{\Pi_{\texttt{s}},\advs}({m},n,\ell)=1]}\leq \frac{1}{2^\ell}+\negl[\min(n,m)] .
\end{align*}
\end{definition}

\subsection{Encryption}
\label{sec:enc def}

We first recall the definition of a quantum asymmetric (public) key encryption scheme on classical messages. Note that the public key in this setting is quantum so multiple copies must be created and distributed due to the no-cloning theorem. Hence, we add an algorithm $\textsf{KeySend}$ that outputs a copy of the quantum public key when queried. In our security experiment, the adversary is allowed to receive a polynomial number of public key copies.

We also introduce an algorithm $\textsf{KeyReceive}$ which describes how to extract a reusable classical key from a public key copy to use for encryption. It is not generally required that users be able to extract a classical encryption key as the quantum public key may be used directly for encryption. However, constructions with reusable classical encryption keys are generally more practical. In this work, it is assumed that the quantum public keys can be distributed securely. Note that further work was done concurrently with this paper showing how to certify quantum keys from classical public-key infrastructure \cite{BS232}. 

\begin{definition}[Quantum Asymmetric Key Encryption]
A \emph{quantum asymmetric key encryption scheme} $\Pi$ over classical message space $\hil{M}$ consists of the following QPT algorithms: 
\begin{itemize}
  \item $\textsf{Gen}(1^\lambda):$ Outputs a private key $sk$. 
  \item $\textsf{KeySend}(sk):$ Outputs a quantum public key copy $\rho_{pk}$.
  \item $\textsf{KeyReceive}(\rho_{pk}):$ Extracts a key $k$ from $\rho_{pk}$. 
  \item $\textsf{Enc}(k,\mu):$ Outputs a ciphertext $\rho_{ct}$ for $\mu \in \hil{M}$.
  \item $\textsf{Dec}(sk, \rho_{ct})$: Outputs a message $\mu'$ by decrypting $\rho_{ct}$. 
\end{itemize}
\end{definition}

\begin{definition}[Correctness]
A quantum asymmetric key encryption scheme $\Pi$ is \emph{correct} if for any message $\mu \in \hil{M}$,
\begin{align*} \Pr{\left[
\begin{tabular}{c|c}
 \multirow{4}{*}{$\textsf{Dec}(sk,\rho_{ct})=\mu\ $} &  $sk\ \leftarrow \textsf{Gen}(1^\lambda)$ \\ 
 & $\rho_{pk}\ \leftarrow \textsf{KeySend}(sk)$\\
 & $k\ \leftarrow \textsf{KeyReceive}(\rho_{pk})$\\
 & $\rho_{ct}\ \leftarrow \textsf{Enc}(k,\mu)$\\
 \end{tabular}\right]} \geq 1-\negl[\lambda] .
\end{align*}
\end{definition}

We now present an experiment to test disappearing security under lunchtime chosen ciphertext attack (qDCCA1) in the BQSM. Recall, IND-CCA1 \cite{CS98} tests security against an adversary that can query the encryption oracle at any point in the experiment and can query the decryption oracles at any point before the challenge ciphertext is received. To satisfy \textit{disappearing} security under CCA1, we require that an adversary cannot win the IND-CCA1 experiment even if the private key is revealed after the challenge ciphertext is given. This does not make sense in the standard model since the adversary can simply store the challenge ciphertext and then decrypt it using the private key. However, in the BQSM this strong form of encryption is possible since the adversary may not be able to store the ciphertext for later use. In the experiment, we let $A^{\textsf{KeySend}(sk)}$ denote that the algorithm $A$ has access to polynomial number of outputs of $\textsf{KeySend}(sk)$.

\smallskip \noindent\fbox{%
  \parbox{\textwidth}{%
\textbf{Experiment} $\textsf{AsyK}^{\textsf{qDCCA1}}_{\Pi,\mathcal{A}}({\lambda})$:
\begin{enumerate}
  \item Sample a private key $sk\leftarrow \textsf{Gen}(1^\lambda)$ and bit $b\in_R \{0,1\}$.
  \item Generate a public key $\rho_{pk} \leftarrow \textsf{KeySend}(sk)$.
  \item Extract key $k \leftarrow \textsf{KeyReceive}(\rho_{pk})$.
  \item Adversary outputs two messages $m_0,m_1 \leftarrow \mathcal{A}^{\textsf{KeySend}(sk),\textsf{Enc}(k,\cdot),\textsf{Dec}(sk,\cdot)}$.
  \item Send $\adv^{\textsf{KeySend}(sk),\textsf{Enc}(k,\cdot)}$ the ciphertext $\rho_{ct} \leftarrow \textsf{Enc}(k,m_b)$.
  \item Give $\adv^{\textsf{KeySend}(sk),\textsf{Enc}(k,\cdot)}$ the private key $sk$.
  \item $\adv^{\textsf{KeySend}(sk),\textsf{Enc}(k,\cdot)}$ outputs a guess $b'$.
  \item The output of the experiment is $1$ if $b'=b$ and 0 otherwise. 
  \end{enumerate}}}
  \smallskip 

A stronger notion is indistinguishability under adaptive chosen ciphertext attacks (IND-CCA2) which allows the adversary in the security experiment unrestricted access to the encryption and decryption oracles except on the challenge ciphertext. This notion was later adapted to quantum ciphertexts in \cite{AGM18}. However, we could not achieve IND-CCA2 in the BQSM, as we found it difficult to prevent the adversary from performing small modifications to the challenge ciphertext without affecting the decryption outcome, and then submitting the modified ciphertext to the decryption oracle. Note that such an attack does not require quantum memory since the adversary has access to the decryption oracle as it receives the challenge ciphertext in this scenario. Another variant is indistinguishability under chosen quantum ciphertext attacks \cite{BZ13,GSM20}, which allows the adversary to query the oracles on a superposition. However, we did not explore this notion and, in this work, we focus on CCA1-security and leave it as an open question to achieve these stronger notions. That being said, qDCCA1-security comes very close to CCA2-security since the adversary has the ability to decrypt when the private key is revealed. 

We can similarly construct disappearing experiment in the symmetric key case, denoted $\textsf{SymK}^{\textsf{qDCCA1}}_{\Pi,\mathcal{A}}({\lambda})$ by deleting steps 2 \& 3 and replacing $k$ with $sk$ in the asymmetric experiment.

\begin{definition}[Security]
An asymmetric key encryption scheme $\Pi_{\texttt{s}}$ satisfies \emph{disappearing indistinguishability against chosen ciphertext attacks (qDCCA1)} if for any adversary $\advs$,
\begin{align*}
  \Pr{[\textsf{AsyK}^{qDCCA1}_{\Pi_{\texttt{s}},{\advs}}({\lambda})=1]} &\leq \frac{1}{2}+\negl[\lambda] ,
\end{align*}
\end{definition}

\subsection{Authentication}
\label{sec:sig def}

We define a signature scheme with quantum keys on classical messages. 

\begin{definition}[Signature]
A \emph{signature scheme} with quantum public keys over a classical message space $\hil{M}$ consists of the following tuple of QPT algorithms:
\begin{itemize}
  \item $\textsf{Gen}(1^\lambda):$ Outputs a private key $sk$.
  \item $\textsf{KeySend}(sk):$ Output a quantum public key $\rho_{pk}$.
  \item $\textsf{KeyReceive}(pk):$ Extract key $k$ from $\rho_{pk}$. 
  \item $\textsf{Sign}( sk,\mu):$ Outputs a signature $\rho_{\sigma}$ for message $\mu\in \hil{M}$.
  \item $\textsf{Verify}(k, \mu,\rho_{\sigma}):$ Verifies $\rho_{\sigma}$ is a signature for $\mu$ and correspondingly outputs a bit $b$. 
\end{itemize}
\end{definition}

A \emph{message authentication scheme (MAC)} is the same except without a public key so verification is also performed using the private key. 

\begin{definition}[Correctness]
We say a signature scheme with quantum public keys on classical messages is \emph{correct} if for any message $\mu\in \hil{M}$,
\begin{align*} \Pr{
\left[\begin{tabular}{c|c}
 \multirow{4}{*}{$\textsf{Verify}(k,\mu, \rho_{\sigma})=1\ $} & $sk\ \leftarrow \textsf{Gen}(1^\lambda)$ \\
  & $\rho_{pk}\ \leftarrow \textsf{KeySend}(sk)$\\
 & $k\ \leftarrow \textsf{KeyReceive}(\rho_{pk})$\\
 & $\rho_{\sigma}\leftarrow \textsf{Sign}(sk,\mu)$
 \end{tabular}\right]} \geq 1-\negl[\lambda] .
\end{align*}
\end{definition}

Informally, a signature is disappearing if an adversary cannot produce a valid signature for a message $\mu$ even if it has received a signature for $\mu$ earlier.

We present the following experiment to test disappearing unforgeability under chosen message and verification attacks. 

\smallskip \noindent\fbox{%
  \parbox{\textwidth}{%
\textbf{Experiment} $\textsf{Sign}^{\textsf{qDCMVA}}_{\Pi, \mathcal{A}}({\lambda})$:
\begin{enumerate}
  \item Sample a private key $sk\leftarrow \textsf{Gen}(1^\lambda)$.
  \item Generate a public key $\rho_{pk} \leftarrow \textsf{KeySend}(sk)$.
  \item Extract key $k \leftarrow \textsf{KeyReceive}(\rho_{pk})$.
  \item Run $\mathcal{A}^{\textsf{KeySend}(sk),\textsf{Sign}(sk,\cdot),\textsf{Verify}(k,\cdot)}$.
  \item Adversary outputs a pair $(\mu, \rho_{\sigma} )\leftarrow \mathcal{A}^{\textsf{KeySend}(sk),\textsf{Verify}(k,\cdot)}$.
  \item The output of the experiment is $1$ if $\textsf{Verify}(k,\mu, \rho_{\sigma})=1$ and 0 otherwise.
\end{enumerate}}}
\smallskip

We can similarly construct disappearing security for MAC schemes, denoted $\textsf{MAC}^{\textsf{qDCMVA}}_{\Pi, \mathcal{A}}({\lambda})$, by deleting step 2 \& 3 and replacing $k$ with $sk$ in the signature experiment.

\begin{definition}[Security]
A signature scheme $\Pi_{\texttt{s}}$ satisfies \emph{disappearing security under chosen message and verification attacks (qDCMVA)} if for any adversary $\advs$, 
\begin{align*}
  \Pr{[\textsf{Sign}^{\textsf{qDCMVA}}_{\Pi_{\texttt{s}},\advs}({\lambda})=1]}&\leq \negl[\lambda] ,\end{align*}
\end{definition}

\section{One-Time Programs}
\label{sec:one-time short}

In this section, we give the result for an information-theoretic secure one-time program in the BQSM. The construction is provided in App.~\ref{sec:one-time construction} as it is long and technical. Roughly speaking, it follows the same ideas as Kilian's construction for oblivious circuit evaluation \cite{k88} based on $1\textsf{-}2$ oblivious transfer. For our setting, we do not need to assume oblivious transfer as Construction \ref{con:QOT} already achieves this unconditionally. However, the proof in our setting is more difficult since constructing a simulator requires determining the chosen string received in each oblivious transfer transmission employed in the one-time program. We provide a new extraction result for this purpose:

\begin{lemma}[Extraction]
\label{determin C}
Let $A$ be a quantum circuit with $n$-qubit input and let $x\in_R \{0,1\}^n$ and $\theta \in_R \{+,\times\}^n$ be chosen uniformly at random. Run $A$ on $|x\rangle_\theta$, and let $X$ represent the random variable for $x$ from the perspective of $A$ at the end of the computation. Then, there exists a computationally-unbounded algorithm $S$ that can (inefficiently) approximate the distribution $X$, given $A$ and the measurement results of the computation $A(\ket{x}_\theta)$. 
\end{lemma}

To prove this result, first we state Kraus' Theorem which classifies all operations that a quantum adversary can perform.  

\begin{theorem}[Kraus' Theorem \cite{NC02}]
\label{Kraus}
Let $\Phi$ be a quantum operation between two Hilbert spaces $\hil{H}$ and $\hil{G}$ of dimension $n$ and $m$, respectively. Then there exists matrices $\{B_i\}_{1\leq i\leq nm}$ which map $\hil{H}$ to $\hil{G}$ such that for any state $\rho$
\begin{align*}
  \Phi(\rho)=\sum_{1\leq i\leq nm}B_i\rho B_i^{*} .
\end{align*}
\end{theorem}

Now we are ready to prove Lemma \ref{determin C}

\begin{proof}
The first step is to convert (or approximate) the quantum circuit $A$ with a classical version $A^c$. A $n$-qubit state can be represented as a vector over the complex numbers while a quantum operation from $\hil{H}_{N}\rightarrow \hil{G}_{N'}$ can be represented with $NN'$ matrices of dimension $N\times N'$ by Kraus' Theorem \ref{Kraus}. Hence, $S$ can construct $A^c$ and evaluate it on any input. 

Assume the observed measurement results during the evaluation of $A$ on $|x\rangle_\theta$ are $m_1,m_2,...,m_k$. $S$ runs $A^c$ on every of the input $\ket{x'}_{\theta'}$ in vector form for $x'\in \{0,1\}^n$ and $\theta'\in \{+,\times\}^n$ and computes the probability of obtaining the measurement results $m_1,..., m_k$, i.e., $\Pr{[m_1,...,m_k|X=x'\Theta=\theta']}$ as described in the following. 

$S$ computes the evolution of the vector representation of $\ket{x'}_{\theta'}$ during the evaluation on $A_c$ by performing appropriate matrix multiplications (Kraus' Theorem \ref{Kraus}) and for the $i^{th}$ measurement, $S$ computes the probability,
\begin{align*}
  p(m_i)= v^{*}M_{m_i}^{*}M_{m_i}v
\end{align*}
where $v$ is the vector representing the state prior to the measurement and $M_i$ is the POVM element associated with the measurement outcome $m_i$. $S$ then computes the post-measurement state $\frac{M_{m_i}v}{p(m_i)}$ and continues the computation. At the end $S$ obtains an approximation of $\Pr{[m_1,...,m_k|X=x'\Theta=\theta']}$. $S$ repeats this computation for every possible value of $\theta'$ to obtain $\Pr{[m_1,...,m_k|X=x']}$ after summing over all values of $\theta'$. Next, by Baye's Theorem, 
\begin{align*}
\Pr{[X=x'|m_1,...,m_k]} &= \frac{\Pr{[X=x']}}{\Pr{[m_1,...,m_k]}}\Pr{[m_1,...,m_k|X=x']} \\
&=\frac{\Pr{[m_1,...,m_k|X=x']}}{2^n\Pr{[m_1,...,m_k]}} .
\end{align*}
so normalization gives the required distribution.
\qed
\end{proof}

Applied to an oblivious transfer transmission $1\textsf{-}2\textsf{-}\textsf{QOT}(s_0,s_1)$, Lemma \ref{determin C} implies that a simulator can learn the distribution of the two encoded strings $s_0$ and $s_1$ from the algorithm of the adversary. Based on the entropy of the calculated distributions, the simulator can determine which 1-out-of-2 strings the adversary is oblivious to. The rest of the construction basically just adapts Kilian's ideas to the quantum realm; the details of which are not the main focus of this paper and thus are given in App.~\ref{sec:one-time construction}.

Consequently, we construct a one-time program compiler, which we denote as $\mathcal{O}_{\texttt{s}}$, in the BQSM. Note that, by definition, $\mathcal{O}_{\texttt{s}}$ also acts as a BQS obfuscator. Our construction is also disappearing meaning a one-time program cannot be evaluated after the transmission ends. This is a direct result of disappearing properties of Construction \ref{con:QOT} for $1\textsf{-}2$ oblivious transfer which we showed in Sec.~\ref{sec:OT dis and unc}.

\begin{theorem}
\label{main poly}
There exists an algorithm $\mathcal{O}_{\texttt{s}}$ that is a disappearing information-theoretically secure $(0,\texttt{s})$ one-time compiler for the class of polynomial classical circuits against any computationally unbounded adversary with $\texttt{s}$ qmemory bound. 
\end{theorem}

\section{Impossibility of Non-Disappearing Obfuscation}
\label{sec:impossible}

A point of discussion is the disappearing nature of our protocol. After the transmission ends, the program would disappear due to the user's limitations, and re-evaluating the program on a new input would not be possible. On one hand, this extra security allows for further applications like disappearing ciphertexts (see Sec.~\ref{sec:encryption}). On the other hand, it is inconvenient when users wish to continually perform evaluations. In this scenario, the sender is required to send the one-time program multiple times whenever requested by the user. It is natural to wonder whether it is possible to construct one-time programs or obfuscations which do not disappear. In the rest of this section, we show that this is generally not possible. There are 3 different scenarios that need to be considered separately: 
\begin{enumerate}
  \item (Black-box obfuscation in (computational) BQSM) $\mathcal{A}_{\texttt{s}}$ and $\mathcal{S}_{\texttt{s}}$ are QPT algorithms with $\texttt{s}$ qmemory bound.
  \item (Grey-box obfuscation in (computational) BQSM) $\mathcal{A}_{\texttt{s}}$ is QPT and $\mathcal{S}_{\texttt{s}}$ is computationally unbounded, and both have $\texttt{s}$ qmemory bound.
  \item (BQS obfuscation) $\mathcal{A}_{\texttt{s}}$ and $\mathcal{S}_{\texttt{s}}$ are both computationally unbounded with $\texttt{s}$ qmemory bound. 
\end{enumerate}

Note that there exists a plausible construction in the computational setting for grey-box obfuscation \cite{BCK17}. This construction can be adapted to the BQSM setting, since it is classical, to obtain non-disappearing grey-box obfuscation. Given this result, it seems likely that the second case is not impossible. The rest of this section shows that non-disappearing obfuscation is impossible in the first and third cases. 

If an obfuscation is non-disappearing then it means that an honest user can extract some state from the obfuscation that can be stored and used to evaluate the program. In particular, it is sufficient to show that there does not exist a state $\rho_C$ that: $(1)$ can be extracted from an obfuscation of $C$, $(2)$ can be used to evaluate $C$ and $(3)$ can be feasibly stored. The last condition essentially requires that the size of $\rho_C$ is smaller than $\s$ qubits, where $\s$ is the qmemory required to threaten the security of the scheme. 

\begin{theorem}
\label{bb}
Let $O_{\texttt{s}}$ be a quantum black-box obfuscator of the class $\mathcal{C}$ of polynomial classical circuits. Assuming LWE is quantum hard, there do not exist QPT algorithms $\mathcal{A}_{\texttt{s}},\advs'$ such that for any $C\in \mathcal{C}$:
\begin{enumerate}
  \item $\mathcal{A}_{\texttt{s}}$ outputs $\rho_C\leftarrow \mathcal{A}_{\texttt{s}}(O_{\texttt{s}}(C))$ where $\rho_C$ is a cq-state with at most $\texttt{s}$ qubits.
  \item For any input $x$, 
\begin{align*}
\| \mathcal{A}'_{\texttt{s}}(\rho_C \otimes \proj{x})-\proj{C(x)} \|_1 \leq \negl[\lvert C\rvert] .
\end{align*}
\end{enumerate}
\end{theorem}

\begin{proof}
    Assume that there exist algorithms $\mathcal{A}_{\texttt{s}}$ and $\advs'$ satisfying the above conditions. Notice the $\texttt{s}$ qubit state $\rho_C$ acts as a black-box obfuscation of $C$. Importantly, $\rho_C$ does not disappear when the qmemory bound of $\mathcal{A}'_{\texttt{s}}$ applies since $\rho_C$ is $\texttt{s}$ qubits. Under the assumption that LWE is quantum hard, Alagic et al. \cite{ABD21} showed that quantum black-box obfuscation for classical circuits is impossible against QPT adversaries that can store the entire obfuscation. 
\qed
\end{proof}

For the third case, we show that non-disappearing BQS obfuscation for the small class of interval functions is impossible unconditionally. An interval function is a function that evaluates to 1 on some interval, such as $x\in [n_0,n_1]$ where $n_0<n_1$, and 0 everywhere else. This is a very strong impossibility result since interval functions are a very simple class. 

\begin{theorem}
\label{qbb}
Let $O_{\texttt{s}}$ be a BQS obfuscator of the class $\mathcal{I}$ of interval functions. There do not exist (computationally unbounded) algorithms $\mathcal{A}_{\texttt{s}},\advs'$ such that for any $C\in \mathcal{I}$: 
\begin{enumerate}
  \item $\mathcal{A}_{\texttt{s}}$ outputs $\rho_C\leftarrow \mathcal{A}_{\texttt{s}}(O_{\texttt{s}}(C))$ where $\rho_C$ is a cq-state with at most $\texttt{s}$ qubits.
  \item For any input $x$, 
\begin{align*}
\| \mathcal{A}'_{\texttt{s}}(\rho_C \otimes \proj{x})-\proj{C(x)} \|_1 \leq \negl[\lvert C\rvert] .
\end{align*}
\end{enumerate}
\end{theorem}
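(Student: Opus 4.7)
The plan is to contradict the BQS obfuscation security of $O_{\texttt{s}}$ by converting the hypothesised $(\mathcal{A}_{\texttt{s}}, \mathcal{A}'_{\texttt{s}})$ into a BQS adversary $\mathcal{B}$ that extracts super-polynomially much information about a random interval function $C$, which no polynomial-query simulator can reproduce. The key observation is that the stored $\texttt{s}$-qubit cq-state $\rho_C$, together with the channel $\mathcal{A}'_{\texttt{s}}$ evaluating $C$ correctly with probability $1-\negl[\lambda]$ on any chosen input, can be coherently reused via the gentle measurement lemma (with amplification) as an effective super-polynomial oracle for $C$, whereas any BQS simulator is by definition limited to polynomially many oracle queries.

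Concretely, I would restrict attention to the subclass $\mathcal{I}^\star\subseteq\mathcal{I}$ of ``thin'' intervals $C_{n_0}(x)=\mathds{1}\{x\in[n_0,n_0+t]\}$ with $n_0\in_R[0,N-t]$, where $N:=2^\lambda$ and the length $t=t(\lambda)$ is chosen as an explicit function of the negligible evaluation error $\epsilon(\lambda)$ so that (i)~a super-polynomial number $T=T(\epsilon)$ of uniform random queries hits the interval with probability $1-e^{-\Omega(\lambda)}$ and (ii)~every fixed polynomial $q$ of queries misses it with probability $1-\negl[\lambda]$ over the random draw of $n_0$. The adversary $\mathcal{B}$ then applies $\mathcal{A}_{\texttt{s}}$ to $O_{\texttt{s}}(C_{n_0})$ and stores $\rho_C$ in its $\texttt{s}$-qubit memory; for each chosen input $x$, it dilates $\mathcal{A}'_{\texttt{s}}$ into a Stinespring unitary acting on $\rho_C$ together with $\proj{x}$ and a fresh ancilla in $\proj{0}$, coherently amplifies the output-register POVM to failure probability far below $1/T^2$ by $O(\log T)$ repetitions, measures to read off $C(x)$, then uncomputes the whole circuit so that by the gentle measurement lemma the $\rho_C$-register's per-query disturbance in trace distance is $o(1/T)$; iterating for $T$ independent random inputs accumulates only negligible total drift. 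With overwhelming probability some queried $x^\star$ lies in $[n_0,n_0+t]$, after which an $O(\lambda)$-step binary search from $x^\star$ pins down $n_0$ exactly, and $\mathcal{B}$ outputs its first bit $b(n_0)$.

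For every $n_0\in[0,N-t]$ one has $\Pr[\mathcal{B}(O_{\texttt{s}}(C_{n_0}))=b(n_0)]\geq 1-\negl[\lambda]$, so the BQS security of $O_{\texttt{s}}$ forces the simulator $\mathcal{S}^{C_{n_0}}_{\texttt{s}}$ it supplies---using some fixed polynomial $q$ of oracle queries---to also output $b(n_0)$ with probability $\geq 1-\negl[\lambda]$ for every $n_0$, hence averaged over uniformly random $n_0$ as well. But conditioned on none of $\mathcal{S}$'s $q$ queries landing in $[n_0,n_0+t]$---an event of probability $1-\negl[\lambda]$ over the random choice of $n_0$ by the design of $t$---every oracle response is $0$, so $\mathcal{S}$'s output is a function only of its internal randomness and is independent of the uniformly distributed bit $b(n_0)$; the simulator therefore matches $b(n_0)$ with probability at most $1/2+\negl[\lambda]$, contradicting the previous bound for large $\lambda$.

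The step I expect to require the most technical care is the reuse argument: $\mathcal{A}'_{\texttt{s}}$ is a general quantum channel rather than a projective measurement, $\rho_C$ is cq rather than pure, and naively iterating the gentle measurement lemma alone supports only a very limited number of reliable reuses before the accumulated drift dominates. Overcoming this requires carefully dilating $\mathcal{A}'_{\texttt{s}}$ into a unitary, amplifying each evaluation by a coherent repetition-and-majority-vote procedure so that the per-query failure is driven far enough below $1/T^2$ that both the per-query and the cumulative disturbance bounds remain negligible, and tying the attack parameters $(T,t)$ and amplification depth explicitly to the decay rate of $\epsilon(\lambda)$ so that both the drift budget and the polynomial-vs-super-polynomial query gap are robust to how slowly $\epsilon$ decays.
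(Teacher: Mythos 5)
Your proposal follows essentially the same route as the paper's proof: pick a randomly placed interval whose width is tied to the (negligible) evaluation error, reuse the stored state $\rho_C$ a super-polynomial number of times via the gentle measurement lemma to locate the interval, and observe that a simulator restricted to polynomially many oracle queries misses the interval with overwhelming probability, yielding the contradiction. The one place you diverge is the step you yourself flag as most delicate: the coherent repetition-and-majority-vote amplification intended to push the per-query failure below $1/T^2$ so that the adversary succeeds with probability $1-\negl[\lambda]$. That layer is both unnecessary and, as described, not clearly sound --- the repeated runs of the dilated $\mathcal{A}'_{\texttt{s}}$ act on the same progressively disturbed $\rho_C$ register, so their errors are not independent and a Chernoff-style reduction of the failure probability does not follow. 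The paper sidesteps this entirely: with per-query disturbance $1/G(n)$ (where $1/G(n)$ is the negligible evaluation error) it takes exactly $T=G(n)/4$ reuses, one per candidate interval, and settles for overall success probability $(1-2/G(n))^{G(n)/4}\geq 1/e$. A constant success probability against a simulator that succeeds only negligibly already violates the $\negl[\lvert C\rvert]$ bound in the security definition, so nothing stronger is needed. If you drop the amplification and set your query budget $T$ to be on the order of the inverse of the evaluation error (accepting constant rather than near-certain success), your argument goes through and coincides with the paper's.
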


\begin{proof}
Assume there exist algorithms $\adv_{\texttt{s}}$ and $\advs'$ satisfying the above conditions. Let $I_{[x_0,y_1]}(x)\in \mathcal{I}$ denote an interval function which takes input $x$ and strings $y_0,y_1$ of length $n$ and outputs 1 if $x\in [y_0,y_1]$ and 0 otherwise. There exists a general circuit for this function which can be used for all values $y_0$ and $y_1$. 

As a result, there exists a super-polynomial function $G$ such that for any interval function $I_{[y_0,y_1]}$, $\mathcal{A}_{\texttt{s}}$ outputs $\rho_{I_{[y_0,y_1]}}\leftarrow \mathcal{A}_{\texttt{s}}(O_{\texttt{s}}(I_{[y_0,y_1]}))$ and,
\begin{align*}
\| \mathcal{A}'_{\texttt{s}}(\rho_{I_{[y_0,y_1]}}\otimes |x\rangle\langle x|)-|I_{[y_0,y_1]}(x)\rangle \langle I_{[y_0,y_1]}(x)| \|_1 \leq \frac{1}{G(n)} .
\end{align*}

Pick a random integer $k\in [0,G(n)^{1/4}]$ and let $C\coloneqq I_{[\frac{k2^{n+2}}{G(n)}, \frac{(k+1)2^{n+2}}{G(n)}]}$. We now show that $\mathcal{A}_{\texttt{s}}$ and $\adv'$ can learn more about $C$ from $O_{\texttt{s}}(C)$ than a simulator can with polynomial number of queries to an oracle of $C$.

Given $O_{\texttt{s}}(C)$, the adversary produces a state $\rho_C\leftarrow \mathcal{A}_{\texttt{s}}(O_{\texttt{s}}(C))$. For any input $x_0$, $\mathcal{A}'_{\texttt{s}}$ can obtain a state near $|C(x_0)\rangle \langle C(x_0)|$ using $\rho_C$. Measuring the resulting state gives the evaluation $C(x_0)$ with high probability. Next, by the gentle measurement lemma \cite{W99}, $\advs'$ can perform the inverse operations and obtain a state $\rho_1$ such that,
\begin{align*}
  \| \rho_1-\rho_C\|_1 \leq \frac{1}{G(n)}
\end{align*}
since trace is invariant under unitary transformations. So, by the triangle inequality, for any input $x_1$,
\begin{align*}
\| \mathcal{A}'_{\texttt{s}}(\rho_1\otimes |x_1\rangle\langle x_1|)-|C(x_1)\rangle \langle C(x_1)| \|_1 \leq \frac{2}{G(n)} .\end{align*}
So $\advs'$ can obtain $C(x_1)$ with high probability. 

Therefore, this process can be repeated and $\advs'$ can, with probability at least $(1-\frac{m^2}{G(n)})^{m}$ obtain $m$ correct evaluations of $C$. For $m=G(n)^{1/4}$, this probability is larger than $1/3$, implying that $\adv$ has a non-negligible chance of determining $k$. On the other hand, a simulator with polynomial queries to the program $C$ has a negligible chance of determining $k$. 
\qed
\end{proof}

\section{Program Broadcast}
\label{sec:broadcast}

In this section, we construct a BQS program broadcaster as introduced in Definition \ref{def BQS program broadcast}. This will allow us to tackle asymmetric key encryption and authentication in later sections. Let $\mathcal{C}_{m}=\{\mathcal{C}_{n,m}\}_{n\leq 2^m}$ where $\mathcal{C}_{n,m}$ is a set of polynomial-size circuits in $n$, of input size $n$, and output size $m$. Note that any polynomial-size circuits belongs to such a class by adding null outputs to ensure that $n\leq 2^m$.  

\begin{construct}[BQS Program Broadcaster]
\label{con:broadcaster}
{\small The $(12m,\texttt{s},\frac{\texttt{s}}{2m})$-BQS program broadcast scheme for the class $\mathcal{C}_m$ until time $t_{\textnormal{end}}$ is as follows:
\begin{itemize}
  \item $\textsf{KeyGen}(1^\lambda,t_{\textnormal{end}})$: Choose uniformly at random a $(12m)\times (12m)$ binary matrix $ M$ and let $F(x)\coloneqq M\cdot x$ be the corresponding program. Choose a two-universal hash function $H\in_R \textsf{H}_{12m,m}$. Output $ek\coloneqq (M,H,t_{\textnormal{end}})$.
 
  \item $\textsf{Br}(\s,ek,P):$
  Let $P\in \mathcal{C}_m$. If queried before time $t_{\textnormal{end}}$, then:
  \begin{enumerate}
    \item Randomly choose $\mathbf{r},\mathbf{x} \in_R \{0,1\}^{12m} $. 
    \item Compute $\theta\coloneqq F(\mathbf{r})$ and $c\coloneqq H(\mathbf{x})$.
    \item Send $|\mathbf{x}\rangle_{\mathbf{\theta}}\coloneqq |x_1\rangle_{\theta_1}...|x_{12m}\rangle_{\theta_{12m}}$.
    \item Send $\mathcal{O}_{\texttt{s}}(P\oplus c)$. 
    \item Send $\mathbf{r}$.
  \end{enumerate}
 The entire transmission is denoted as $O^{r,x}_{P}$.

 If queried after time $t_{\textnormal{end}}$, then apply the qmemory bound and output $ek$.
  
\item $\textsf{Eval}(\langle O^{r,x}_P,ek \rangle,v):$
\begin{enumerate}
    \item Store $|\mathbf{x}\rangle_{\mathbf{\theta}}$.
    \item Evaluate the one-time program $\mathcal{O}_{\texttt{s}}(P\oplus c)$ on $v$ to obtain $P(v)\oplus c$.
    \item After $\mathbf{r} $ and $F$ are received, compute $\theta$. 
    \item Measure $|\mathbf{x}\rangle_{\theta}$ in the basis $\theta$ to obtain $\mathbf{x}$. 
    \item Use $H$ to compute $c=H(\mathbf{x})$ and obtain $P(v)$. \end{enumerate}
\end{itemize}}
\end{construct}

\begin{theorem}[Security]
\label{broadcast security}
Construction \ref{con:branching program} is a $( 12m, \texttt{s},\frac{\texttt{s}}{2m})$-BQS program broadcaster for the class $\mathcal{C}_m$.
\end{theorem}

\begin{proof}
Let $P\in \mathcal{C}_m$. It is clear that the receiver requires $12m$ qmemory to learn a single evaluation of $P$ from the broadcast.

In terms of security, an adversary $\mathcal{A}_{\texttt{s}}$ can receive a polynomial number, say $p$, outputs of the broadcast $\textsf{Br}(\s, ek,P)$. From these states, the adversary obtains $(|\mathbf{x}_i\rangle_{\mathbf{\theta}_i})_{i\in [p]}$ and one-time programs $(\mathcal{O}_{\texttt{s}}(P\oplus c_i))_{i\in [p]}$ where $c_i\coloneqq H(\mathbf{x}_i)$. 

$\mathcal{A}_{\texttt{s}}$ can be simulated by $\mathcal{S}_{\texttt{s}}$ which has access to a single oracle query to $P\oplus c_i$ for each $i\in [p]$ by the security of one-time protocol (Theorem \ref{main poly}). The lemma below shows that $\mathcal{S}_{\texttt{s}}$ can learn at most $\frac{\texttt{s}}{2m}$ values in $\{c_i\}_{i\in [p]}$. This implies that $\adv_{\texttt{s}}$ can learn at most $\frac{\texttt{s}}{2m}$ evaluations of $P$ from the broadcast thus achieving program broadcast security. 

\begin{lemma}
\label{distinguish}
$\mathcal{S}_{\texttt{s}}$ can distinguish at most $\frac{\texttt{s}}{2m}$ terms in $\{c_i\}_{i\in [p]}$ from random. 
\end{lemma}
\begin{proof}
Assume that there exists $\frac{\texttt{s}}{2m}$ values $\mathbf{x}_{i_1},...,\mathbf{x}_{i_{\frac{\texttt{s}}{2m}}}$ that are distinguishable from random. 

Instead of the broadcaster sending $|\mathbf{x}_i\rangle_{\mathbf{\theta}_i}$, a standard purification argument shows it is sufficient to show security for the protocol with the following modification. For each qubit supposed to be sent, the broadcaster instead prepares an EPR state $\frac{1}{\sqrt{2}}(|00\rangle+|11\rangle)$ and sends one half to $\mathcal{S}_{\texttt{s}}$ and keeps one half. Then the broadcaster measures its halves of the EPR pairs in a random BB84 basis in $\{+,\times \}^{12m}$ after the memory bound of $\mathcal{S}_{\texttt{s}}$ applies. Let $\Theta_{i}$ and $X_{i}$ be the random variables representing the broadcaster's choice of measurement basis and outcome. Let $X_I\coloneqq X_{i_1}X_{i_2}...X_{i_{\frac{\texttt{s}}{2m}}}$ and $\Theta_I\coloneqq \Theta_{i_1}\Theta_{i_2}...\Theta_{i_{\frac{\texttt{s}}{2m}}}$. The Uncertainty Relation (Lemma \ref{uncertinity relation}) implies that for $\gamma =\frac{1}{12}$, there exists $\epsilon$ (negligible in $m$) such that $H_{\infty}^{\epsilon}(X_I|\Theta_I)\geq (\frac{1}{2}-2\gamma) (6\texttt{s})=2\texttt{s}$. Let $B$ be the random $\texttt{s}$ qubit state the adversary stores when the memory bound applies. By the chain rule for min-entropy (Lemma \ref{chain}), \begin{align*}
  H_{\infty}^{\epsilon}(X_I|\Theta_IB)\geq H_{\infty}^{\epsilon}(X_I|\Theta_I)-\texttt{s}\geq \texttt{s} .
\end{align*}
Hence, by the Generalized Min-entropy Splitting Lemma \ref{splitting cor}, there exists a random variable $C$ and negligible value $\epsilon'>0$ such that,
\begin{align*}
H^{\epsilon'}_{\infty}(X_{i_C}|\Theta_IC B)\geq 2m-4 .
\end{align*}
By the Privacy Amplification Theorem \ref{privacy amplification}, \begin{align*}
  \delta(\rho_{H(X_{i_C})CH \Theta_IB},\mathds{1}\otimes &\rho_{C H\Theta_I B})\\
  &\leq \frac{1}{2}2^{-\frac{1}{2}(H^{\epsilon}_{\infty}(X_{i_C}|\Theta_ICB)_{\rho}-m)}+\epsilon =O(2^{-m})=\negl .
\end{align*} 
The final equality is because $m\geq \lg^2(n)$ which means $2^{-m}$ is negligible in $n$. This gives a contradiction so there is less than $\frac{\texttt{s}}{2m} $ values $c_i$ which are distinguishable from random. 
\qed
\end{proof}
This completes the proof of Theorem \ref{broadcast security}.
\qed
\end{proof}

\section{Encryption Schemes}
\label{sec:encryption}

\subsection{Symmetric Key Encryption}
\label{sec:enc}

In this section, we present a private key encryption scheme that satisfies disappearing indistinguishability under chosen ciphertext attacks (qDCCA1). 

\begin{construct}
\label{cca}
{\small The symmetric key encryption scheme $\Pi_{\textsf{SymK}}$ against adversaries with qmemory bound $\texttt{s}$ is as follows. 
\begin{itemize}
  \item $\textsf{Gen}(1^\lambda)$: Let $m\coloneqq \lceil (\lg \lambda)^{3/2}\rceil$. Choose at random strings $q,w\in_R \{0,1\}^m$, a $(2m) \times (2m)$ invertible binary matrix $M$ and string ${z}\in_R \{0,1\}^{2m}$. Define $S({x})\coloneqq M\cdot {x}+{z}$, and $S'(x)\coloneqq qx+w \mod{2^m}$. The private key is $k\coloneqq (q,w,M,{z})$. 
  \item $\textsf{Enc}(\texttt{s}, k, \mu )$: Let $\ell\coloneqq \lvert \mu \rvert$. Choose strings $a,b\in_R \{0,1\}^m$ and define $f(x) \coloneqq ax+b \mod{2^m}$. Construct the following program:
  \begin{align*}
    E_{k,f,\mu}(y)=\begin{cases} 
                       S'(f(x)) \|\mu & \textnormal{if ${y} =S(x\|f(x))$}\\
                       \perp & \textnormal{otherwise}.\\
                    \end{cases}\\
   \end{align*}
The encryption is $\rho_{ct} \leftarrow \langle \mathcal{O}_{\texttt{s}}(f),\mathcal{O}_{\texttt{s}}(E_{k,f,\mu})\rangle$ sent in sequence; first $\mathcal{O}_{\texttt{s}}(f)$ and then $\mathcal{O}_{\texttt{s}}(E_{k,f,\mu})$. 
\item $\textsf{Dec}(k, \rho_{ct})$: Check $\rho_{ct}$ has the correct format (see note below).
Evaluate the first one-time program on a random input $v$ to obtain $f(v)$. Evaluate the second one-time program on $S(v\|f(v))$. If the output is of the form $S'(f(v))\|\hat{\mu}$ (for some string $\hat{\mu}$) then output $\hat{\mu}$ and $\perp$ otherwise.
\end{itemize}}
\end{construct}

\note{For decryption, the receiver is required to check that the ciphertext has the correct format. To be clear, this means that the received ciphertext is of the form $\langle \mathcal{O}({f}),\mathcal{O}({E})\rangle$ for some $\mathbf{NC}^1$ functions ${f}:\{0,1\}^m \rightarrow \{0,1\}^m$ and ${E}:\{0,1\}^{2m}\rightarrow \{0,1\}^{m +\ell}$. The receiver can check that the ciphertext is of this form by inspecting the length of the received data and by receiver security of one-time programs (Theorem \ref{receiver security} in App.~\ref{sec:one-time construction}). Any ciphertext not of this form is immediately rejected as invalid.}


\begin{theorem}[Security]
\label{construction 1}
Construction \ref{cca} ($\Pi_{\textsf{SymK}}$) satisfies qDCCA1 security against computationally unbounded adversaries with qmemory bound $\texttt{s}$.
\end{theorem}

\begin{proof}
An adversary $\mathcal{A}_{\texttt{s}}$ in the qDCCA1-security experiment requests a polynomial number, say $Q_e$, of encryption queries and a polynomial number, say $Q_d$, of decryption queries. Denote the ciphertexts produced by all encryption queries to the oracle as 
$\langle \mathcal{O}({f}_i),\mathcal{O}({E}_{k,f_i,\mu_i})\rangle_{i\in [Q_e]}$ and denote the decryption queries submitted to the oracle as $\langle \mathcal{O}(\hat{f}_j),\mathcal{O}(\hat{E}_j)\rangle_{j\in [Q_d]}$ for $\mathbf{NC}^1$ functions $\hat{f}_j:\{0,1\}^m \rightarrow \{0,1\}^m$ and $\hat{E}_j:\{0,1\}^{2m}\rightarrow \{0,1\}^{m+\ell}$. Any decryption query not of this form is immediately rejected as an invalid ciphertext. 

$\advs$ cannot learn anything from the encryption queries alone as such attacks can be simulated with only a single oracle query to each program in $({E}_{k,f_i,\mu_i})_{i\in [Q_e]}$. All such queries will yield $\perp$ except with negligible probability by the security of one-time programs (Theorem \ref{main poly}). 

Now, consider what $\advs$ learns from the first decryption query $\langle \mathcal{O}(\hat{f}_0),\mathcal{O}(\hat{E}_0)\rangle$. 
Suppose $\advs$ requested the encryption queries $\langle \mathcal{O}({f}_i),\mathcal{O}({E}_{k,f_i,\mu_i})\rangle_{i\in [d_0]}$ prior to requesting the first decryption query. These ciphertexts may be utilized in the construction of the decryption query. Let $\mathsf{A}_0$ be the sub-algorithm of $\advs$ which produces the first decryption query using these ciphertexts. 
Hence, we write $\langle \ot(\hat{f}_0), \ot(\hat{E}_0)\rangle \leftarrow \mathsf{A}_0(\langle \mathcal{O}({f}_i),\mathcal{O}({E}_{k,f_i,\mu_i})\rangle_{i\in [d_0]})$. 

As a side note, prior to sending $\ot(\hat{E}_0)$, the memory bound for $(\ot(f_i))_{i\in [d_0]}$ must have already been applied; thus there exists values $(v_i)_{i\in [d_0]}$ such that $\advs$ can only distinguish $f_i(x)$ from random on $x=v_i$ for all $i\in [d_0]$. This is because these functions are of the form $f_i(x)=a_ix+b_i \ \mod 2^m$ so learning one evaluation is not sufficient to distinguish the evaluation on another input from random. 

When the oracle receives the first decryption query, it evaluates the first one-time program on a random input say $\hat{v}_0$ to get $\hat{f}_0(\hat{v}_0)$. Next, it evaluates the second one-time program on $S(\hat{v}_0\|\hat{f}_0(\hat{v}_0))$ and gets an output say $\hat{y}_0\|\hat{\mu}_0$ (or $\perp$). Let $\mathsf{Or}_0$ denote the sub-algorithm of the oracle which performs this evaluation. Then this entire interaction can be described as $\hat{y}_0\|\hat{\mu}_0\leftarrow \mathsf{Or}_0(\mathsf{A}_0(\langle \mathcal{O}({f}_i),\mathcal{O}({E}_{k,f_i,\mu_i})\rangle_{i\in [d_0]}))$. Critically, the algorithm $\mathsf{A}_0$ is oblivious of $S$ and $\mathsf{\mathsf{Or}}_0$ has access to only a single evaluation of $S$, namely $S(\hat{v_0}\|\hat{f}_0(\hat{v}_0))$. 

The purpose of this description is to highlight that this entire procedure is essentially performed by an algorithm with access to only a single evaluation of $S$. By Theorem \ref{Raz 2}, this algorithm cannot guess $S(x)$ on any input $x\neq \hat{v_0}\|\hat{f}_0(\hat{v}_0)$. By the security of one-time programs, this algorithm can be simulated with a simulator that has access to only a single query to each program $(E_{k,f_i,\mu_i})_{i\in [d_0]}$ instead of $(\ot(E_{k,f_i,\mu_i}))_{i\in [d_0]}$. The query to $E_{k,f_i,\mu_i}$ will yield $\perp$ except with negligible probability unless $\hat{v}_0\|{f}_i(\hat{v}_0) = \hat{v}_0\|\hat{f}_0(\hat{v}_0)$ since the simulator cannot guess $S(x)$ on $x\neq \hat{v}_0\|\hat{f}_0(\hat{v}_0)$. If the condition $\hat{v}_0\|{f}_i(\hat{v}_0) = \hat{v}_0\|\hat{f}_0(\hat{v}_0)$ is not satisfied for any $i\in [d_0]$ then the simulator will receive $\perp$ from all its oracle queries except with negligible probability and thus there is a negligible probability that $\hat{y}_0= S'(\hat{f}_0(\hat{v}_0))$. Note that, there is at most a single value $i$ that satisfies this condition except with negligible probability since the parameters of the functions $(f_i)_{i\in [d_0]}$ are chosen independently and at random. 

Assume this condition is satisfied only by the function $f_n$ where $n\in [d_0]$. There is negligible chance that $\hat{v}_0=v_n$ and so $\advs$ cannot distinguish $f_n(\hat{v}_0)$ or equivalently $\hat{f}_0(\hat{v}_0)$ from random. The other values used by the oracle in the decryption query are: $v$, $S(v\|\hat{f}_0(\hat{v}_0))$ and $S'(\hat{f}_0(\hat{v}_0))$. Even if $\advs$ learns all these values, it does not help in determining functions $S$ and $S'$ since $\hat{f}_0(\hat{v}_0)$ is indistinguishable from random. 

At the end of this argument, $\advs$ cannot distinguish the key $k$ from random. Notice that the only requirement to apply this argument on a decryption query is that $\adv$ cannot distinguish $k$ from random when it submits the query. Hence, this argument can be applied to all decryption queries and it can be deduced inductively that $\advs$ cannot distinguish $k$ from random when it receives the challenge ciphertext. Let $\langle \ot(f), \ot(E_{k,f,m_b})\rangle$ be the challenge ciphertext. By one-time program security, the adversary's access to these programs can be simulated with a single evaluation to each program. Hence, it is clear that the probability that the adversary guesses $b$ is upper bounded by $\frac{1}{2}+\negl$. This still holds if $k$ is later revealed since one-time programs disappear after their transmission ends by Theorem \ref{main poly}. 
\qed
\end{proof}

\subsection{Asymmetric Key Encryption}
We now look into the asymmetric key setting and construct an asymmetric key encryption scheme with information-theoretic disappearing security under chosen-ciphertext attacks in the BQSM. The private key is a large matrix and the public key is a program broadcast of the matrix. Since different users will likely learn different evaluations, an evaluation can be used as a secret key to encrypt messages in the same way as in Construction \ref{cca}. Note that honest users need a small qmemory to process the broadcast as described in Construction \ref{con:broadcaster}. This is in contrast to the private key setting where no qmemory is required. We show in Sec.~\ref{sec:discussion} that it is impossible to implement asymmetric cryptography without requiring any qmemory from the user. 

We now present an asymmetric key encryption scheme with disappearing security in the BQSM. The public keys in our construction are assumed to be distributed securely. We show how to certify keys in a concurrent work \cite{BS232} as this requires a study on its own.

\begin{construct}
\label{sc}
{\small Let $\overline{\Pi}=(\overline{\textsf{Gen}}, \overline{\textsf{Enc}}, \overline{\textsf{Dec}})$ be given in Construction \ref{cca} for symmetric key encryption and let $\Pi_{\text{Br}}=(\textsf{KeyGen},\textsf{Br},\textsf{Eval})$ be the algorithms for program broadcast given in Construction \ref{con:broadcaster}. The asymmetric key encryption scheme $\Pi_{\textsf{AsyK}}$ against adversaries with qmemory bound $\texttt{s}$ is as follows: 
\begin{itemize}
  \item $\textsf{Gen}(1^\lambda,\texttt{s})$: Let $m=100\lceil (\lg \lambda)^3\rceil$ and $n= \max(\lceil \frac{6\texttt{s}}{m}\rceil+1,m)$. Choose uniformly at random $ m \times n$ binary matrix $M$ and for  $x\in \{0,1\}^m$, let $P(x)\coloneqq M\cdot x$ be the corresponding program. Generate $ek\leftarrow \textsf{KeyGen}(1^\lambda,t_{\textnormal{end}})$. The private key is $sk=(ek,P)$. 

   \item $\textsf{KeySend}(\texttt{s}, sk)$:
  If queried before $ t_{\text{end}}$, then output $O_P\leftarrow \textsf{Br}(\s,ek,P)$.
  
  If queried after $t_{\text{end}}$, \texttt{the qmemory bound applies} and output $ek$.

  Let $\rho_{pk}\leftarrow \langle O_P,ek\rangle$ denote the public key copy.

  \item $\textsf{KeyReceive}(\rho_{pk})$:
    Randomly choose $v\in_R \{0,1\}^n$. Evaluate the public key $P(v)\leftarrow \textsf{Eval}(\rho_{pk},v)$ and output the key $k_v=(v,P(v))$.
   
  \item $\textsf{Enc}(\texttt{s},k_v, \mu)$: Output $\langle v,\overline{\textsf{Enc}}(\texttt{s}, P(v), \mu)\rangle$ \footnote{The string $P(v)$ is of length $m=100\lceil (\lg \lambda)^3\rceil$, so it is of sufficient length such that it can be interpreted as the secret key in the symmetric-key encryption scheme $\overline{\Pi}$.  }.
\item $\textsf{Dec}(sk, \langle v,\rho_{ct}\rangle)$: Obtain $P(v)$ using $sk$ and $v$. Output $\overline{\textsf{Dec}}(P(v), \rho_{ct}) .$
\end{itemize}}
\end{construct}

The algorithm $\textsf{KeyReceive}$ requires $100(\lg \lambda)^3$ qmemory to run and the scheme is secure against adversaries with $\texttt{s}$ qmemory, where $\texttt{s}$ can be any polynomial in $\lambda$. Hence, by setting $m\coloneqq 100(\lg \lambda)^3$, the gap between the qmemory of the honest user and the adversary is $m$ vs. $e^{\sqrt[3]{m}}$.

\begin{theorem}[Security]
\label{ds}
Construction \ref{sc} ($\Pi_{\textsf{AsyK}}$) satisfies qDCCA1 security against computationally unbounded adversaries with qmemory bound $\texttt{s}$.
\end{theorem}

\begin{proof}[sketch]
An adversary $\mathcal{A}_{\texttt{s}}$ in the qDCCA1 security experiment can request a polynomial number of public key copies $\rho_{pk}$. The proof can be realized in the following steps. 

\begin{enumerate}
  \item By the security of our program broadcast protocol (Theorem \ref{broadcast security}) $\adv_{\texttt{s}}$ can be simulated with an algorithm $\mathcal{S}_{\texttt{s}}$ that is given $\frac{6\texttt{s}}{m}$ queries to ${P}$.
  \item By Theorem \ref{Raz 2}, since $M$ is a matrix of dimension $\lceil \frac{m}{12} \rceil\times n$ and $n> \frac{6\texttt{s}}{m}$, $\mathcal{S}_{\texttt{s}}$ cannot guess the output of $P(x)=M\cdot {x}$ on a random input except with negligible probability.
  \item This means that there is negligible chance $\mathcal{S}_{\texttt{s}}$ can determine the sub-key $k_v$ extracted from the public key $\rho_{pk}$ in the experiment which reduces the proof to the private key setting. Theorem \ref{construction 1} gives the result. 
  \end{enumerate}
\qed
\end{proof}

\section{Authentication Schemes}
\label{sec:authentication}

In this section, we present a signature scheme satisfying information-theoretic disappearing unforgeability under chosen message and verification attacks in the BQSM. 

\subsection{MAC}
We first present a message-authentication scheme (MAC) with qDCMVA security in the BQSM. 

\begin{construct}
\label{con:MAC}
{\small Let $\overline{\Pi}=(\overline{\textsf{Gen}}, \overline{\textsf{Enc}}, \overline{\textsf{Dec}})$ be the algorithms given in Construction \ref{cca} for symmetric key encryption. The MAC scheme $\Pi_{\textsf{MAC}}$ against adversaries with $\texttt{s}$ qmemory bound consists of the following algorithms. 
\begin{itemize}
  \item $\textsf{Gen}(1^\lambda)$: Output $k\leftarrow \overline{\textsf{Gen}}(1^\lambda)$.
  \item $\textsf{MAC}(\texttt{s}, k, \mu)$: Output $\rho_{\sigma}\leftarrow \overline{\textsf{Enc}}(\texttt{s}, k, \mu )$.
\item $\textsf{Verify}(k, \mu, \rho_{\sigma})$: Run $\overline{\textsf{Dec}}( k, \rho_{\sigma} )$. If the result is $\mu$ then output 1 and output 0 otherwise. 
\end{itemize}}\end{construct}

\begin{theorem}[Security]
\label{construction 4}
Construction \ref{con:MAC} ($\Pi_{\textsf{MAC}}$) satisfies qDCMVA security against computationally unbounded adversaries with $\texttt{s}$ qmemory bound. 
\end{theorem}

\begin{proof}
The algorithms $({\textsf{Gen}}, {\textsf{MAC}}, {\textsf{Verify}})$ are built from the algorithms $(\overline{\textsf{Gen}}, \overline{\textsf{Enc}}, \overline{\textsf{Dec}})$ in Construction \ref{cca}.

In the qDCMVA experiment, a key is sampled $k\leftarrow \textsf{Gen}(1^\lambda)$ and the adversary is run $\advs^{\textsf{MAC}(k,\cdot),\textsf{Verify}(k,\cdot)}$.  
Denote the tags produced by all queries to $\textsf{MAC}(k,\cdot)$ as $\langle \mathcal{O}({f}_i),\mathcal{O}({E}_{k,f_i,\mu_i})\rangle_{i\in [Q]}$ and let the forgery produced by the adversary after these queries be $\langle \mathcal{O}(\hat{f}),\mathcal{O}(\hat{E})\rangle \leftarrow \advs^{\textsf{Verify}(k,\cdot)}$ for $\mathbf{NC}^1$ functions $\hat{f}:\{0,1\}^m \rightarrow \{0,1\}^m$ and $\hat{E}:\{0,1\}^{2m}\rightarrow \{0,1\}^{m+\ell}$. If the output is not of this form, it is immediately rejected as an invalid tag. 

The experiment then runs the verification algorithm on the forgery. Specifically, it evaluates the first program on a random input, say $\hat{v}$, and obtains $\hat{f}(\hat{v})$. Then, it evaluates the second program on $S(\hat{v}\|\hat{f}(\hat{v}))$. We argued in the proof of Theorem \ref{construction 1} that, except with negligible probability, we must have $\hat{v}\|\hat{f}(\hat{v})=\hat{v}\|f_i(\hat{v})$ for some $i\in [Q]$. 

Note that prior to sending $\ot(\hat{E})$, the memory bound for $(\ot(f_i))_{i\in [Q]}$ must have already been applied; thus there exists values $(v_i)_{i\in [Q]}$ such that $\advs$ can be simulated with evaluations $f_i(v_i)$. In other words, $\advs$ can only distinguish $f_i(x)$ from random on $x=v_i$ for all $i\in [Q]$. This is because these functions are of the form $f_i(x)=a_ix+b_i \ \mod 2^{m}$ so one evaluation is not sufficient to distinguish the evaluation on another input from random.  

To sum up, to pass verification with non-negligible probability, $\advs$'s forgery must satisfy $\hat{v}\|\hat{f}(\hat{v})=\hat{v}\|f_i(\hat{v})$ for some $i\in [Q]$ and at the same time, $\advs$ cannot distinguish $f_i(x)$ on $x\neq v_i$ for any $i\in [Q]$. Therefore, since $(v_i)_{i\in [Q]}$ are chosen independently of $\hat{v}$, there is a negligible probability that both these conditions are satisfied. Hence, Construction \ref{con:MAC} is qDCMVA secure. 
  \qed
\end{proof}

\subsection{Signatures}
In this section, we generalize the MAC scheme to a signature scheme in the BQSM. The public key is the same as in the asymmetric key encryption scheme (Construction \ref{sc}) while the signature and verification functionalities follow the same format as the MAC scheme (Construction \ref{con:MAC}). We let $t_{\textnormal{end}}$ denote the time at which the public key distributor chooses to end the key distribution.

\begin{construct}
\label{con:Sig}
{\small Let $\Pi_{\text{Br}}=(\textsf{KeyGen},\textsf{Br},\textsf{Eval})$ be the algorithms for program broadcast given in Construction \ref{con:broadcaster}. The signature scheme $\Pi_{\textsf{Sig}}$ against adversaries with qmemory bound $\texttt{s}$ consists of the following algorithms:

\begin{itemize}
  \item $\textsf{Gen}(1^\lambda,\texttt{s})$: Let $m=100\lceil (\lg \lambda)^3\rceil$ and $n= \max(\lceil \frac{6\texttt{s}}{m}\rceil+1,m)$. Choose uniformly at random $ \lceil \frac{m}{12} \rceil \times n$ matrix $M$ and let $P(x)\coloneqq M\cdot {x}$ be the corresponding program. Generate $ek\leftarrow \textsf{KeyGen}(1^\lambda,t_{\textnormal{end}})$. The private key is $sk=(ek,P)$. 

     \item $\textsf{KeySend}(\texttt{s}, sk)$:
  If queried before $ t_{\text{end}}$, then output $O_P\leftarrow \textsf{Br}(\s,ek,P)$.
  
  If queried after $t_{\text{end}}$, \texttt{the qmemory bound applies} and output $ek$.

  Let $\rho_{pk}\leftarrow \langle O_P,ek\rangle$ denote the public key copy.

  \item $\textsf{KeyReceive}(\rho_{pk})$:
    Randomly choose $v\in_R \{0,1\}^n$. Evaluate the public key $P(v)\leftarrow \textsf{Eval}(\rho_{pk},v)$ and output the key $k_v=(v,P(v))$.
  
 \item $\textsf{Sign}(\texttt{s}, sk, \mu)$: Let $\ell=\lvert \mu\rvert$ and $m'\coloneqq \sqrt{\frac{{m}}{100}}$. Choose strings $a,b\in_R \{0,1\}^{m'}$ and define $f(x) \coloneqq ax+b \mod{2^{m'}}$. Construct a program $E'_{sk,\mu}$ which takes inputs $(x,y)\in \{0,1\}^n\times \{0,1\}^{2m'} $, computes $k_x\coloneqq P(x)$, and outputs  $E_{k_x,f,\mu}(y)$ (defined in the same way as in Construction \ref{cca}). 
  
  The signature is $\rho_{\sigma} \leftarrow \langle \mathcal{O}_{\texttt{s}}(f),\mathcal{O}_{\texttt{s}}(E'_{sk,\mu})\rangle$ sent in sequence. 

\item $\textsf{Verify}(k_v, \mu, \rho_{\sigma})$: Check the signature is of the correct form. Evaluate the first one-time program on a random input $t$ to obtain $f(t)$. Interpret $P(v)$ as $(q_v,w_v,M_v,{z}_v)$. Evaluate the second program on $(v, M_v\cdot {(t\|f(t))}+{z}_v)$. If the output is $q_vf(t)+w_v \|\mu$ then output 1 and 0 otherwise. 
\end{itemize}}\end{construct}

All algorithms in this construction require at most $100\lceil (\lg \lambda)^3\rceil$ qmemory to run and the scheme is secure against adversaries with $\texttt{s}$ qmemory, where $\texttt{s}$ can be any polynomial in $\lambda$. Hence, the gap between the qmemory of the honest user and the adversary is $m\ \textsf{vs.} \ e^{\sqrt[3]{m}}$. This is much larger than the $m \ \textsf{vs.} \ m^2$ gap achieved in the bounded classical storage model signature scheme. Note that we also achieve a stronger level of security than the classical scheme. In their security experiment, adversaries are restricted to querying the signature oracle on a single message at a time. 

\begin{theorem}[Security]
Construction \ref{con:Sig} ($\Pi_{\textsf{Sig}}$) satisfies qDCMVA security against computationally unbounded adversaries with $\texttt{s}$ qmemory bound. 
\end{theorem}

\begin{proof}[sketch]
The public key algorithms $\textsf{KeySend}$ and $\textsf{KeyReceive}$ in this construction are the same as in the asymmetric key encryption scheme (Construction 2). By the same arguments (Theorem \ref{ds}), there is a negligible probability an adversary $\mathcal{A}_{\texttt{s}}$ can distinguish the sub-key $k_v$ from random in the security experiment. This reduces the experiment to the symmetric-key setting where the authentication and verification algorithms are essentially the same as in the MAC scheme (Construction \ref{con:MAC}). Hence, Theorem \ref{construction 4} showing qDCMVA security of the MAC scheme implies qDCMVA security of Construction \ref{con:Sig}. 
\qed
\end{proof}

\section{Discussion: Asymmetric Key Cryptography}
\label{sec:discussion}

It might seem that our asymmetric schemes are unorthodox because each copy of the public key differs as a result of the random choices made during $\textsf{KeySend}$. A change of perspective through a standard purification argument shows that this is not really the case. Let $r$ be a string representing all the random choices made when sending a copy of the public key denoted as $\ket{pk_r}$. Instead, the sender can send the second register of $\sum_r\sqrt{p_r} \ket{r} \otimes \ket{pk_r}$, then measure the first register and send the outcome to the receiver which at the end gives the same public key as in our schemes. Such keys are called mixed-state and we argue in the introduction that there does not seem to be any advantages for pure-state keys over mixed-state keys. 

Another topic of discussion is the qmemory requirements of our asymmetric protocols. Our one-time compiler, private key encryption, and message-authentication schemes do not require any qmemory for the honest user. This is highly desirable since it means the protocols are not difficult to implement with current technology. In comparison, our asymmetric key schemes require users to store quantum states.


We now show that this limitation is unavoidable in any asymmetric key scheme. This includes any scheme with classical, pure-state, or mixed-state public keys and covers all signature and asymmetric key encryption schemes. We say a scheme is \emph{$m$-qmemory} if it requires at most $m$ qmemory for all honest parties.  

\begin{theorem}[Impossibility of 0-qmemory Asymmetric Protocols]
\label{0 memory}
There does not exist a 0-qmemory asymmetric key scheme with an unbounded polynomial number of public key copies that is secure against computationally unbounded adversaries (even with no qmemory).
\end{theorem}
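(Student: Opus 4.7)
The plan is to reduce any $0$-qmemory asymmetric key scheme to an effectively classical protocol and then argue that classical asymmetric cryptography cannot be secure against computationally unbounded adversaries. First, I would note that if every honest party has $0$ qmemory, then the execution of $\textsf{KeySend}$ and $\textsf{KeyReceive}$ has the following structure: the sender may emit classical messages interleaved with quantum states, but the receiver never stores a quantum state --- any received qubit must be measured immediately (possibly in a basis chosen adaptively from earlier classical messages received from the sender). Consequently, the receiver's entire view of one public-key copy reduces to a classical transcript $T$, and the extracted encryption key $k \leftarrow \textsf{KeyReceive}(\rho_{pk})$ is a classical function of $T$ and the receiver's private randomness. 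Let $\mathcal{D}_{sk}$ denote the distribution on $k$ induced by a fresh execution with private key $sk$.

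Next, I would argue that a $0$-qmemory adversary can, by honestly simulating $\textsf{KeyReceive}$ on each of the polynomially many public-key copies it requests, obtain independent samples $k_1,k_2,\ldots,k_p$ from $\mathcal{D}_{sk^{*}}$, where $sk^{*}$ is the true private key. The simulation fits within the adversary's resource budget precisely because $\textsf{KeyReceive}$ itself uses no qmemory. Since by hypothesis the scheme supports an unbounded polynomial number of copies, $p$ can be any polynomial in $\lambda$ chosen by the adversary.

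The core of the argument is then that, from these classical samples, a computationally unbounded adversary can determine enough about $sk^{*}$ to break the scheme. I would first observe that correctness forces each $k$ to encode nontrivial information about $sk^{*}$ (otherwise $k$ would be useless for $\textsf{Enc}$), and that two private keys $sk \neq sk'$ inducing $\mathcal{D}_{sk}=\mathcal{D}_{sk'}$ are functionally interchangeable and may be identified. Treating distinct equivalence classes as inducing distributions with nonzero total variation distance, the unbounded adversary applies maximum-likelihood estimation on $\{k_i\}_{i \in [p]}$ to recover $sk^{*}$, then computes $\textsf{Dec}(sk^{*},\cdot)$ directly on a challenge ciphertext.

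The main obstacle is quantifying how large $p$ needs to be for maximum-likelihood recovery to succeed with non-negligible probability. I would handle this by contradiction: if no polynomial $p$ suffices, then the $\mathcal{D}_{sk}$'s of super-polynomially many inequivalent keys must be pairwise exponentially close, which forces the effective key space to be super-polynomially small --- and then an unbounded adversary wins by simply guessing $sk^{*}$ from that small space. A slicker route I would try in parallel is to fold the classicalized protocol into a randomized classical public-key-like object and invoke the folklore impossibility of classical asymmetric encryption against an unbounded adversary that holds even a single transcript, via exhaustive search for any $sk$ consistent with the transcript (whose decryption agrees with $sk^{*}$ by correctness).
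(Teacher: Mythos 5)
Your opening reduction (a $0$-qmemory receiver must measure everything on the fly, so each public-key copy collapses to a classical transcript and the extracted key $k$ is an i.i.d.\ sample from a classical distribution $\mathcal{D}_{sk}$) is exactly the paper's setup, and your overall plan --- harvest polynomially many samples and let an unbounded adversary infer enough about $sk$ --- is the paper's plan. The gap is in the one step that carries all the quantitative weight: why do polynomially many samples suffice? Your contradiction argument ("if no polynomial $p$ suffices, then super-polynomially many inequivalent keys are pairwise exponentially close, so the effective key space is small") is not a valid implication. Maximum-likelihood identification of $sk^{*}$ can already fail when a \emph{single} other key $sk'$ has $\delta(\mathcal{D}_{sk},\mathcal{D}_{sk'})$ too small to resolve with $p$ samples yet nonzero; nothing about the rest of the key space collapses, and "guess from a small space" is unavailable. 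The paper closes this hole differently, by a win--win dichotomy that invokes the security definition itself: maintain the set $S_i$ of keys consistent with the answers seen so far; either a fresh query halves $\lvert S_i\rvert$ with probability at least $1/2$ (so $O(\lg\lvert S\rvert)$ copies pin down the key), or most surviving keys agree with $sk^{*}$ on most queries, in which case the adversary can already predict the sub-key $F_{sk}(r)$ that a fresh honest user would extract with probability about $1/4$ --- which breaks the scheme without ever recovering $sk^{*}$. That dichotomy is the missing idea in your write-up. (Your statistical route is in fact salvageable, but by a different tool than the one you invoke: you do not need to identify $sk^{*}$, only to output \emph{some} $\hat{sk}$ with $\delta(\mathcal{D}_{\hat{sk}},\mathcal{D}_{sk^{*}})\leq \epsilon$, since correctness then transfers decryption of $\mathcal{D}_{sk^{*}}$-keys to $\hat{sk}$ up to error $\epsilon$ plus a negligible term; hypothesis selection over the at most $2^{p}$ candidate distributions achieves this with $O(p/\epsilon^{2})$ samples. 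But that argument has to be made; it is not what you wrote.)

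Your "slicker route" has a second, independent gap. In this model the challenge ciphertext is encrypted under a sub-key $k$ extracted from a \emph{fresh} public-key copy inside the experiment, not under the transcript the adversary holds. Exhaustively finding an $sk'$ merely consistent with the adversary's own transcript $T$ guarantees nothing about whether $\textsf{Dec}(sk',\cdot)$ works on ciphertexts formed from $k\sim\mathcal{D}_{sk^{*}}$: correctness is an average-case statement over $(sk,\rho_{pk},k)$, and a key that outputs $T$ only with tiny probability may sit entirely inside the correctness error, while its induced distribution $\mathcal{D}_{sk'}$ may be far from $\mathcal{D}_{sk^{*}}$. The folklore single-transcript argument for classical public-key encryption does not port over; consistency with many independent transcripts is needed, which returns you to the statistical argument above and hence to the same unresolved quantitative step.
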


Note that Theorem \ref{0 memory} applies to both digital signatures and asymmetric encryption. We now give a proof of Theorem \ref{0 memory}.

\begin{proof}[sketch]
Assume $\Pi$ is a 0-qmemory asymmetric scheme with an unbounded polynomial number of public key copies. The private key $sk$ must be classical since users cannot store a quantum private key. Assume the key is chosen from a set say $S$. It is not difficult to see that $S$ must be of size at most $2^p$ where $p$ is polynomial in the security parameter. The sender runs an algorithm $\rho_{pk}\leftarrow \textsf{KeySend}(sk)$ to generate a copy of the public key. The receiver runs an algorithm $\textsf{KeyReceive}( \rho_{pk};r)$ to process the public key state using no qmemory and with some input $r$ chosen from some set $R$ and outputs a key $k_r$. Here $r$ represents choices made during the processing of the public key. Again $k_r$ must be classical since users need to store it. For simplicity, denote $F_{sk}(r) \coloneqq \textsf{KeyReceive}(\textsf{KeySend}(sk);r)$. Note that the class of functions $\mathcal{F}=\{F_{sk}|\ sk\in S\}$ is of size at most $2^{p}$. 

Assume the adversary queried the public key a polynomial number of times on values $r_1,r_2,...$ where $r_i\in R$. Define inductively $S_i\coloneqq \{sk'|\ sk'\in S_{i-1} \wedge F_{sk'}(r_i)=F_{sk}(r_i)\}$. For any $i$, learning the evaluation $F_{sk}(r_i)$ should reduce the number of potential private keys by at least half for the majority of values $r_i\in R$. In particular, with at least $\frac{1}{2}$ probability $\lvert S_i\rvert \leq \frac{\lvert S_{i-1}\rvert }{2}$ where the probability is taken over the possible values of $r_i$. Otherwise, for the majority of $r_i\in R$ and $sk'\in S_{i-1}$, we have $F_{sk'}(r_i)=F_{sk}(r_i)$. This would mean that for random $r\in R$, the adversary can simply guess $F_{sk}(r)$ with probability at least $\frac{1}{4}$ and break encryption security. Hence, the query $F_{sk}(r_i)$ should reduce the number of potential private keys by at least half for the majority of values $r_i\in R$. But then an adversary simply needs $O(p)$ copies of the public key to guess the private key with non-negligible probability. 
\qed
\end{proof}

Note that the memory bound of an adversary should certainly be polynomial with respect to the security parameter $\lambda$ i.e. $\texttt{s}\in \poly[\lambda]$. This is because all states in an efficient scheme must be polynomial in length and an adversary with super-polynomial memory can store an unbounded (polynomial) number of states. Such an adversary is essentially not operating in the BQSM. It is well known that encryption and authentication are impossible unconditionally in the standard model. This implies the required qmemory of the honest user in our schemes may be improved slightly but the qmemory required to break the schemes is optimal. 

Many quantum protocols, such as quantum key distribution and oblivious transfer, do not require any qmemory for honest users. Theorem \ref{0 memory} is interesting since it shows that further possibilities can be realized using qmemory. In fact, to our knowledge, this is the first primitive in the BQSM that requires qmemory for the honest user. 

\section{Note on Noisy Communication}
\label{sec:Noisy Communication}

Our protocols assume that the communication between the sender and receiver is error-free. It is assumed that the sender has a perfect quantum source which when requested will produce one and only one qubit in the correct state. Note if the source accidentally produces two copies of a qubit then the adversary can measure the qubit in two bases and break the 1-2 oblivious transfer scheme. Furthermore, it requires the honest evaluator to measure all qubits without error. 

Fortunately, in \cite{dfrss07} the authors used techniques based on information reconciliation to allow an honest evaluator and sender in the quantum $1\text{-}2$ oblivious transfer scheme (Construction \ref{con:QOT}) with imperfect apparatus to perform error correction without compromising the security of the scheme. The same techniques are applicable to our protocols since the quantum parts of our schemes consist of oblivious transfer transmissions. However, we leave a more rigorous treatment of these issues as an avenue for future work.

\printbibliography

\appendix

\section{One-Time Programs}
\label{sec:one-time construction}
This section presents an information-theoretically secure BQS one-time compiler. We now give a brief description of his construction. To start, we build one-time programs for a class of functions known as matrix branching programs (see Sec.~\ref{sec:branching}). These programs are important because any circuit in $\textbf{NC}^1$ can be represented as a matrix branching program by Barrington's Theorem \cite{B89}. The first step is to randomize the program by left and right multiplication of each instruction with random matrices so that the randomization cancels out when performing an evaluation. Next, the program is entirely one-time padded so that everything is hidden and this encrypted program is sent to the evaluator. Oblivious transfer is employed to securely send 1-out-of-2 one-time pad keys for each pair of instructions in the program. This reveals only one consistent randomized \emph{path} in a branching program which is sufficient to learn a single evaluation and nothing else. 

To generalize this to polynomial circuits, the circuit is sliced into a sequence of subcircuits in $\textbf{NC}^1$ and then a one-time program for each subcircuit is sent. However, this opens the avenue for various attacks: an adversary can learn information about the individual subcircuits, perform an evaluation with inconsistent input, and so forth. To protect against these attacks, all the subcircuit outputs are padded and a one-time decryption circuit is sent after the evaluation. This decryption circuit, in $\textbf{NC}^1$, checks if the evaluation is performed correctly and, if so, reveals the padding of the final subcircuit. 

\subsection{One-time Compiler for Branching Programs}
\label{sec:protocol}
Let $\mathcal{P}=(M_j^b)_{j\in [N],b\in\{0,1\}}$ be a branching program on input $\{0,1\}^n$, of length $N$ (polynomial in $n$), and width $k=5$. Two layers of security are used to prepare the one-time program of $\mathcal{P}$. First, the compiler randomly chooses a set of matrices $(D_{j})_{j\in [N-1]}$, where $D_j\in_R \hil{M}_5$. The compiler then modifies the program as follows: 
\begin{align*}
\label{first layer}
(M_{j}^b)_{j\in [N],b\in \{0,1\}}\rightarrow (D^{-1}_{j-1}M_{j}^bD_{j})_{j\in [N],b\in \{0,1\}} ,
\end{align*}
where $D_{-1}$ and $D_{N-1}$ are defined to be the identity. Clearly this program has the same functionality as $\mathcal{P}$ since the $D_{j}$ matrices cancel out during composition. Roughly speaking, this encoding is only sufficient to hide a single path in the program (see Lemma \ref{path}). The next step is to add a second layer of security by one-time padding the entire program. To do this the compiler chooses $2N$ matrices $(B^b_{j})_{j\in[N],b\in \{0,1\}}$, where $B^b_j\in_R \hil{M}_5$, then modifies the program as follows: 
\begin{align*} 
(D^{-1}_{j-1}M_{j}^bD_{j})_{j\in [N],b\in \{0,1\}} \rightarrow (B^b_{j}D^{-1}_{j-1}M_{j}^bD_{j})_{j\in [N],b\in\{0,1\}} .
\end{align*}
We denote this encrypted program as $E(\mathcal{P})$. Note this modification completely destroys the functionality and information in the program. To allow the evaluator to get an evaluation of $\mathcal{P}$ we use oblivious transfer as follows. 

Let the string $s^b_i$ encode the matrices $(B^b_j)_{j=i\bmod n}\coloneqq (B^b_i,B^b_{i+n},...,B^b_{N-n+i})$. Let $\rho_{m,\ell}(s^0,s^1)$ denote the state and information sent in a $\otqot{m, \ell}(s^0,s^1)$ transmission (Construction \ref{con:QOT}). Hence, transmitting $\bigotimes_{i=1}^n\rho_{m,\ell}(s^0_i,s_i^1)$ allows a receiver to uncover a consistent path in the encrypted branching program $E(\mathcal{P})$. 

There are $\frac{N}{n}$ matrices in $(B^b_{j})_{j= i\bmod n}$ and each is of dimension 5 which means each requires $25$ bits to encode. Hence, $\ell = \lvert s_i^b\rvert = \frac{25N}{n}$. By Theorem \ref{QOT}, to obtain security against any adversary $\mathcal{A}_{\texttt{s}}$, $m$ must be large enough so that $\frac{m}{4}-2\ell -\texttt{s} \in \Omega (m)$. It must also be large enough to satisfy the security condition of Definition \ref{def:BQS one-time}. All in all, a value that meets these conditions is $m=250N+4\texttt{s}$. 

\begin{construct}[BQS One-Time Branching Program Compiler $\mathcal{O}_{\texttt{s}}$]
\label{con:branching program}
{\small Let $\mathcal{P}$ be a branching program of length $N$ on permutation matrices of dimension $k=5$ and input dimension $n$. Let $m=250N+4\texttt{s}$ and $\ell=\frac{25N}{n}$.
\begin{enumerate}
\item $\mathcal{O}_{\texttt{s}}$ randomly picks matrices $(D_{j})_{j\in [N-1]}$ and $(B^b_{j})_{j\in[N],b\in \{0,1\}}$.
\item $\mathcal{O}_{\texttt{s}}$ prepares and sends the state:
\begin{align*}|OT\rangle_n=\bigotimes_{i=1}^n\rho_{m,\ell} (s_i^{0},s_i^{1})\end{align*}
based on the choice of $(B^b_{j})_{j\in[N],b\in \{0,1\}}$. 
\item $\mathcal{O}_{\texttt{s}}$ sends $E(\mathcal{P})$ classically.
\end{enumerate}
}\end{construct}

In the next subsection, we prove that protocol $\mathcal{O}_{\texttt{s}}$ is an efficient $(0,\texttt{s})$-BQS one-time compiler for matrix branching programs. In order to do this, there are 5 conditions that need to be met regarding functionality, efficiency, and security according to Definition \ref{def:BQS one-time}. 

\subsection{Functionality and Efficiency}

The protocol $\mathcal{O}_{\texttt{s}}$ can easily be seen to preserve functionality efficiently.

\begin{lemma}[Functionality \& Polynomial Slowdown]
\label{functionality}
For any program $\mathcal{P}$ and input $w$, an evaluator $\mathcal{E}$ can obtain $\mathcal{P}(w)$ from $\mathcal{O}_{\texttt{s}}(\mathcal{P})$. The one-time program has a polynomial slowdown and both the sender and evaluator require no qmemory. 
\end{lemma}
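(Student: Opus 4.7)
The plan is to verify the three claims of the lemma (functional correctness, polynomial slowdown, zero qmemory for both parties) by direct inspection of Construction \ref{con:branching program}, leaning on the structure of Construction \ref{con:QOT}.

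For functionality, I first describe how an honest evaluator $\mathcal{E}$ uses the one-time program on input $w \in \{0,1\}^n$. For each $i \in \{1,\ldots,n\}$, $\mathcal{E}$ runs the $i$-th $\otqot{m,\ell}$ instance with choice bit $c = w_i$ to retrieve $s_i^{w_i}$, which decodes to the pad matrices $(B_j^{w_i})_{j \equiv i \pmod n}$. Ranging over $i = 1,\ldots,n$, this yields exactly the pad matrices $B_j^{w_{(j \bmod n)}}$ along the consistent path of $w$. From the classical transmission $E(\mathcal{P})$, $\mathcal{E}$ reads off the encrypted path matrices and multiplies each by $(B_j^{w_{(j \bmod n)}})^{-1}$ (well defined since $B_j^b \in \hil{M}_5$ is a permutation matrix) to obtain $D_{j-1}^{-1} M_j^{w_{(j\bmod n)}} D_j$. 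Composing the $N$ stripped matrices makes the internal $D_j D_j^{-1}$ pairs telescope, and since $D_{-1} = D_{N-1} = I$ the product collapses to $\prod_{j=0}^{N-1} M_j^{w_{(j\bmod n)}}$, which by Definition \ref{def:mbp} equals $Q_{\text{acc}}$ or $Q_{\text{rej}}$, i.e.\ $\mathcal{P}(w)$.

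For polynomial slowdown, I would simply count: the quantum part consists of $n$ parallel copies of $\otqot{m,\ell}$, each sending $m = 250N + 4\texttt{s}$ BB84 qubits, for a total of $n(250N + 4\texttt{s})$ qubits; and the classical part $E(\mathcal{P})$ is $2N$ permutation matrices in $\hil{M}_5$, i.e.\ $50N$ bits. Both are polynomial in $|\mathcal{P}|$ (and in $\texttt{s}$), giving the desired slowdown. For the qmemory claim, I observe by inspection of Construction \ref{con:QOT} that the sender only prepares single-qubit BB84 states and dispatches them sequentially, so no multi-qubit register is ever stored; the honest receiver measures every incoming qubit immediately in basis $[+,\times]_{w_i}$ (Step 2 of Construction \ref{con:QOT}), so that the marker ``$\mathsf{R}$'s memory bound $\texttt{s}$ applies'' is vacuous for an honest evaluator. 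Thus both parties operate with $0$ qmemory.

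I do not anticipate a genuine obstacle: the lemma is essentially a book-keeping check, and the ``canceling out'' telescoping was already asserted informally right after Construction \ref{con:branching program}. The only subtlety worth flagging is that invertibility of each $B_j^b$ is what makes ``stripping the outer pad'' legitimate, which is ensured by the choice $B_j^b \in_R \hil{M}_5$; and that the parallel OT instances decouple cleanly because each index $i$ controls a disjoint subset of layers $\{j : j \equiv i \pmod n\}$, so the pad matrices recovered by $\mathcal{E}$ across the $n$ OTs cover each layer exactly once.
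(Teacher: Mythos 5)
Your proposal is correct and follows essentially the same route as the paper's own proof: recover the pad matrices along the consistent path via the $n$ oblivious-transfer instances with choice bits $w_i$, strip the (invertible) $B$-pads from $E(\mathcal{P})$, telescope the Kilian randomizers, and read off $Q_{\mathrm{acc}}$ versus $Q_{\mathrm{rej}}$, together with the same size count of $O(n\texttt{s}+Nn)$ and the observation that Construction~\ref{con:QOT} needs no qmemory for either honest party. Your explicit remarks on the invertibility of $B_j^b$ and the disjointness of the layer sets $\{j : j\equiv i \pmod n\}$ are slightly more detailed than the paper's write-up but do not change the argument.
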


\begin{proof}
The evaluator $\mathcal{E}$ can follow the oblivious transfer protocol described in Construction \ref{con:QOT} to gain access to the strings $(s^{w_i}_i)_{i\in [n]}$ from the state $|OT\rangle_n$ in Step 2 of the protocol. These can be used to determine the matrices $(B^{w_{(j\bmod n)}}_{j})_{j\in [N]}$. Furthermore, $\mathcal{E}$ has $E(\mathcal{P})$ from Step 3 in the protocol. Hence, the evaluator can invert the matrices $B^{w_{(j\bmod n)}}_{j}$ and obtain $(D^{-1}_{j-1}M_{j}^{w_{(j\bmod n)}}D_{j})_{j\in [N]}$.
Next, $\mathcal{E}$ can compose these matrices to obtain 
\begin{align*}M_{0}^{w_{0}}D_{0}\circ...\circ D^{-1}_{N-2}M_{N-1}^{w_{n-1}}=M^{w_0}_{0}\circ ...\circ M^{w_{n-1}}_{N-1}.\end{align*}
If the result is $Q_{rej}$ then $\mathcal{P}(w)=0$ and if the result is $Q_{acc}$ then $\mathcal{P}(w)=1$ and undefined otherwise.

Preparing $|OT\rangle_n$ requires $nO(N+\texttt{s})$ space and time. Next, preparing $E(\mathcal{P})$ requires $O(N)$ space and time. Hence, $\lvert \mathcal{O}_{\texttt{s}}(\mathcal{P})\rvert \in O(n\texttt{s}+Nn)$.

It is clear that both the sender and evaluator require no qmemory since Construction \ref{con:QOT} for oblivious transfer requires no qmemory by either party. 
\qed
\end{proof}

\subsection{Security}
\label{sec:security}

Before proving security, we mention the following lemma which states that a single path in a branching program randomized with Kilian's method reveals nothing except for the evaluation of the program on the input corresponding to that path. 

\begin{lemma}[Kilian Randomization \cite{k88}]
\label{path}
For any two sets $\{M_{j}\}_{j\in[N]}$ and $\{M_{j}'\}_{j\in[N]}$ of matrices such that $M_j,M_j'\in \hil{M}_k$ and $\prod_{j\in N}M_j=\prod_{j\in N}M'_j$, there exists matrices $(D_{j})_{j\in [N-1]}$ such that $D_j\in \hil{M}_k$ and
\begin{align*}
\prod_{j\in N}M_j'=\prod_{j\in N}D_{j-1}^{-1}M_jD_j .
\end{align*}
\end{lemma}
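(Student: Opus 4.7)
The plan is to construct the matrices $D_j$ explicitly by induction, in fact establishing the stronger per-position identity $D_{j-1}^{-1} M_j D_j = M_j'$ for every $j \in [N]$. The product identity in the statement then follows immediately by multiplying these identities in order.

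First I would define the sequence by setting $D_{-1} := I$ and, for $j = 0, 1, \ldots, N-2$, putting $D_j := M_j^{-1} D_{j-1} M_j'$. Since $\hil{M}_k$ is closed under inversion and multiplication (permutation matrices form a group), each $D_j$ lies in $\hil{M}_k$ as required. By construction, the per-position identity $D_{j-1}^{-1} M_j D_j = M_j'$ holds for every $j \leq N-2$.

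The remaining verification is the boundary case $j = N-1$, where $D_{N-1}$ is (by the convention recalled in the paper) the identity. Unrolling the recurrence yields
$$D_{N-2} = M_{N-2}^{-1} \cdots M_0^{-1} M_0' \cdots M_{N-2}',$$
so that $D_{N-2}^{-1} M_{N-1} = (M_{N-2}')^{-1}\cdots (M_0')^{-1} \bigl(M_0 M_1 \cdots M_{N-1}\bigr)$. This is the only step that uses the hypothesis $\prod_j M_j = \prod_j M_j'$: substituting turns this expression into $(M_{N-2}')^{-1}\cdots (M_0')^{-1} M_0' \cdots M_{N-1}' = M_{N-1}'$, which is exactly $D_{N-2}^{-1} M_{N-1} D_{N-1}$ since $D_{N-1}=I$. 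Taking the product over all $j \in [N]$ of the per-position identities then gives the stated equality.

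There is no genuine obstacle here beyond keeping the non-commutative order of multiplication straight when unrolling the recurrence; the hypothesis $\prod_j M_j = \prod_j M_j'$ is invoked at exactly one place, namely in forcing the right boundary $D_{N-1}$ to be $I$. This matches the intuition sketched earlier in the paper: the Kilian-randomized encoding $D_{j-1}^{-1} M_j D_j$ of a consistent path can be viewed as the encoding of any other sequence with the same product (i.e., the same evaluation outcome), since suitable randomizers $D_j \in \hil{M}_k$ always exist — hence a single path leaks nothing beyond that product.
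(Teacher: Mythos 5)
Your proof is correct. Note that the paper itself gives no proof of this lemma; it is stated with a citation to Kilian \cite{k88}, so there is no in-paper argument to compare against. Your explicit recursive construction $D_{-1}=I$, $D_j := M_j^{-1}D_{j-1}M_j'$ is the standard one, and your unrolling and use of the hypothesis at the right boundary are both carried out correctly. Two remarks. First, the displayed equality in the lemma as literally written is vacuous: with the boundary convention $D_{-1}=D_{N-1}=I$, the product $\prod_{j}D_{j-1}^{-1}M_jD_j$ telescopes to $\prod_j M_j$ for \emph{any} choice of the $D_j$, so the stated identity follows from the hypothesis alone. The real content of the lemma is exactly the stronger per-position identity $D_{j-1}^{-1}M_jD_j=M_j'$ that you prove, which is what the note following the lemma (and the simulator in Theorem \ref{main}) actually relies on: a Kilian-randomized path for $(M_j)_j$ is distributed identically to one for any other sequence $(M_j')_j$ with the same product. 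Second, for that simulation argument one also wants that when the $D_j$ are drawn uniformly from $\hil{M}_k$ the two randomized paths have identical distributions; this follows from your construction because the map $(D_j)_j\mapsto(M_j^{-1}D_{j-1}M_j'\text{-shifted }D_j)$ is a bijection on $\hil{M}_k^{N-1}$, but it is worth stating if the lemma is to be used as the paper uses it.
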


\note{Due to this randomization, a simulator can simulate a path in a Kilian randomized program $(D_{j-1}^{-1}M^{w_{j \bmod n}}_jD_j)_{j\in [N]}$ by simply querying the oracle on $w$ and choosing random permutation matrices $M_j'$ whose composition is equal to $\prod_{j\in N}D_{j-1}^{-1}M^{w_{j \bmod n}}_jD_j$.}

Now we are ready to prove security. 

\begin{theorem}[One-Time Branching Programs]
\label{main}
Let $n,\texttt{s} \in \mathbb{N}$ and let $(P_i)_{i\in [p]}$ be a set of branching programs on domain $\{0,1\}^n$ and polynomial (in $n$) lengths $(N_i)_{i\in [p]}$ of width 5. Then for any adversary $\adv_{\texttt{s}}$, there exists a simulator $\mathcal{S}_{\texttt{s}}$ such that, 
\begin{align*}|Pr[\adv((\mathcal{O}_{\texttt{s}}(P_i))_{i\in[p]})=1]-Pr[ \mathcal{S}_{\texttt{s}}^{1\mathcal{P}_0,...,1\mathcal{P}_{p-1}}(|0\rangle^{\otimes N_0},...,|0\rangle^{\otimes N_{p-1}})=1]\rvert \leq \negl[n].\end{align*}
\end{theorem}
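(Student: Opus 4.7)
The plan is to construct a simulator $\mathcal{S}_{\texttt{s}}$ that, for each branching program $P_i$, sends the adversary a ``fake'' one-time program transmission constructed without knowledge of $P_i$, extracts the effective input $w_i$ the adversary is evaluating on, makes a single oracle query $P_i(w_i)$, and patches the fake transmission to be consistent with the oracle's answer. Indistinguishability will be established via a hybrid argument built on the sender-security of oblivious transfer (Theorem~\ref{QOT}), the extraction result (Lemma~\ref{determin C}), and Kilian's path-hiding property (Lemma~\ref{path}).

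I would first handle the single-program case ($p=1$) and then lift to general $p$ by a straightforward outer hybrid. For a single program $P$ of length $N$ on input length $n$, the argument proceeds through $n+1$ intermediate hybrids. \textbf{Hybrid $0$} is the honest transmission $\mathcal{O}_{\texttt{s}}(P) = (|OT\rangle_n, E(P))$. In \textbf{Hybrid $i$} for $1 \le i \le n$, the $i$-th OT block $\rho_{m,\ell}(s_i^0, s_i^1)$ is replaced by one in which $s_i^{1 - C_i}$ is resampled uniformly at random, where $C_i$ is the binary random variable guaranteed by the sender-security of $\otqot{m,\ell}$. With $m = 250N + 4\texttt{s}$ and $\ell = 25N/n$ we have $m/4 - 2\ell - \texttt{s} \in \Omega(m)$, so each such step changes $\advs$'s view by $\negl[n]$. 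In the final \textbf{Hybrid $n+1$}, the simulator invokes Lemma~\ref{determin C} on the sub-circuit of $\advs$ that processes each OT block to extract approximate distributions of the two decoded strings, compares their min-entropies to read off the bit $c_j$ that the adversary has effectively requested, then samples $w = c_1\cdots c_n$ and queries $P(w)$. It replaces the encrypted program $E(P)$ by a Kilian-simulated one: a sequence of uniformly padded random matrices whose unpadded product along the path selected by $w$ equals $Q_{acc}$ if $P(w)=1$ and $Q_{rej}$ otherwise. By Lemma~\ref{path}, such a sequence is distributed identically to the real $(D_{j-1}^{-1}M_j^{w_{j \bmod n}}D_j)_{j\in[N]}$ revealed along that path, and since the pads $B_j^{1-w_{j\bmod n}}$ are already statistically uniform (by Hybrids $1,\dots,n$), no additional information leaks, so this replacement is indistinguishable from Hybrid $n$.

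To handle multiple programs, I would iterate the above across $i\in[p]$ in an outer hybrid: in pass $i$, the transmissions for $P_0,\dots,P_{i-1}$ have already been replaced by simulated ones, and $P_i$ is then simulated in the same manner while the adversary's state from earlier passes is held fixed. Because each pass changes $\advs$'s view by at most $\negl[n]$ and $p\in \poly[n]$, a triangle inequality gives the overall bound $\negl[n]$ claimed by the theorem.

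The main obstacle will be justifying Hybrid $n+1$: showing that the bit $c_j$ extracted from $\advs$'s transcript coincides (up to negligible error) with the $C_j$ frozen by the hybrid OT replacements. The subtlety is that $C_j$ is defined relative to the receiver's internal actions on the $j$-th OT block alone, whereas the simulator must infer it from the entire global transcript, which depends on all $n$ OT blocks and on $E(P)$. I expect to resolve this by applying Lemma~\ref{determin C} locally to the portion of the adversary's circuit that processes the $j$-th OT block before its memory bound triggers --- mirroring the observation in Appx.~\ref{app:QOT} that ``$C$ can be determined from the actions of the receiver on the received qubits'' --- and arguing that the adversary's later operations on $E(P)$ cannot shift $C_j$ because at that point the pads $B_j^b$ appear uniform. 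A secondary care point is coordinating the memory bounds across the $n$ OT blocks so that Theorem~\ref{QOT} applies uniformly to each block when invoked within the hybrids.
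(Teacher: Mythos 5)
Your proposal is correct and follows essentially the same route as the paper's proof: OT sender-security (via the uncertainty relation, min-entropy splitting, chain rule, and privacy amplification) to randomize the unchosen pad strings, Lemma~\ref{determin C} to extract the selection bits at the moment the memory bound applies, and Lemma~\ref{path} to replace $E(P)$ by a padded constant program consistent with the single oracle answer; your hybrid organization is just an unrolled version of the paper's single chain of trace-distance bounds. The ``main obstacle'' you flag dissolves exactly as you anticipate: the paper \emph{defines} $C_{k,i}$ by the threshold condition of the splitting lemma applied to the distribution of $\tilde{X}_{k,i,1}$, which is precisely the distribution Lemma~\ref{determin C} computes from the adversary's actions on the qubits before $\theta$, $f$, $e$, and $E(P)$ are revealed, and the per-block entropy analysis conditions on the full $\texttt{s}$-qubit register so the blocks need not be treated independently.
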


\begin{proof}
Without loss of generality, assume the length of all the programs is $N$ and let $m=250N+4\texttt{s}$. All matrices in the proof are in $\hil{M}_5$. The entire interaction between the adversary and the one-time programs can be summarized as follows.

\smallskip \noindent\fbox{%
  \parbox{\textwidth}{%
\textbf{$\mathcal{A}_{\texttt{s}}(\mathcal{O}_{\texttt{s}}(\mathcal{P}_0),...,\mathcal{O}_{\texttt{s}}(\mathcal{P}_{p-1}))$}

{\small
\begin{enumerate}
\item $\mathcal{A}_{\texttt{s}}$ is run. 

\item $\mathcal{A}_{\texttt{s}}$ requests one-time programs of $\mathcal{P}_0,\mathcal{P}_2,...,\mathcal{P}_{p-1}$. 

\item For $k\in [n]$, $i\in [p]$, and $j\in [N]$, $\mathcal{O}_{\texttt{s}}$ randomly chooses ${x}_{k,i}\in_R\{0,1\}^{m}$, ${\theta}_{k,i} \in_R\{+,\times \}^{m}$, and ${f}_{k,i,b}\in_R \textsf{H}$.

\item $\mathcal{O}_{\texttt{s}}$ sends $|R\rangle$ to $\mathcal{A}_{\texttt{s}}$ where
\begin{align*}|{R}\rangle \coloneqq \bigotimes_{i=1}^p\bigotimes_{k=1}^n |{x}_{k,i}^1\rangle_{{\theta}_{k,i}^1},|{x}_{k,i}^2\rangle_{{\theta}_{k,i}^2},...,|{x}_{k,i}^{m}\rangle_{{\theta}_{k,i}^{m}}.\end{align*}

\item[*] \texttt{$\adv_{\texttt{s}}$'s memory bound $\texttt{s}$ applies.}

\item $\mathcal{O}_{\texttt{s}}$ chooses random matrices $(B_{j,i}^b)_{j,i,b}$ and $(D_{j,i})_{j,i}$ and correspondingly determines the appropriate strings $(s_{k,i,b})_{k,i,b}$. 

\item $\mathcal{O}_{\texttt{s}}$ sends ${\theta}_{k,i}$, ${f}_{k,i,b}$ and ${e}_{k,i,b}\coloneqq {f}_{k,i,b}({x}_{k,i}|_{{I}_{k,i,b}}) \oplus {s}_{k,i, b}$ where ${I}_{k,i,b}\coloneqq \{k:{\theta}_{k,i}=[+,\times]_b\}$.

\item $\mathcal{O}_{\texttt{s}}$ sends the encoded programs $(B^b_{j,i}D^{-1}_{j-1,i}M^b_{j,i}D_{j,i})_{j\in [N],i\in [p]}$ to $\mathcal{A}_{\texttt{s}}$. 
\end{enumerate} }
  }}
\smallskip

The simulator's algorithm is as follows. 

\smallskip \noindent\fbox{%
  \parbox{\textwidth}{%
\textbf{Simulator $\mathcal{S}_{\texttt{s}}^{1\mathcal{P}_0,...,1\mathcal{P}_{p-1}}(|0\rangle^{\otimes N})$}

{\small
\begin{enumerate}
\item $\mathcal{A}_{\texttt{s}}$ is run. 

\item $\mathcal{A}_{\texttt{s}}$ requests one-time programs for $\mathcal{P}_0,...,\mathcal{P}_{p-1}$. 

\item For $k\in [n]$, $j\in [N]$, and $i\in [p]$, $\mathcal{S}_{\texttt{s}}$ randomly chooses $\tilde{x}_{k,i}\in_R\{0,1\}^{m}$, $\tilde{\theta}_{k,i} \in_R\{+,\times \}^{m}$, and $\tilde{f}_{k,i,b}\in_R \textsf{H}$.

\item $\mathcal{S}_{\texttt{s}}$ sends to $\mathcal{A}_{\texttt{s}}$:
\begin{align*}|\tilde{R}\rangle \coloneqq \bigotimes_{i=1}^p\bigotimes_{k=1}^n |\tilde{x}_{k,i}^1\rangle_{\tilde{\theta}_{k,i}^1},|\tilde{x}_{k,i}^2\rangle_{\tilde{\theta}_{k,i}^2},...,|\tilde{x}_{k,i}^{m}\rangle_{\tilde{\theta}_{k,i}^{m}}.\end{align*}

\item[*] \texttt{$\adv_{\texttt{s}}$'s memory bound $\texttt{s}$ applies.}

\item $\mathcal{S}_{\texttt{s}}$ analyses the action of $\mathcal{A}_{\texttt{s}}$ on $|\tilde{R}\rangle$ and determines the selection bits $( \tilde{c}_{k,i})_{k,i}$. 

\item For $i\in [p]$, $\mathcal{S}_{\texttt{s}}$ queries the oracle to learn $\mathcal{P}_i(\tilde{c}_{0,i}...\tilde{c}_{n,i})$. Let $\tilde{P}_i\coloneqq (\tilde{M}_{j,i})_{j\in [N],b\in\{0,1\}}$ be a constant branching program mapping all values to $ \mathcal{P}_i(\tilde{c}_{0,i}...\tilde{c}_{n,i})$.

\item $\mathcal{S}_{\texttt{s}}$ chooses random matrices $(\tilde{B}_{j,i}^b)_{j,i,b}$ and $(\tilde{D}_{j,i})_{j,i}$ and correspondingly chooses the appropriate strings $(\tilde{s}_{k,i,b})_{k,i,b}$. 

\item $\mathcal{S}_{\texttt{s}}$ sends $\tilde{\theta}_{k,i}$, $\tilde{f}_{k,i,b}$ and $\tilde{e}_{k,i,b}\coloneqq \tilde{f}_{k,i,b}(\tilde{x}_{k,i}|_{\tilde{I}_{k,i,b}}) \oplus \tilde{s}_{k,i, b}$ where $\tilde{I}_{k,i,b}\coloneqq \{k:\tilde{\theta}_{k,i}=[+,\times]_b\}$.

\item $\mathcal{S}_{\texttt{s}}$ sends the encoded programs $(\tilde{B}^b_{j,i}\tilde{D}^{-1}_{j-1,i}\tilde{M}^b_{j,i}\tilde{D}_{j,i})_{j\in [N],i\in [p]}$ to $\mathcal{A}_{\texttt{s}}$. 

\item $\mathcal{S}_{\texttt{s}}$ outputs the output of $\mathcal{A}_{\texttt{s}}$.
\end{enumerate} }
  }}
\smallskip

The above simulation is indistinguishable with respect to the adversary to the following purified version. Note in this version, the simulator has unbounded qmemory. 

\smallskip \noindent\fbox{%
  \parbox{\textwidth}{%
\textbf{Purified Simulator $\mathcal{S}^{1\mathcal{P}_0,...,1\mathcal{P}_{p-1}}(|0\rangle^{\otimes N})$}

{\small
\begin{enumerate}
\item $\mathcal{A}_{\texttt{s}}$ is run. 

\item $\mathcal{A}_{\texttt{s}}$ requests one-time programs for $\mathcal{P}_0,...,\mathcal{P}_{p-1}$. 

\item For $w\in [m]$, $k\in [n]$, $i\in [p]$, $\mathcal{S}$ prepares an EPR pair $|\phi_{k,i,w}\rangle\coloneqq \frac{|00\rangle +|11\rangle}{\sqrt{2}}$ and sends one half to $\advs$ and stores the other half. 

\item[*] \texttt{$\adv_{\texttt{s}}$'s memory bound $\texttt{s}$ applies.}

\item For $w\in [m]$, $k\in [n]$, $i\in [p]$, $\mathcal{S}$ chooses $\tilde{\theta}_{k,i,w} \in_R\{+,\times \}$ and measures its half of $|\phi_{k,i,w}\rangle$ in basis $\tilde{\theta}_{k,i,w}$ and obtains an outcome $\tilde{x}_{k,i,w}$. 

\item $\mathcal{S}$ analyses the action of $\mathcal{A}_{\texttt{s}}$ on its halves of the EPR pairs and determines the selection bits $( \tilde{c}_{k,i})_{k,i}$. 

\item For each $i\in [p]$, $\mathcal{S}$ queries the oracle to learn $\mathcal{P}_i(\tilde{c}_{0,i}...\tilde{c}_{n,i})$. Let $\tilde{\mathcal{P}}_i\coloneqq (\tilde{M}_{j,i})_{j\in [N],b\in\{0,1\}}$ be a constant branching program mapping all values to $ \mathcal{P}_i(\tilde{c}_{0,i}...\tilde{c}_{n,i})$.

\item $\mathcal{S}_{\texttt{s}}$ chooses random matrices $(\tilde{B}_{j,i}^b)_{j,i,b}$ and $(\tilde{D}_{j,i})_{j,i}$ and correspondingly chooses the appropriate strings $(\tilde{s}_{k,i,b})_{k,i,b}$. 

\item For $k\in [n]$, $i\in [p]$ and $b\in \{0,1\}$, $\mathcal{S}$ chooses $\tilde{f}_{k,i,b}\in_R \textsf{H}$.

\item $\mathcal{S}$ sends $\tilde{\theta}_{k,i}$, $\tilde{f}_{k,i,b}$, $\tilde{e}_{k,i,b}\coloneqq \tilde{f}_{k,i,b}(\tilde{x}_{k,i}|_{\tilde{I}_{k,i,b}}) \oplus \tilde{s}_{k,i, b}$ where $\tilde{I}_{k,i,b}\coloneqq \{k:\tilde{\theta}_{k,i}=[+,\times]_b\}$.

\item $\mathcal{S}_{\texttt{s}}$ sends the encoded programs $\tilde{E}(\tilde{\mathcal{P}}_i)\coloneqq (\tilde{B}^b_{j,i}\tilde{D}^{-1}_{j-1,i}\tilde{M}^b_{j,i}\tilde{D}_{j,i})_{j\in [N],i\in [p]}$ to $\mathcal{A}_{\texttt{s}}$. 

\item $\mathcal{S}$ outputs the output of $\mathcal{A}_{\texttt{s}}$.
\end{enumerate} }
  }}
\smallskip

We prove that the purified version is a simulation of the adversary. The non-purified version is indistinguishable from the purified version since all the actions of the simulator commute with the adversary. This is a standard argument that was also used to prove oblivious transfer security in \cite{dfrss07}.

Let $E$ be a random state that describes $\advs$'s stored state after the qmemory bound applies. Let $\tilde{X}_{k,i}$ be the random variable describing $\mathcal{S}$'s outcome from measuring its half of the EPR pairs $|\phi_{k,i,w}\rangle$ for $w\in [m]$ in basis $\tilde{\Theta}_{k,i}$. 

The uncertainty relation, Lemma \ref{uncertinity relation}, implies that for $\gamma =\frac{1}{2m}$, there exists $\epsilon'$ negligible in $m$ such that $H_{\infty}^{\epsilon'}(\tilde{X}_{k,i}|\tilde{\Theta}_{k,i})\geq \frac{m}{2}-1$. Let $\tilde{X}_{k,i,c}\coloneqq \tilde{X}_{k,i}|_{\tilde{I}_{k,i,c}}$ for $c\in \{0,1\}$. By the Conditional Min-Entropy Splitting Lemma \ref{splitting first} and the Chain Rule \ref{chain}, there exists a random variable $\tilde{C}_{k,i}$ such that $H_{\infty}^\epsilon(\tilde{X}_{k,i,\tilde{C}_{k,i}}|\tilde{C}_{k,i}\tilde{\Theta}_{k,i}E)\geq \frac{m}{8}-\texttt{s}-2$ for $\epsilon=\epsilon'+\frac{1}{2^{m/8}}$. The variable $\tilde{C}_{k,i}$ is determined by the distribution of $\tilde{X}_{k,i,1}$ as described in the proof of the Conditional Min-Entropy Splitting Lemma \ref{splitting first} given in \cite{dfrss07}. The simulator can determine this distribution through a deep analysis of the adversary's circuit as described in Lemma \ref{determin C}. If $\Pr{[\tilde{X}_{k,i,1}=\tilde{x}_{k,i,1}]}\geq 2^{-m/4-2}$, then $\tilde{C}_{k,i}=1$ and $\tilde{C}_{k,i}=0$ otherwise. It is important to note that $\tilde{C}_{k,i}$ is completely determined once the distribution of $\tilde{X}_{k,i,1}$ is determined which in turn is determined given $\advs$ and $\tilde{\Theta}_{k,i}$. Let $\tilde{S}_{k,i,b}$ be the random variable which describes the string $\tilde{s}_{k,i,b}$. It was shown in the proof of security of oblivious transfer (Theorem \ref{QOT}) that $\tilde{S}_{k,i,1-{C}_{k,i}}$ is indistinguishable from a random string with respect to $\mathcal{A}_{\texttt{s}}$ and, hence, the simulator chooses these strings randomly. More explicitly, 
\begin{align*}\delta (\rho_{\langle \tilde{S}_{k,i,1\textsf{-}\tilde{C}_{k,i}}\tilde{S}_{k,i,\tilde{C}_{k,i}}\tilde{C}_{k,i}\tilde{E}(\tilde{\mathcal{P}}_i)\rangle_{k,i}\mathcal{A}}, \mathds{1} \otimes \rho_{\langle \tilde{S}_{k,i,\tilde{C}_{k,i}}\tilde{C}_{k,i}\tilde{E}(\tilde{\mathcal{P}}_i))\rangle_{k,i} \mathcal{A}})\leq \negl[n]. \end{align*}

$\mathcal{S}$ uses its oracle access to receive evaluations $\mathcal{P}_i(\tilde{C}_{0,i}\tilde{C}_{1,i}...\tilde{C}_{n-1,i})$ for all $i\in [p]$. Assume $\mathcal{P}_0(\tilde{C}_{0,0}...\tilde{C}_{n-1,0})=0$, then $\mathcal{S}$ chooses $\tilde{M}_{0,0}^{0}=\tilde{M}_{0,0}^{1}=Q_{\text{rej}}$ and sets all other $\tilde{M}_{j,0}^{b}$ as the identity so that the fake program evaluates to 0 as well. 

By Lemma \ref{path},
\begin{align*}
  \delta ( \rho_{\langle S_{k,i,\tilde{C}_{k,i}}\tilde{C}_{k,i} E(\mathcal{P}_i)\rangle_{k,i}\mathcal{A}},\rho_{\langle \tilde{S}_{k,i,\tilde{C}_{k,i}}\tilde{C}_{k,i}\tilde{E}(\tilde{\mathcal{P}}_i)\rangle_{k,i}\mathcal{A}})=0 .
\end{align*}

Note that the variable $\tilde{C}_{k,i}$ has the same distribution as $C_{k,i}$.

To sum up,
\begin{align*}
\rho_{{\langle {S}_{k,i,1\textsf{-}C_{k,i}}S_{k,i,{C}_{k,i}}{C}_{k,i}E(\mathcal{P}_i)\rangle_{k,i}}\mathcal{A}}&\approx_{\negl} \mathds{1} \otimes \rho_{\langle {S}_{k,i,C_{k,i}}C_{k,i}E(\mathcal{P}_i)\rangle_{k,i}\mathcal{A}} \\
&=\mathds{1} \otimes \rho_{\langle \tilde{S}_{k,i,C_{k,i}}C_{k,i}\tilde{E}(\tilde{\mathcal{P}}_i)\rangle_{k,i}\mathcal{A}}\\
&\approx_{\negl} \rho_{\langle \tilde{S}_{k,i,1\textsf{-}C_{k,i}}\tilde{S}_{k,i,{C}_{k,i}}{C}_{k,i}\tilde{E}(\tilde{\mathcal{P}}_i)\rangle_{k,i}\mathcal{A}} \\
&\approx_{\negl} \rho_{\langle \tilde{S}_{k,i,1\textsf{-}\tilde{C}_{k,i}}\tilde{S}_{k,i,{\tilde{C}}_{k,i}}{\tilde{C}}_{k,i}\tilde{E}(\tilde{\mathcal{P}}_i)\rangle_{k,i}\mathcal{A}} .
\end{align*}

The first state belongs to the adversary while the final state is generated by the simulator. 
\qed
\end{proof}

Note that security holds independent of any auxiliary input provided to the simulator and adversary. The security is built on the adversary's ignorance of the matrix sets $\{D_{i}\}_{i\in[N]}$ and $\{B_{i}^b\}_{i\in [N]}$ used in the construction which cannot be revealed in any auxiliary inputs since these sets are part of the randomness used in the construction. 

Furthermore, the one-time programs also satisfy a form of receiver security which will be useful in proving IND-CCA1 encryption security. This is a direct result of receiver security of oblivious transfer (Theorem \ref{QOT}). 

\begin{theorem}
\label{receiver security}
For any dishonest sender and honest evaluator $\mathcal{E}$ following the protocol described in Construction \ref{con:branching program}, there exists a branching program $P'$ such that $\mathcal{E}$ obtains $P'(w)$ for any input $w$ chosen by $\mathcal{E}$. 
\end{theorem}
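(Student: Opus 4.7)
The plan is to reduce everything to the receiver security of the underlying oblivious transfer. Fix an arbitrary (possibly dishonest) sender $\hat{\mathcal{O}}$ and let $\mathcal{E}$ be the honest evaluator following Construction \ref{con:branching program} on input $w \in \{0,1\}^n$. The transmission consists of $n$ independent $\otqot{m,\ell}$ subprotocols (producing the state $|OT\rangle_n$) followed by a purely classical message $\hat{E}$ which an honest sender would set to $E(\mathcal{P})$ but which a dishonest $\hat{\mathcal{O}}$ may choose arbitrarily.

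First, I would apply the $\epsilon$-receiver-security clause of Definition \ref{QOT} (which Construction \ref{con:QOT} satisfies, in fact perfectly, by Theorem \ref{QOT}) to each of the $n$ OT subprotocols in parallel. This yields, for every $i \in [n]$, well-defined random variables $\hat{s}_{i,0}', \hat{s}_{i,1}'$ that are a function of $\hat{\mathcal{O}}$'s behavior alone (independent of the bit $w_i$ that $\mathcal{E}$ uses), such that the output $y_i$ of $\mathcal{E}$ in the $i$-th OT equals $\hat{s}_{i,w_i}'$ except with negligible probability. Since the $n$ OT instances use independent randomness on the sender side, we may condition on $\hat{\mathcal{O}}$'s internal coins and treat $\{\hat{s}_{i,b}'\}_{i \in [n], b \in \{0,1\}}$ as fixed strings depending only on $\hat{\mathcal{O}}$.

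Next, I would define the branching program $P'$ directly from these extracted strings and from the classical message $\hat{E}$. Parse each $\hat{s}_{i,b}'$ as an encoding of matrices $(\hat{B}_{j,b}')_{j \equiv i \pmod n}$, and parse $\hat{E}$ as a sequence $(\hat{E}_j^b)_{j \in [N], b \in \{0,1\}}$ of $5 \times 5$ binary matrices (rejecting if the classical message has the wrong length). Define the $j$-th instruction of $P'$ by $(M_j')^b \coloneqq (\hat{B}_{j,b}')^{-1} \hat{E}_j^b$ when $\hat{B}_{j,b}'$ is invertible, and declare $P'(w) \coloneqq \texttt{undef}$ on any input touching a non-invertible slot; the remaining outputs are defined by composing the $(M_j')^{w_{(j \bmod n)}}$ and mapping to $0, 1,$ or $\texttt{undef}$ according to whether the product equals $Q_{\text{rej}}, Q_{\text{acc}}$, or neither. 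Crucially, the description of $P'$ is a deterministic function of $\hat{\mathcal{O}}$'s transcript and contains no reference to $w$.

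It then remains to verify that $\mathcal{E}$'s decoding procedure of Lemma \ref{functionality}, when applied to the cheating transcript, produces precisely $P'(w)$ with overwhelming probability. This follows by direct substitution: $\mathcal{E}$ reads off $(\hat{B}_{j,w_{(j \bmod n)}}')_{j \in [N]}$ from the OT outputs (which match $\hat{s}_{i,w_i}'$ up to negligible error by receiver security), inverts them against the corresponding slots of $\hat{E}$, and composes the result, which by construction is $\prod_j (M_j')^{w_{(j \bmod n)}}$. The only subtle point---and the main thing to be careful about---is the edge-case bookkeeping when the dishonest sender sends malformed data (wrong-length classical message, non-permutation matrices, non-invertible $\hat{B}$'s); these cases can all be absorbed into the $\texttt{undef}$ output of $P'$, so the statement still holds with $P'$ defined as above. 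No quantum reasoning beyond the receiver-security invocation is needed, since after extraction of the $\hat{s}_{i,b}'$ the remainder of the argument is entirely classical and syntactic.
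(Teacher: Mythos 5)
Your proposal is correct and takes essentially the same approach as the paper, which proves this theorem only by the one-line remark that it is ``a direct result of receiver security of oblivious transfer (Theorem \ref{QOT})''; your writeup is a faithful elaboration of exactly that reduction, extracting the strings $\hat{s}_{i,b}'$ from the $n$ OT instances independently of the choice bits and defining $P'$ syntactically from them and the classical message. The only cosmetic caveat is that when the sender is malformed the resulting $P'$ need not consist of permutation matrices as in Definition \ref{def:mbp}, but absorbing those cases into the $\texttt{undef}$ output, as you do, is the natural reading of the theorem.
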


\subsection{Extending to $\textbf{NC}^1$ circuits}
\label{sec:nc}

Let $C$ be a circuit in $\textbf{NC}^1$ of depth $d$ with $n_0$ inputs and $n_1$ outputs. First, $C$ is split into $n_1$ circuits with 1-bit outputs. Each circuit is then transformed into a branching program $P_{\ell}=(M_{j,\ell}^b)_{j\in n_04^d}$ using Barrington's Theorem \cite{B89}. Note that the length of the branching program depends on the circuit, however, any $\textbf{NC}^1$ circuit of depth $d$ can be transformed into a branching program of length $n_04^d$ (see Sec.~\ref{sec:branching}). We pad the program with identity instructions if necessary to ensure the length is $n_04^d$ as otherwise the length of the program would reveal information regarding the circuit.

The one-time compiler follows the same procedure except with a slight modification. Like before, the protocol randomly chooses $(B_{j,l}^b)_{j\in n_04^d,l\in [n_1]}$ and $(D_{j,\ell})_{j\in n_04^d-1,\ell\in [n_1]}$ and encodes the programs as:
\begin{align*}
  E(P_\ell)= (B^b_{j,\ell}D^{-1}_{j-1,\ell}M_{j,l}^{b}D_{j,\ell})_{j\in [n_04^d],b\in \{0,1\}} .
\end{align*}
However, these programs are decoded jointly: each string $s_{k}^b$ reveals $B^b_{j,\ell}$ for all $j=k\bmod n_0$ and $ \ell \in [n_1]$. Hence, all the programs can only be evaluated on the same input. However, this means the strings $s_k^b$ are longer so a larger $m$ value in the OT transmissions $\rho_{m,\ell}(s_{k}^0,s_{k}^1)$ is required to maintain security (Theorem \ref{QOT}). The value $m= 250n_04^dn_1+4n_1\texttt{s}$ turns out to be sufficient. 

\subsection{Extending to Polynomial Circuits}
\label{sec:poly obfuscation}

If $C$ is a circuit of size $N\in \poly$ and depth $d$ then $\mathcal{O}_{\texttt{s}}(C)$ is obtained as follows. First $C$ is sliced into smaller circuits $C_0,C_1,...,C_{M-1}$ where $M\coloneqq \lceil \frac{d}{2\lg N}\rceil$ and $C_i\in \textbf{NC}^1$ for all $i\in [M]$. This is done by taking levels 1 to $2\lg N$ as a subcircuit which we denote as $C_0$ and levels $2\lg N$ to $4\lg N$ as $C_1$, and so forth. Then for each subcircuit, $C_i$, the input and output length is expanded to length $N$ and padded with a randomly chosen value $k_i\in_R\{0,1\}^N$. Correspondingly, at the beginning of $C_i$ a step is added to decrypt the one-time padded output of $C_{i-1}$ using $k_{i-1}$. The resulting circuits are denoted as $\hat{C}_{i,k_{i-1},k_i}$. Next, each $\hat{C}_{i,k_{i-1},k_i}$ is converted to a one-time program as described in Sec.~\ref{sec:nc}. 

To allow a user to evaluate $C$, a decryption circuit is used which reveals the final key $k_{M-1}$. The decryption circuit, denoted as $D_{k_{M-1}}$, takes as input the outputs of all the subcircuits $(\hat{C}_{i,k_{i-1},k_i})_{i\in [M]}$, verifies the evaluation is performed correctly and if so returns $k_{M-1}$ and if not returns $\perp$. Hence, the evaluator is required to keep track of the outputs of all the subcircuits which are used as proof that the evaluation was performed correctly i.e. the output $\hat{C}_{i-1,k_{i-2},k_{i-1}}$ is fed as input to $\hat{C}_{i,k_{i-1},k_i}$ and so on. The verification step can be done in $\textbf{NC}^1$ since all the input/output pairs for all the subcircuits are provided by the evaluator so the decryption circuit just needs to check these values. The key $k_{M-1}$ allows the evaluator to obtain the evaluation of $C$ on a single input. Note that the purpose of the decryption circuit is to reveal the key only to users which perform the calculation correctly. Hence, the key must be kept hidden which is why a one-time program of $D_{k_{M-1}}$ is sent to the evaluator instead of the actual circuit. 

\begin{construct}[One-Time Compiler for Polynomial Circuits]
\label{con:polynomial one-time}
{\small Let $C$ be a circuit of size $N\in \poly$ and depth $d$ and let $C_0$, $C_2$,..., $C_{M-1}$ be the subcircuits of $C$ of depth $2\lg N$ where $M\coloneqq \lceil \frac{d}{2\lg N}\rceil$. 
\begin{enumerate}
\item Add extra wires to extend the input and output lengths of each subcircuit to $N$.

\item Pick random binary strings $(k_i)_{i\in [M]}$ where $\lvert k_i\rvert =N$.

\item Add to the end of $C_i$ an encryption step that adds the output to $k_i$ and add at the start a decryption step that decrypts the input by adding $k_{i-1}$. Denote the resulting circuit as $\hat{C}_{i,k_{i-1},k_i}$

\item Send one-time programs $(\mathcal{O}_{\texttt{s}}(\hat{C}_{i,k_{i-1},k_i}))_{i\in [M]}$ one at a time in sequence. 

\item Construct a decryption circuit $D_{k_{M-1}}$ that takes as input $\langle (a_i,b_i)\rangle _{i\in M}$ where $a_i,b_i\in \{0,1\}^N$ except $a_0\in \{0,1\}^n$ and checks:
\begin{itemize}
  \item $\hat{C}_{i,k_{i-1},k_i}(a_i)=b_i$ for all $i\in [M]$.
  \item $b_i=a_{i+1}$ for all $i\in [M]$.
\end{itemize}
If the checks are passed, $D_{k_{M-1}}$ outputs $k_{M-1}$ and $\perp$ otherwise. 
Send $\mathcal{O}_{\texttt{s}}(D_{k_{M-1}})$. 
\end{enumerate} }\end{construct}

The idea behind the security of this protocol is that each one-time program only reveals one output but this value is one-time padded. An evaluator can only learn the pad $k_{M-1}$ of the final subcircuit using the decryption circuit by performing the evaluations correctly. 

\begin{theorem}[One-time Polynomial Circuit Security]
\label{polynomial circuit}
Let $n, \texttt{s} \in \mathbb{N}$ and let $(C)_{i\in [p]}$ be a set of circuits with domain $\{0,1\}^n$ and lengths $\{N_i\}_{i\in [p]}$ polynomial in $n$. Then for any adversary $\adv_{\texttt{s}}$, there exists a simulator $\mathcal{S}_{\texttt{s}}$ such that 
\[
\begin{split}
|Pr[\adv(\mathcal{O}_{\texttt{s}}(C_0),...,\mathcal{O}_{\texttt{s}}(C_{p-1}))=1]-Pr[ \mathcal{S}_{\texttt{s}}^{1\mathcal{C}_0,...,1\mathcal{C}_{p-1}}(|0\rangle^{\otimes N_0},...,|0\rangle^{\otimes N_{p-1}})=1]\rvert & \\
\leq \negl[n] .&
\end{split}
\]
\end{theorem}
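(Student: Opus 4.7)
The plan is a two-stage reduction that compresses the adversary's view of $(\mathcal{O}_{\texttt{s}}(C_j))_{j \in [p]}$ into a single oracle query per circuit $C_j$.

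In the first stage, I would apply the $\mathbf{NC}^1$ extension of Theorem \ref{main} (as described in Section \ref{sec:nc}) to every one-time program appearing in $\bigcup_{j \in [p]} \mathcal{O}_{\texttt{s}}(C_j)$. By construction, each padded subcircuit $\hat{C}_{i,k_{i-1},k_i}$ has depth $O(\lg N)$, and each decryption circuit $D_{k_{M-1}}$ is a polynomial-size conjunction of local consistency tests, so both families lie in $\mathbf{NC}^1$. The theorem yields an intermediate simulator $\mathcal{S}^{(1)}_{\texttt{s}}$ that, per circuit $C_j$, has a single oracle query to each of the $M$ padded subcircuits and one query to $D_{k_{M-1}^{(j)}}$; the statistical error incurred is negligible by a union bound over the $p(M+1)$ applications.

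In the second stage, I would collapse these $M+1$ oracle queries per $C_j$ into a single oracle query to $C_j$ via a level-by-level hybrid argument. The invariant is that the uniform one-time pads $k_0, \ldots, k_{M-1}$ force any usefully non-random oracle response to correspond to the unique ``correct'' chain launched from a single input $a_0$. At level $i$, the simulator applies Lemma \ref{determin C} to the operations $\mathcal{S}^{(1)}_{\texttt{s}}$ has performed so far to extract the input $a_i$ it is about to submit to its $i$-th subcircuit query. Since $k_i$ is uniform and still independent of the adversary's view at this point, the response $\hat{C}_{i,k_{i-1},k_i}(a_i) = C_i(a_i \oplus k_{i-1}) \oplus k_i$ is statistically indistinguishable from a freshly sampled uniform string unless $a_i$ equals the unique correct continuation of the chain starting from the extracted $a_0$. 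Likewise, the final query to $D_{k_{M-1}^{(j)}}$ returns $k_{M-1}$ only on the fully consistent chain, and $\perp$ otherwise. Consequently, $\mathcal{S}_{\texttt{s}}$ uses its one oracle query to learn $C_j(a_0)$, samples fresh $k_0, \ldots, k_{M-1}$, and replies to $\mathcal{S}^{(1)}_{\texttt{s}}$'s queries consistently with $(a_0, C_j(a_0))$ on the correct chain and uniformly at random elsewhere.

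The main obstacle is the careful bookkeeping of this hybrid: when I argue that the $i$-th response can be randomized outside the correct chain, the key $k_i$ still appears downstream in both $\hat{C}_{i+1,k_i,k_{i+1}}$ and in the chain check inside $D_{k_{M-1}^{(j)}}$. I would resolve this by ordering hybrids strictly left-to-right so that, once level $i$ has been handled, the only downstream uses of $k_i$ are the pre-committed correct-chain evaluation of $\hat{C}_{i+1,\cdot,\cdot}$ (whose input is pinned down by the extraction at level $i+1$) and the decryption circuit's chain check (which depends on the same extracted $a_0$). Summing the negligible errors across $i \in [M]$ and $j \in [p]$ then yields the claimed $\negl[n]$ bound.
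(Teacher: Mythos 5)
Your proposal is correct and follows essentially the same route as the paper's proof: first invoke the $\mathbf{NC}^1$ one-time-program theorem to replace each padded subcircuit and the decryption circuit by a single oracle evaluation, then use the uniformity of the one-time pads $k_0,\ldots,k_{M-1}$ together with the extraction lemma to argue that only the unique consistent chain launched from the extracted input $a_0$ yields non-random information, so one query to $C_j$ suffices. The paper phrases the second stage as a two-case analysis (consistent vs.\ inconsistent chain) rather than a level-by-level hybrid, but the underlying argument is the same.
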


A similar and detailed proof can be found in \cite{k88}.
\begin{proof}[sketch]
For simplicity, we only prove the case where the adversary has access to a single one-time program of a circuit $C$ of size $N$ and depth $d$. The proof can be easily adapted to the more general case. 

\smallskip \noindent\fbox{%
  \parbox{\textwidth}{%
\textbf{Simulator $\mathcal{S}_{\texttt{s}}^{1P}(|0\rangle^{\otimes N})$}

{\small Let $M\coloneqq \lceil \frac{d}{2\lg N}\rceil$.
\begin{enumerate}
\item $\mathcal{A}_{\texttt{s}}$ is run. 

\item $\mathcal{A}_{\texttt{s}}$ requests a copy of the one-time program of $C$. 

\item For $i\in [M]$, $\mathcal{S}_{\texttt{s}}$ picks $a_{i},b_{i}\in_R \{0,1\}^N$. Let $\tilde{C}_i$ be a $\mathbf{NC}^1$ circuit of depth $2\lg N$ mapping $x\in \{0,1\}^N$ to $a_ix+b_i \mod 2^N$.

\item For $i\in [M]$, $\mathcal{S}_{\texttt{s}}$ sends one-time programs $\mathcal{O}_{\texttt{s}}(\tilde{C}_i)$. 

\item $\mathcal{S}_{\texttt{s}}$ analyses the action of $\mathcal{A}_{\texttt{s}}$ on $\mathcal{O}_{\texttt{s}}(\tilde{C}_0)$ and determines $\tilde{w}_0$ such that $\advs$'s access to $\mathcal{O}(\tilde{C}_0)$ can be simulated with the evaluation $\tilde{C}_0(\tilde{w}_0)$.

\item For each $i\in [M]$, $\mathcal{S}_{\texttt{s}}$ calculates $\tilde{w}_i\coloneqq \tilde{C}_{i-1}(\tilde{w}_{i-1})$. 
 
\item $\mathcal{S}_{\texttt{s}}$ queries the oracle to learn $C(\tilde{w}_0)$ and constructs a decryption circuit $\tilde{D}_{\tilde{k}_{M-1}}$ which takes as input $(\langle x_i,y_i\rangle)_{i\in [M]}$ and checks if $x_i=\tilde{w}_i$ and $y_i=\tilde{w}_{i+1}$ and if so returns $\tilde{k}_{M-1}$ where $\tilde{k}_{M-1}$ is chosen to satisfy $\tilde{C}_{M-1}(\tilde{w}_{M-1})\oplus \tilde{k}_{M-1}=C(\tilde{w}_0)$. If these conditions are not satisfied then the decryption circuit outputs $\perp$. $\mathcal{S}_{\texttt{s}}$ sends $\mathcal{O}_{\texttt{s}}(\tilde{D}_{\tilde{k}_{M-1}})$ to $\mathcal{A}_{\texttt{s}}$. 

\item $\mathcal{S}_{\texttt{s}}$ outputs the output of $\mathcal{A}_{\texttt{s}}$.
\end{enumerate} }}}
  \smallskip
  
For each $i\in [M]$, the one-time program $\mathcal{O}_{\texttt{s}}(\hat{C}_{i,k_{i-1},k_i})$ can be simulated with a single evaluation, say $\hat{C}_{i,k_{i-1},k_i}(w_i)$, by Theorem \ref{main}. 

There are two cases to consider. First, if the evaluation was not performed correctly: for some $i\in [M]$, $w_{i+1}\neq {C}_{i,k_{i-1},k_i}(w_i)$. Then except with negligible probability the one-time program $\mathcal{O}(D_{k_{M-1}})$ can be simulated with $\perp$. In summary, the adversary can be simulated with the values $( {C}_{i,k_{i-1},k_i}(w_i))_{i\in [M]}$ and $\perp$. All these values are random since the keys $(k_i)_{i\in [M]}$ are never revealed so the simulation is easy and does not even require an oracle query to $C$.

In the second case, the evaluation is performed correctly: for all $i\in [M]$, $w_{i+1}= \hat{C}_{i,k_{i-1},k_{i}}(w_i)$. In this case, the program $\mathcal{O}(D_{k_{M-1}})$ can be simulated with $k_{M-1}$. Overall, the adversary can be simulated with the values $(w_i)_{i\in [M+1]}$ and $k_{M-1}$. These values are all random with the exception of $w_M$ which satisfies the relation $w_M\oplus k_{M-1}=C(w_0)$. Hence, in either case, the simulation requires at most one oracle query to $C$.

The details of the simulation are in the box above. Roughly speaking, the values $(\tilde{w}_i)_{i\in [M]}$ and $\tilde{k}_{M-1}$ are random and hence simulate $({w}_i)_{i\in [M]}$ and $k_{M-1}$, respectively. Moreover, $\tilde{w}_{M}\oplus \tilde{k}_{M-1}=C(\tilde{w}_0)$ so the relation is satisfied and $\tilde{w}_M$ simulates $w_M$. 
\qed
\end{proof}

The $1\textsf{-}2$ oblivious transfer sub-protocols in our one-time program are disappearing by Theorem \ref{disappearing OT}, giving the following result. 

\begin{theorem}
$\mathcal{O}_{\texttt{s}}$ is a disappearing information-theoretically secure one-time compiler for the class of polynomial classical circuits against any computationally unbounded adversary with $\texttt{s}$ qmemory bound. 
\end{theorem}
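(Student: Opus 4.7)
The plan is to reduce to the three ingredients already established in the paper: information-theoretic security follows from Theorem \ref{polynomial circuit} (which extends Theorem \ref{main} to polynomial circuits via the sliced subcircuit construction with one-time padded intermediate outputs and the decryption circuit $D_{k_{M-1}}$), while the disappearing and unclonable properties will be inherited from the corresponding properties of the underlying QOT sub-protocol established in Theorems \ref{disappearing OT} and \ref{unclonable OT}. So the only content to add is verifying that these two properties of QOT lift through the branching-program construction and then through the slicing construction.

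For disappearing security I would argue as follows. The quantum portion of $\mathcal{O}_{\texttt{s}}(C)$ consists entirely of independent $\otqot{m,\ell}$ transmissions (one batch per subcircuit, plus the batch for $\mathcal{O}_{\texttt{s}}(D_{k_{M-1}})$), since everything else is classical. By Theorem \ref{disappearing OT}, after the memory bound applies each QOT transmission commits the adversary to a bit $c_i$ such that $s_i^{1-c_i}$ is statistically close to uniform given the entire adversary's view, even conditioned on any classical information revealed later. For each subcircuit one-time program this means only one consistent path through the randomized branching program can be decoded, so at most one input $w_0$ can be evaluated; by the chain argument in Theorem \ref{polynomial circuit}, the decryption circuit then reveals $k_{M-1}$ only on a single consistent trace. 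Hence even when the disappearing experiment hands the adversary a fresh challenge input $x$ after the transmission ends, the probability that $x$ equals the already-committed $w_0$ is negligible, so the adversary's guess for $C(x)$ is negligibly close to uniform on a single output bit.

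For unclonability I would use the same skeleton but appeal to Theorem \ref{unclonable OT} (ultimately the monogamy bound of Theorem \ref{thm:mono}). Suppose $\mathcal{A}_{\texttt{s}}$ splits its state between isolated $\mathcal{A}'_{\texttt{s}}$ and $\mathcal{A}''_{\texttt{s}}$, who both receive a common challenge input $x$ and both output $C(x)$ with non-negligible probability. Then for at least one coordinate $i$ in at least one subcircuit's QOT batch, both receivers must jointly recover the appropriate $s_i^{x_i}$ in order to reconstruct the consistent path needed to evaluate the branching program; this gives a reduction that plants the QOT unclonability challenge into that coordinate while honestly simulating all other QOT instances, the Kilian randomization matrices, and the key-padded subcircuits, yielding a successful two-party guess and contradicting Theorem \ref{unclonable OT}.

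The main obstacle is the bookkeeping for the reduction: the simulator/reduction has to generate the decryption one-time program $\mathcal{O}_{\texttt{s}}(D_{k_{M-1}})$ before it knows which input the adversary will commit to, and in the unclonability reduction the two sides have to agree on which QOT coordinate carries the planted challenge without communicating. Both issues are handled by exactly the techniques already used to prove Theorem \ref{main}, namely EPR-purification of the QOT inputs together with the extraction Lemma \ref{determin C}: the reduction purifies all QOT batches, defers its measurements until after the memory bound, uses Lemma \ref{determin C} to read off the committed bits $(c_i)_i$ from the adversary's circuit, and only then chooses the $k_i$ and constructs $D_{k_{M-1}}$ so that consistency checks pass exactly on the committed trace. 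This makes the reduction statistically indistinguishable from the real execution while reducing disappearing and unclonable security of $\mathcal{O}_{\texttt{s}}$ directly to Theorems \ref{disappearing OT} and \ref{unclonable OT}.
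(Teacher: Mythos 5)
Your proposal is correct and follows essentially the same route as the paper: the one-time compiler security is delegated to Theorem \ref{polynomial circuit} (building on Theorem \ref{main}), and the disappearing and unclonable properties are inherited from the corresponding properties of the $\otqot{m,\ell}$ sub-protocols established in Theorems \ref{disappearing OT} and \ref{unclonable OT}. In fact you supply more of the lifting argument (the commitment to a single consistent path, and the EPR-purification plus Lemma \ref{determin C} bookkeeping for the reduction) than the paper's own one-sentence justification does.
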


\section{Encryption Tokens}
\label{sec:enc tok}

An encryption token allows its holder to encrypt only a single message while a decryption token allows its holder to decrypt only a single ciphertext. We present an information-theoretic secure scheme for encryption and decryption tokens in the BQSM. Tokens disappear after they are received. 

\subsection{Definition}
\label{sec:enc tok def}
We formally define the notions of encryption and decryption tokens. 

\begin{definition}[Encryption and Decryption Tokens] 
A $q$-time encryption token scheme over message space $\hil{M}$ consists of the following QPT algorithms:

\begin{itemize}
  \item $\textsf{Gen}(1^\lambda,q):$ Outputs a classical key $k$. 
  \item $\textsf{EncTok}(q, k):$ Outputs a state $\rho_e$ using the key $k$ (at most $q$ times). 
  \item $\textsf{Enc}(\rho_{e},\mu):$ Outputs a ciphertext $ct$ for $\mu$ using $\rho_e$. 
  \item $\textsf{DecToken}(k):$ Outputs a state $\rho_d$ using the key $k$.
  \item $\textsf{Dec}(\rho_d,ct):$ Outputs a message $\mu'$ for ciphertext $ct$ using $\rho_d$. 
\end{itemize}
\end{definition}

\begin{definition}[Correctness]
A $q$-time token scheme is \emph{correct} if for any message $\mu\in \hil{M}$,
\begin{align*} \Pr{
\left[\begin{tabular}{c|c}
 \multirow{4}{*}{$\textsf{Dec}(\rho_d,ct)=\mu\ $} & $k\ \leftarrow \textsf{Gen}(1^\lambda,q)$ \\ 
 & $\rho_e \leftarrow \textsf{EncTok}(q, k)$\\
 & $\rho_d \leftarrow \textsf{DecToken}( k)$\\
 & $ct\ \leftarrow \textsf{Enc}(\rho_e,\mu)$
 \end{tabular}\right]} \geq 1-\negl[\lambda] .
\end{align*}
\end{definition}

Token security requires that $e$ encryption and $d$ decryption tokens where $d+e< q$ can only be used to generate $d+e$ valid message-ciphertext pairs. 

\smallskip \noindent\fbox{%
  \parbox{\textwidth}{%
\textbf{Experiment} $\textsf{TokExp}_{\Pi,\mathcal{A}}(\lambda, q,e,d):$
\begin{enumerate}
  \item Sample a key $k\leftarrow \textsf{Gen}(1^\lambda,q)$. 
  \item $(\mu_i)_{i\in [d+2]}\leftarrow \adv$.
  \item Choose a random subset of $d+1$ messages $(\mu_{j_i})_{i\in [d+1]}$ from $(\mu_i)_{i\in [d+2]}$. 
  \item Generate $d+1$ encryption tokens and encrypt the chosen messages. 
  \item Send $\adv$ the resulting ciphertexts.
 \item $(m_i,ct_i)_{i\in [d+e+1]}\leftarrow \adv^{e\textsf{EncToken},d\textsf{DecToken}}$. 
 \item Generate $d+e+1$ decryption tokens $(\rho^i_{d})_{i\in [d+e+1]}$
  \item The output of the experiment is $1$ if all $m_i$ are distinct and $\textsf{Dec}(\rho^i_{d},ct_i)=m_i$ for all $i\in [d+e+1]$ and 0 otherwise.
\end{enumerate}
}}\smallskip

\begin{definition}[Security]
A $q$-time encryption token scheme $\Pi_{\texttt{s}}$ is \emph{secure} in the BQSM if for any adversary $\advs$ and $e,d\in \mathbb{N}$ such that $e+d< q$,
\begin{align*}
  \Pr{[\textsf{TokExp}_{\Pi_{\texttt{s}},\advs}(\lambda, q,e,d)=1]}\leq \frac{1}{2}+ \negl[\lambda] .
\end{align*}
\end{definition}

Note that tokens are very similar to one-time programs since an encryption token can be used to obtain the encryption of a single message chosen by the user. Hence, we can define disappearing security in the same way as for one-time programs (Sec.~\ref{def dis otp}).

\subsection{Construction}

We first construct a token scheme that allows for the distribution of a limited number of encryption tokens and unbounded polynomial number of decryption tokens. Then, we extend our construction to allow for an unbounded polynomial number of encryption tokens by using a program broadcast.

We say that a $q$-time encryption token scheme is secure if an adversary that is given $e$ encryption tokens and $d$ decryption tokens, where $e+d<q$, cannot produce more than $d+e$ valid message-ciphertext pairs. See App.~\ref{sec:enc tok def} for the formal definition.

The backbone of the construction is a $q$-time pad. The encryption and decryption tokens are simply one-time programs of the encryption and decryption functionalities in the underlying $q$-time pad scheme. 

\begin{construct}
\label{te}
{\small
The $q$-time encryption token scheme $\Pi_{\textsf{EncTok}}$ against adversaries with qmemory bound $\texttt{s}$ over message space $\hil{M}=\{0,1\}^{\ell}$ with $q$ encryption tokens is as follows.}

\begin{itemize}
  \item $\textsf{Gen}(1^\lambda,q)$: Let $n=\max( \lambda,q)+1$. Choose a $(6n)\times (3n)$ binary matrix $M$ that is left invertible uniformly at random. Let $M^{-1}$ be a left inverse of $M$. Choose a polynomial $F$ of degree $n$ over $\{0,1\}^{6n}$ uniformly at random. The key is $k=(M, M^{-1},F)$. 

  \item $\textsf{EncTok}(\ell, q,\texttt{s},k )$: 
  Pick a string $r\in_R \{0,1\}^{2n}$. Construct a program $\textsf{Enc}_{k,r}$ which takes as input $x \in\{0,1\}^\ell$ and outputs $(M\cdot {(r\|x)},F(M\cdot {(r\|x)}))$. Send the state $\rho_e\leftarrow \mathcal{O}_{\texttt{s}}(\textsf{Enc}_{k,r})$ (at most $q$ times).
  
  \item $\textsf{Enc}(\rho_e,\mu):$ Evaluate the one-time program $\rho_e$ on $\mu$ and return the output.
  
  \item $\textsf{DecToken}(\ell, \texttt{s}, k):$ 
  Construct a program $\textsf{Dec}_k$ which takes as input $x=(x_0,x_1)\in \{0,1\}^{6n}\times \{0,1\}^{6n}$ and checks if $x_1=F(x_0)$ and if so outputs the last $\ell$ bits of $M^{-1}\cdot {x}_0$ and outputs $\perp$ otherwise. Send the state $\rho_d\leftarrow \mathcal{O}_{\texttt{s}}(\textsf{Dec}_k)$. 
 
  \item $\textsf{Dec}(\rho_d, ct)$: Evaluate the one-time program $\rho_d$ on $ct$ and return the output.
\end{itemize}
\end{construct}

\begin{theorem}[Security]
\label{enctok security}
Construction \ref{te} ($\Pi_{\textsf{EncTok}}$) is secure against computationally unbounded adversaries with $\texttt{s}$ qmemory bound. Moreover, tokens are disappearing against the same class of adversaries. 
\end{theorem}

\begin{proof}
An adversary $\mathcal{A}_{\texttt{s}}$ in the $\textsf{TokExp}_{\Pi,\mathcal{A}}(\lambda, q,e,d)$ experiment has access to $e$ queries to $\textsf{EncTok}(\ell, q,\texttt{s},k )$, $d$ decryption queries $\textsf{DecToken}(\ell,\texttt{s}, k)$ and $d+1$ ciphertexts $((ct_i,ct_i'))_{i\in [d+1]}$ where $(ct_i,ct'_i)\coloneqq \textsf{Enc}_{k,r_i}(\mu_{j_i})$ and $ct'_i=F(ct_i)$ for some random string $r_i\in \{0,1\}^{2n}$. By the security of one-time programs (Theorem \ref{main poly}), such an adversary can be simulated by an algorithm $\mathcal{S}_{\texttt{s}}$ with access to a single oracle query to $\textsf{Enc}_{k,r_i}$ where $r_i\in \{0,1\}^{2n}$ is a random string for each $i\in [d+1,d+1+e]$ and $d$ queries to $\textsf{Dec}_k$. From the queries to the encryption oracle, $\mathcal{S}_{\texttt{s}}$ obtains ciphertexts $((ct_i,ct'_i))_{i\in [d+1,d+e+1]}$ where $(ct_i,ct'_i)\coloneqq Enc_{k,r_i}(\mu_i)$ for some message $\mu_i$. By Theorem \ref{lagrange 2}, since $d+e+1\leq q<n$ there is negligible probability $\mathcal{S}_{\texttt{s}}$ can guess an output $F(v)$ of a vector $v$ which is not in the set $(ct_i)_{i\in [d+e+1]}$. This means the ciphertexts submitted to the oracle of $\textsf{Dec}_k$ that are not in the set $((ct_i,ct_i'))_{i\in [d+e+1]}$ will return $\perp$ except with negligible probability. 

Without loss of generality, assume the simulator uses the oracle to decrypt the ciphertexts $((ct_i,ct_i'))_{i\in [d]}$ (it is useless to decrypt any ciphertext in $((ct_i,ct_i'))_{i\in [d+1,d+e+1]}$). After the oracle queries have been exhausted, the simulator has learned $d+e$ message-ciphertext pairs and it remains to guess one more pair to break security. As mentioned, there is negligible probability $\mathcal{S}_{\texttt{s}}$ can guess an output $F(v)$ of a vector $v$ which is not in the set $(ct_i)_{i\in [d+e+1]}$. This implies that $\mathcal{S}_{\texttt{s}}$ must decrypt the ciphertext $(ct_{d+1},ct'_{d+1})$. There remains two possible values for the hidden message $\mu_{j_{d+1}}$ say $\mu_0$ and $\mu_1$. Let the correct answer be $\mu_b$. There is negligible probability that $r_{d+1}$ is in the span of $(r_i)_{i\in [d+e+1]}/r_{d+1}$ so part 2 of Theorem \ref{Raz 2} implies that if $\mathcal{S}_{\texttt{s}}$ outputs a guess $b'$ then $\Pr{[b'=b]}\leq \frac{1}{2}+O(2^{-\lambda})$.

The encryption and decryption tokens are one-time programs so they are disappearing by Theorem \ref{main poly}.
\qed
\end{proof}

\subsection{Sparse and Ordered Tokens}
We now discuss how to modify our token scheme to allow for the distribution of an unbounded polynomial number of tokens. The idea is to simply perform a $q$-time program broadcast of the encryption and decryption tokens. More explicitly, instead of sending $\mathcal{O}_{\texttt{s}}(\textsf{Enc}_{k,\cdot})$ whenever an encryption token is requested, we perform a $q$-time program broadcast of the function $\textsf{Enc}_{k,\cdot}$. Also, instead of sending $\mathcal{O}_{\texttt{s}}(\textsf{Dec}_k)$, we broadcast $\textsf{Dec}_k$. 

Our program broadcast protocol ensures that no adversary can obtain more than $q$ encryptions regardless of the (polynomial) number of tokens distributed (Theorem \ref{broadcast security}). Security then follows in the same way as in the proof of Theorem \ref{enctok security}. 

The disadvantage of this approach is that a broadcast only lasts for a set time (see Construction \ref{con:broadcaster}). Hence, to allow for continual use, the private key needs to grow (linearly) with time or, more precisely, a new key needs to be generated for each broadcast. 

This modification naturally imposes a sparsity restraint on the tokens as there is a limit to the number of decryptions or encryptions a user can perform in a set time. With each token broadcast, a user can perform at most $q$ encryptions or decryptions. 

Another restraint that can be easily implemented is to enforce encryptions or decryptions to be performed in a specific order. This can be incorporated into the encryption or decryption token functionalities. In particular, the function $\textsf{Enc}_{k,\cdot}$ broadcasted can take additional arguments to check if the user has performed the required encryptions or decryptions prior to revealing the encryption of a message further up the order. As an interesting application of the time and order restraints used in conjunction, we can create timed encryption token. This is a token scheme where the authority can control when each ciphertext becomes decryptable. Similar notions of sparsity and order restraints were studied by Amos, Georgiou, Kiayias, and Zhandry \cite{AGK20} and by Georgiou and Zhandry \cite{GZ20} in the plain model. However, their constructions rely on heavy computational assumptions such as indistinguishability obfuscation or the existence of classical oracles, whereas our constructions are information-theoretically secure in the BQSM.

\section{Signature Tokens}

A signature token allows its holder to sign only a single message while a verification token allows its holder to verify only a single signature. We present a scheme for signature and verification tokens in the BQSM which disappear after they are received. 

\subsection{Definition}

We introduce signature tokens similar to \cite{BSS16}. 

\begin{definition}[Signature Tokens]
A \emph{$q$-time signature token scheme} $\Pi$ over message space $\hil{M}$ consists of the following QPT algorithms:
\begin{itemize}
  \item $\textsf{Gen}(1^\lambda,q):$ Outputs a classical key $k$.
  \item $\textsf{SignToken}(q, k):$ Sends a state $\rho_s$ at most $q$ times.
  \item $\textsf{Sign}(\mu, \rho_s):$ Outputs a signature $\sigma$ of $\mu\in \hil{M}$ using $\rho_s$.
  \item $\textsf{VerToken}( k):$ Sends a state $\rho_v$.
  \item $\textsf{Verify}(\rho_v , (\mu*, \sigma*)):$ Verifies $\sigma*$ is a signature of $\mu*$ using $\rho_v$ and correspondingly output a bit $b$. 
\end{itemize}
\end{definition}

\begin{definition}[Correctness]
We say a signature token scheme is \emph{correct} if for any message $\mu\in \hil{M}$,
\begin{align*} \Pr{
\left[\begin{tabular}{c|c}
 \multirow{4}{*}{$\textsf{Verify}(\rho_v,(\mu, \sigma))=1\ $} & $k\ \leftarrow \textsf{Gen}(1^\lambda,q)$ \\ 
 & $\rho_s \leftarrow \textsf{SignToken}(q, k)$\\
 & $\sigma\leftarrow \textsf{Sign}(\rho_s, \mu)$\\
 & $\rho_v \leftarrow \textsf{VerToken}( k)$\\
 \end{tabular}\right]} \geq 1-\negl[\lambda] .
\end{align*}
\end{definition}

\smallskip \noindent\fbox{%
  \parbox{\textwidth}{%
\textbf{Experiment} $\textsf{SignToken}_{\Pi,\mathcal{A}}(\lambda, q):$
\begin{enumerate}
  \item Sample a key $k\leftarrow \textsf{Gen}(1^\lambda,q)$. 
   \item $(\mu_i, \sigma_i)_{i\in [q+1]}\leftarrow \adv^{q\textsf{SignToken},\textsf{VerToken}}$.
   \item Generate $q+1$ verification tokens $(\rho_v^i)_{i\in [q+1]} \leftarrow \textsf{VerToken}(k)$. 
  \item The output of the experiment is 1 if all $\mu_i$ are distinct and $\textsf{Verify}(\rho_v^i,\mu_i, \sigma_i)=1$ for all $i\in [q+1]$ and 0 otherwise. 
\end{enumerate}}}
\smallskip

\begin{definition}[Security]
A $q$-time signature token scheme $\Pi_{\texttt{s}}$ is secure in the BQSM if for any adversary $\advs$,
\begin{align*}
  Pr[\textsf{SignToken}_{\Pi_{\texttt{s}},\advs}(\lambda, q)=1]\leq \negl[\lambda] .
\end{align*}
\end{definition}

Note that these tokens are very similar to one-time programs so we can define disappearing security in the same way as for one-time programs (Sec.~\ref{def dis otp}).

\subsection{Construction}

We now present a $q$-time signature token scheme in the BQSM. The backbone of the construction is a $q$-time MAC scheme. The signature and verification tokens are simply one-time programs of the signature and verification functionalities in the underlying $q$-time MAC scheme. 

\begin{construct}
\label{sigtok}
{\small The signature token scheme $\Pi_{\textsf{SigTok}}$ against adversaries with $\texttt{s}$ qmemory bound over message space of $\hil{M}=\{0,1\}^\ell$ consists of the following algorithms:

\begin{itemize}
  \item $\textsf{Gen}(1^\lambda,q)$: Let $n=\max(\lambda ,q)+1$. Choose a polynomial $F$ of degree $n-1$ over the field of order $2^{n}$ uniformly at random. The key is $k=F$. 
  
  \item $\textsf{SignToken}(q,\texttt{s},k )$: 
  Send the state $\rho_s\leftarrow \mathcal{O}_{\texttt{s}}(F)$ (at most $q$ times).
  
  \item $\textsf{Sign}(\rho_s,\mu):$ Evaluate the one-time program $\rho_s$ on $\mu$ and return the output.
  
  \item $\textsf{VerToken}(\texttt{s}, k):$ 
  Let $V$ be a program which takes two inputs $(\mu, \sigma)$ and outputs 1 if $F(\mu)=\sigma$ and outputs $\perp$ otherwise. Send the state $\rho_v \leftarrow \mathcal{O}_{\texttt{s}}(V)$. 
 
  \item $\textsf{Verify}(\rho_v, (\mu^*,\sigma^*))$: Evaluate the one-time program $\rho_v$ on $(\mu^*,\sigma^*)$ and return the output.
\end{itemize}}
\end{construct}

\begin{theorem}[Security]
\label{sig token security}
Construction \ref{sigtok} ($\Pi_{\textsf{SigTok}}$) is a secure token scheme against computationally unbounded adversaries with $\texttt{s}$ qmemory bound. Moreover, tokens are disappearing against the same class of adversaries. 
\end{theorem}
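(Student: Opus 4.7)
The plan is to reduce security of the token scheme to the one-time program security of $\mathcal{O}_\texttt{s}$ (Theorem \ref{main poly}) and then to the inability to extrapolate a random polynomial beyond its known evaluation points (Theorem \ref{lagrange 2}). First, I would invoke Theorem \ref{main poly} to replace each of the at most $q$ signature tokens $\mathcal{O}_\texttt{s}(F)$ fed to the adversary with a single classical oracle query to $F$, and each verification token $\mathcal{O}_\texttt{s}(V)$ with a single classical oracle query to $V$. This produces a computationally unbounded simulator $\mathcal{S}_\texttt{s}$ whose forgery success is negligibly close to that of $\mathcal{A}_\texttt{s}$ and whose interaction with the secret polynomial is restricted to these two classes of queries.

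Next, I would argue that the verification queries leak essentially no information about $F$ beyond what the signature queries already reveal. Each verification query submits a pair $(\mu,\sigma)$ and returns $1$ when $\sigma=F(\mu)$ and $\perp$ otherwise. Since $F$ is a uniformly random polynomial of degree $n-1$ over a field of order $2^n$ with $n>q$, Theorem \ref{lagrange 2} implies that after observing the at most $q$ signature-token evaluations, $\mathcal{S}_\texttt{s}$ cannot predict $F(\mu)$ at any fresh point $\mu$ except with probability $O(2^{-n})$. A union bound over the polynomial number of adaptively chosen verification queries on fresh points shows that each returns $\perp$ with overwhelming probability; queries on points where $F(\mu)$ is already known contribute nothing new. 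Thus the simulator's view is negligibly close to that of a party given only at most $q$ evaluations of $F$.

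Under this reduction, producing $q+1$ distinct valid message-signature pairs forces $\mathcal{S}_\texttt{s}$ to supply $F(\mu)$ on at least one fresh input $\mu$, which Theorem \ref{lagrange 2} again rules out up to negligible error, establishing the main security claim. The disappearing and unclonable guarantees are then immediate: since a signature token (respectively a verification token) is literally the output of $\mathcal{O}_\texttt{s}$ applied to a classical circuit, the corresponding properties of $\mathcal{O}_\texttt{s}$ proven in Theorem \ref{main poly} and Sec.~\ref{sec:OT dis and unc} transfer unchanged. The main obstacle I anticipate is the adaptive conditioning introduced by the verification oracle: $\mathcal{S}_\texttt{s}$'s later choices depend on whether previous verification queries accepted or rejected, so care is needed to carry the Lagrange bound through this adaptive process and to confirm that conditioning on a long string of $\perp$ responses does not covertly increase the effective number of distinct evaluations of $F$ that $\mathcal{S}_\texttt{s}$ can claim.
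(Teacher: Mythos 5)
Your proposal is correct and follows exactly the route the paper takes: its proof of Theorem \ref{sig token security} is a one-line appeal to one-time program security (Theorem \ref{main poly}) to reduce the tokens to single oracle queries, combined with Theorem \ref{lagrange 2} (whose statement already accommodates the polynomially many rejected verification pairs $(\hat{a}_i,\hat{b}_i)$ with $\hat{b}_i\neq F(\hat{a}_i)$) to rule out extrapolating the degree-$(n-1)$ polynomial beyond $q<n$ known points, with disappearing and unclonable security inherited from $\mathcal{O}_{\texttt{s}}$. Your elaboration, including the handling of adaptive verification queries via the negative-sample clause of Theorem \ref{lagrange 2}, fills in details the paper leaves implicit but does not depart from its argument.
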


\begin{proof}
This is a direct consequence of Theorem \ref{lagrange 2} and one-time program security (Theorem \ref{main poly}). 
\qed
\end{proof}

\subsection{Sparse and Ordered Tokens}
\label{sec:SOsig}

Similar to App.~\ref{sec:enc tok}, we can modify Construction \ref{sigtok} ($\Pi_{\textsf{SigTok}}(\lambda,\ell, q,\texttt{s})$) to allow for unbounded polynomial number of tokens. Instead of sending $\mathcal{O}_{\texttt{s}}(F)$ whenever a signature token is requested, we perform a $q$-time program broadcast of $F$. Also, instead of sending $\mathcal{O}_{\texttt{s}}(V)$ whenever a verification token is requested, we broadcast $V$. 

Our program broadcast protocol ensures that no adversary can obtain more than $q$ signatures regardless of the (polynomial) number of tokens distributed (Theorem \ref{broadcast security}). Security then follows in the same way as in the proof of Theorem \ref{sig token security}. 

Again similar to the encryption token setting, the private key needs to grow (linearly) with time. Moreover, sparsity and order restraints can be imposed in the same way.

\section{Proofs}

\subsection{Proof of Theorem \ref{Raz 2}}
\label{app:Raz 2}

\begin{proof}
For simplicity assume that the vectors $\{a_i\}_i$ are linearly independent. In the case they are dependent then $\Pr{[b'=b]}$ is smaller since $A$ is given less information regarding $M$. 

By Lemma \ref{Raz 1}, there are $ 2^{(n-m)}$ binary vectors ${v}$ such that ${v}\cdot {a}_i=(b_i)_j$ for any $i\in [m], j\in [\ell]$ and there are $2^{(n-m-1)}$ vectors which additionally satisfy ${v}\cdot {a}=b_j$. This means there are $2^{\ell(n-m)}$ matrices $M'$ which satisfy $M'\cdot a_i=b_i$ and $2^{\ell(n-m-1)}$ matrices which additionally satisfy $M'\cdot a=b$. The restriction $M'\cdot \hat{a}_1\neq \hat{b}_1$ reduces the number of matrices by at most $2^{\ell(n-m-1)}$, hence, all the restrictions $M'\cdot {\hat{a}}_i\neq \hat{b}_i$, $i\in [p]$ reduce the number of matrices by at most $p2^{\ell(n-m-1)}$. In summary, the total number of matrices which satisfy all the conditions is at least\begin{align*}
2^{\ell(n-m)}-p2^{\ell(n-m-1)}=2^{\ell(n-m)}\left(1-\frac{p}{2^{\ell}}\right) .\end{align*} 
The probability that $b'=b$ is equal to the probability $A$ guesses a matrix $M'$ such that $M'a=b$ which is less than:
\begin{align*}
  \Pr{[b'=b]}\leq \frac{2^{\ell(n-m-1)}}{2^{\ell(n-m)}\left(1-\frac{p}{2^{\ell}}\right)}=O(2^{-\ell}) .
\end{align*}
For part $(2)$, by the same reasoning as above, there are at most $2^{\ell(n-m-1)}$ matrices $M'$ that satisfy all the conditions as well as $y_r=M'\cdot {x}_r$. While there are at least $2^{\ell(n-m-1)+1}(1-\frac{p}{2^{\ell}})$ matrices $M'$ which satisfy all conditions as well as $y_r=M'\cdot {x}_{r\oplus 1}$ or $y_r=M'\cdot {x}_{r}$. Hence, the probability that $r'=r$ is equal to the probability $A$ guesses a matrix $\hat{M }$ such that $y_r=M'\cdot {x}_r$ which is less than:
\[
  \Pr{[r'=r]}\leq \frac{2^{\ell(n-m-1)}}{2^{\ell(n-m-1)+1}(1-\frac{p}{2^{\ell}})}\leq \frac{1}{2}+O(2^{-\ell})
  .
\]
\qed
\end{proof}

\subsection{Proof of Lemma \ref{lagrange}}
\label{app:lagrange}
\begin{proof}
Choose $a_{m},...,a_d\in \mathbb{F}$ and $b_m,...,b_d\in \mathbb{F}$ such that $\{a_i\}_{i\in [d+1]}$ are all distinct. For any set of pairs $(a_0,b_0),...,(a_d,b_d)$, there exists an unique polynomial $f$ of degree $d$ such that $f(a_i)=b_i$ for all $i\in [d+1]$. To show uniqueness, assume there exists two polynomials $f_1,f_2$ satisfying these conditions then there exists a non-zero polynomial of degree $\leq d$ with $d+1$ roots which is impossible. To show existence, consider the following polynomial. 
\begin{align*}
  f(x)=\sum_{i=0}^d b_i\prod_{j\neq i} \frac{x-a_j}{a_i-a_j} .
\end{align*}
Hence, the number of polynomials that satisfy $f(a_i)=b_i$ for $i\in [m]$ is $2^{\ell(d-m)}$ as a result of the possible values a polynomial can take on $a_{m},...,a_d$.
\qed 
\end{proof}

\subsection{Proof of Theorem \ref{lagrange 2}}
\label{app:lagrange 2}
\begin{proof}
For part $(1)$, by Lemma \ref{lagrange}, there are $ 2^{(d-m)\ell}$ polynomials $f$ such that $f(a_i)=b_i$ for all $i\in [m+1]$ and there are $2^{(d-m-1)\ell}$ polynomials which additionally satisfy $f(a)=b$. The restrictions $f( \hat{a}_i)= \hat{b}_i$ reduces the number of possible polynomials by at most $p2^{\ell(d-m-1)}$. 
The probability that $b'=b$ is equal to the probability $A$ guesses a polynomial $f'$ such that $f'(a)=b$ which is less than:
\[
  \Pr{[b'=b]}\leq \frac{2^{\ell(d-m-1)}}{2^{\ell(d-m)}\left(1-\frac{p}{2^{\ell}}\right)}=O(2^{-\ell}) .
\]
For part $(2)$, by the same reasoning as above, there are at most $2^{\ell(d-m-1)}$ polynomials $f'$ that satisfy all the conditions as well as $y_r=f'(x_r)$. While there are at least $2^{\ell(d-m-1)+1}(1-\frac{p}{2^{\ell}})$ polynomials $f'$ which satisfy all conditions as well as $y_r=f'(x_{r\oplus 1})$ or $y_r=f'(x_{r})$. Hence, the probability that $r'=r$ is equal to the probability $A$ guesses a polynomial $\hat{f }$ such that $y_r=f'(x_r)$ which is less than:
\begin{align*}
  \Pr{[r'=r]}\leq \frac{2^{\ell(d-m-1)}}{2^{\ell(d-m-1)+1}(1-\frac{p}{2^{\ell}})}\leq \frac{1}{2}+O(2^{-\ell})
  .
\end{align*}
\qed
\end{proof}

\end{document}